\documentclass[11pt]{article}

\usepackage[margin=1in]{geometry}
\usepackage{amsmath,amssymb}
\usepackage{mathtools}
\usepackage{bbm}
\usepackage{graphicx}
\usepackage{natbib}
\usepackage{enumitem}

\usepackage{amsthm}
\usepackage{aliascnt}
\usepackage[backref=page]{hyperref}
\usepackage[nameinlink,noabbrev,capitalise]{cleveref}

\usepackage{physics}
\usepackage{braket}
\usepackage[margin=1in]{geometry}
\usepackage{amsmath,amssymb,amsfonts,setspace,url}
\usepackage{bm}
\usepackage{comment}
\usepackage{algpseudocode}
\usepackage{latexsym}
\usepackage{subfig}
\usepackage{float}
\usepackage[only,shortleftarrow,shortrightarrow]{stmaryrd}
\usepackage{fancybox}
\usepackage{bigstrut,array,multirow}
\usepackage{here}
\usepackage{color}
\usepackage{url}
\usepackage{mathpazo}
\usepackage{extpfeil}
\usepackage{xcolor}
\usepackage{tikz}
\usetikzlibrary{arrows.meta}
\usepackage[disable]{todonotes}
\usepackage[bb=boondox]{mathalfa}

\newtheorem{theorem}{Theorem}[section]

\newaliascnt{corollary}{theorem}
\newtheorem{corollary}[corollary]{Corollary}
\aliascntresetthe{corollary}

\newaliascnt{proposition}{theorem}

\aliascntresetthe{proposition}

\newaliascnt{definition}{theorem}
\newtheorem{definition}[definition]{Definition}
\aliascntresetthe{definition}

\newaliascnt{fact}{theorem}
\newtheorem{fact}[fact]{Fact}
\aliascntresetthe{fact}

\newaliascnt{conjecture}{theorem}

\aliascntresetthe{conjecture}

\newaliascnt{remark}{theorem}

\aliascntresetthe{remark}

\newtheorem{problem}{Problem}

\newaliascnt{claim}{theorem}
\newtheorem{claim}[claim]{Claim}
\aliascntresetthe{claim}

\newaliascnt{observation}{theorem}
\newtheorem{observation}[observation]{Observation}
\aliascntresetthe{observation}

\newaliascnt{lemma}{theorem}
\newtheorem{lemma}[lemma]{Lemma}
\aliascntresetthe{lemma}

\crefname{theorem}{Theorem}{Theorems}
\crefname{corollary}{Corollary}{Corollaries}
\crefname{proposition}{Proposition}{Propositions}
\crefname{definition}{Definition}{Definitions}
\crefname{property}{Property}{Properties}
\crefname{fact}{Fact}{Facts}
\crefname{conjecture}{Conjecture}{Conjectures}
\crefname{remark}{Remark}{Remarks}
\crefname{problem}{Problem}{Problems}
\crefname{claim}{Claim}{Claims}
\crefname{open}{Open Problem}{Open Problems}
\crefname{observation}{Observation}{Observations}
\crefname{lemma}{Lemma}{Lemmas}

\Crefname{theorem}{Theorem}{Theorems}
\Crefname{corollary}{Corollary}{Corollaries}
\Crefname{proposition}{Proposition}{Propositions}
\Crefname{definition}{Definition}{Definitions}
\Crefname{property}{Property}{Properties}
\Crefname{fact}{Fact}{Facts}
\Crefname{conjecture}{Conjecture}{Conjectures}
\Crefname{remark}{Remark}{Remarks}
\Crefname{problem}{Problem}{Problems}
\Crefname{claim}{Claim}{Claims}
\Crefname{open}{Open Problem}{Open Problems}
\Crefname{observation}{Observation}{Observations}
\Crefname{lemma}{Lemma}{Lemmas}

\newcommand{\mingyang}[1]{{\color{orange} \footnotesize(Mingyang: #1)}}
\newcommand{\ananta}[1]{{\color{purple} \footnotesize(Ananta: #1)}}

\newcommand{\brakett}[2]{\left\langle#1\middle|#2\right\rangle}
\newcommand{\E}{\mathbb E}
\newcommand{\Prb}{\mathrm{Pr}}
\newcommand{\MBEH}{\mathsf{BiMBEH}}
\newcommand{\BMC}{\mathsf{BiMBCert}}
\newcommand{\Alg}{\mathsf{Alg}}
\newcommand{\rec}{\mathrm{rec}}
\newcommand{\inp}{\mathrm{inp}}
\newcommand{\loc}{\mathrm{loc}}
\newcommand{\blk}{\mathrm{blk}}
\newcommand{\wt}{\operatorname{wt}}
\newcommand{\Ran}{\operatorname{Ran}}
\newcommand{\spanop}{\operatorname{span}}
\newcommand{\ot}{\otimes}
\newcommand{\proj}[1]{|#1\rangle\!\langle #1|}
\newcommand{\dist}{\text{dist}}

\newcommand{\Id}{\mathbb{I}}

\newcommand{\perpp}{{\perp}}

\newcommand{\CalO}{\mathcal{O}}
\newcommand{\CalT}{\mathcal{T}}
\newcommand{\CalR}{\mathcal{R}}
\newcommand{\CalP}{\mathcal{P}}
\newcommand{\CalH}{\mathcal{H}}

\newcommand{\CalD}{\mathcal{D}}

\newcommand{\trianglelisting}{\textsl{TriangleListing}}
\newcommand{\spanner}{\textsl{SpannerConstruction}}
\newcommand{\rmif}{\textsl{RMIF}}
\newcommand{\bhvf}{\textsl{BHVF}}
\newcommand{\ind}{\mathbf{1}}
\newcommand{\Bin}{\operatorname{Bin}}
\title{Quantum Query Lower Bounds for Triangle-Listing and Spanners}
\author{
Yu Chen\thanks{National University of Singapore.
Email: \texttt{yu.chen@nus.edu.sg}.}
\qquad
Ananta Mukherjee\thanks{Centre for Quantum Technologies, National University of Singapore.
Email: \texttt{a.mukherjee@u.nus.edu}.}
\qquad
Mingyang Yang\thanks{National University of Singapore.
Email: \texttt{myangat@u.nus.edu}.}
}
\date{}

\begin{document}
\maketitle

\begin{abstract}
This paper gives quantum query lower bounds for two relational graph problems, triangle listing and explicit multiplicative spanner construction, in the general graph query model, where quantum adjacency, degree and neighborhood queries are all available in arbitrary superposition. Both results are obtained by reductions via intermediate multi-block search problems. We extend the quantum query recording framework by Zhandry (CRYPTO 2019) and Hamoudi and Magniez (ToCT 2023) to handle a recording architecture for bidirectional oracles and give a generic blockwise soundness framework, which for arbitrary families of local accepting projectors, gives an exact operator-norm characterization of their maximum overlap with the subspace of bounded weight records. Using the intermediate search problems, we exhibit a family of $n$-vertex graphs with $\Theta(n)$ triangles on which listing any constant fraction of the triangles requires $\Omega(n^{3/2-o(1)})$ quantum queries. This is the first nontrivial quantum lower bound for triangle listing, a question raised by Jiang and Peng (ICML 2026). Further, we show that, for every fixed $k\ge 7$, constructing a multiplicative $k$-spanner requires $\Omega(n^{1+\frac{1}{2\mu_k}})$ quantum queries, where $\mu_k=k/3+O(1)$. For $k\in\{7,8\}$ the bound is $\Omega(n^{5/4})$, which matches what would be implied by an unproven instance of the Erd\H{o}s girth conjecture, and for large $k$ the exponent $1+\frac{3}{2k}$ exceeds the $1+\frac{4}{3k}$ implied by the provable high girth
dense graphs due to Lazebnik, Ustimenko and Woldar, 1995.
\end{abstract}
\newpage
\tableofcontents
\newpage
\section{Introduction}
For a problem, where the input is accessible via quantum oracle queries, 
quantum query complexity asks what is the optimum number of oracle accesses required to the input before a valid answer can be produced. For graph problems, the answer depends strongly on how the graph is accessed. In the standard setting, the graph is stored in a black box that answers the following types of queries.
\begin{itemize}
    \item Adjacency queries: given two vertices $u,v \in V$, return whether there exists an edge $(u,v)\in E$.
    \item Degree queries: given a vertex $u \in V$, return the degree of $u$.
    \item Neighborhood queries: given a vertex $u \in V$ and an index $i$, return the $i$-th neighbor of $u$ (the ordering of the neighbors is arbitrary, but fixed).
\end{itemize}
Each interface exposes information of a different shape, and a lower bound proved against one of them need not survive when the others are also available. We call the model in which all three oracles can be used simultaneously the \emph{mixed model}, also known as the general graph query model. In the quantum version of the model, the oracles can be queried in superposition to combine global tests of chosen vertex pairs with local exploration of chosen neighborhoods, and a lower bound in this model must rule out every combination of those strategies.

The quantum query complexity of graph problems has a rich history. The seminal work of D\"urr, Heiligman, H\o yer and Mhalla \cite{DHHM06} determined the quantum query complexity of connectivity, strong connectivity, minimum spanning tree and single-source shortest paths, in the adjacency model and in the neighborhood model treated separately. Since then, quantum speedups have been developed for a long list of graph search tasks, including triangle finding \cite{findtriangle, sotafindtriangle}, maximum matching \cite{kimmel2021query, blikstad2022nearly}, and graph sparsification and Laplacian solving \cite{apers2022quantum}. 
\\
But, non trivial lower bounds results, especially in the mixed graph query model have been few and far between. The two standard techniques, the polynomial method \cite{beals2001quantum} and the adversary method \cite{Ambainis00}, are majorly convenient for decision problems. But, relational search and construction
problems, especially in the graph setting, present a different difficulty.  Their outputs are combinatorial
objects, there may be many valid answers, and different successful executions
may reveal different parts of the input.  A lower-bound argument must therefore
control an entire family of possible outputs rather than one
fixed accepting event.
\\The recording-query method offers a natural way to reason about such problems.
After a change of representation, information exposed by a quantum query is
written into an auxiliary record, while the algorithm itself remains fully
quantum and unrestricted.  The method was introduced by Zhandry~\cite{Zhandry19}and was later
developed as a lower-bound tool for multi-solution query problems by Hamoudi
and Magniez~\cite{HamoudiMagniez2023}.  A recording proof separates two questions.  How
quickly can a bounded number of queries create records containing information
about many input blocks?  And how much recorded information is necessary for a
successful output?
\\~\\
In this paper, first we extend the recording method to handle a bidirectional query where two coupled queries can access a shared block record. We then prove a generic soundness lemma which, for an arbitrary family of local accepting projectors, gives an exact characterization of their maximum overlap with the subspace of records of bounded weight. This results in recording-based lower bounds for multi-block search problems that help us to obtain  reductions from search and construction flavored graph problems to those intermediate search problems while preserving the full mixed interface: adjacency, degree and neighborhood queries available in arbitrary superposition. 
\\
Using the above architecture, 
in this paper we prove quantum query lower bounds for two fundamental relational graph problems. In the \textit{triangle listing} problem, studied in ~\cite{bjorklund2014listing, jiang2026quantum}, the task is to report the  triangles present in an $n$ vertex graph. No non trivial quantum query lower bound was known for this problem~\cite{jiang2026quantum}. We showed a quantum query lower bound of $\Omega(n^{3/2-o(1)})$ for an algorithm that lists at least a constant fraction of the triangles when the number of triangles are linear in n. In the \textit{multiplicative $k$-spanner construction} problem, the algorithm must output actual edges of the unknown graph forming a spanning subgraph that stretches every pairwise distance by a factor of at most $k$. By provable instances of the girth conjecture of Erd\H{o}s~\cite{erdos1964extremal}, trivial lower bounds were known for $k\leq 6$ and $k=9,10$. For generic $k$, provable instances of high girth dense graphs by \cite{LUW95} implies a lower bound of $\Omega(n^{1+4/3k+O(1/k^2)})$. In this paper, 
for generic $k$, we showed a quantum query lower bound of $\Omega(n^{1+3/2k+O(1/k^2)})$ and a lower bound of $\Omega(n^{5/4})$ for 7 and 8 stretch spanners which are particularly interesting as they match the conjectured lower bound by an unproven instance of the girth conjecture.
\\
We describe below the main results along with the context laid by relevant prior works.
\subsection{Our Results and Related Works}
\paragraph{Multi-Block Search with Bidirectional Oracles.} Both graph applications rest on quantum query lower bounds for two search problems over block-structured inputs, defined in \Cref{sec:two-problems}. The main search problem underlying our recording development is the Bidirectional Hidden-Value Finding problem ($(B,N,d)$-$\bhvf$, \Cref{subsec:BHVF}), which is used for the $k-$spanner reduction. Here, each block hides a uniformly random pair $(a,b)\in[N]\times[N]$, and the block can be probed from either side: probing position $x$ in the forward direction reveals $b$ if $x=a$ and nothing otherwise, and symmetrically for the backward direction. The task is to recover the full hidden pair for a $d$ fraction of the blocks. The triangle-listing reduction
uses the simpler one-sided \emph{Restricted Marked-Item Finding} problem
$(B,N,d)$-$\rmif$ where the input consists of $B$ blocks. Each block independently hides one uniformly random marked position among $N$, accessible through a membership oracle. The task is to return the marked positions of a $d$ fraction of the blocks. 

\begin{theorem}[Informal version of \Cref{thm:bhvf}, \Cref{thm:rmif}]\label{thm:informal-search}
For every constant $d\in(0,1]$, solving 
$(B,N,d)$-$\bhvf$ or $(B,N,d)$-$\rmif$ with success probability at least $2/3$ requires $\Omega(B\sqrt{N})$ quantum queries.
\end{theorem}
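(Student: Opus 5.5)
We prove the bound with the recording-query method of \cite{Zhandry19,HamoudiMagniez2023}, extended to a bidirectional oracle. For each block $j\in[B]$ we purify the hidden value: for $\bhvf$ the register $\ket{\psi_j}=\frac1N\sum_{a,b}\ket{a,b}$ holds the pair, and for $\rmif$ the register $\frac{1}{\sqrt N}\sum_a\ket a$ holds the marked position. We then pass to a Fourier/recording basis in which the ``unrecorded'' state $\ket{\bot}$ is the uniform superposition. The key structural fact is that each oracle call, forward or backward (membership for $\rmif$), acts on at most one block register. On that block it moves amplitude out of $\ket\bot$ by at most $O(1/\sqrt N)$ in operator norm, once restricted to the part that changes the record weight.

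For $\bhvf$, a forward probe at $x$ interacts with the pair only through the indicator $[a=x]$, and a backward probe only through $[b=y]$. The single-block recording operator therefore has a shared record space, in which one of $a$ or $b$ is ``pinned'', and both directional queries act on it. We show that the weight-increasing component of either query has norm $O(1/\sqrt N)$. We also show that a block becomes fully known, with both $a$ and $b$ pinned, only through a recorded transition from the one-coordinate-pinned state.

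The progress measure is $\Pi_{\ge m}$, the projector onto records with at least $m=dB/2$ blocks in the ``solved'' record state. We prove by induction on the number of queries $t$ that
\[
\|\Pi_{\ge m}\ket{\phi_t}\|\le \binom{t}{m}\Big(\frac{c}{\sqrt N}\Big)^{m}.
\]
A query raises the weight by at most one, in a single block, and each raise costs a factor $c/\sqrt N$. Standard estimates then make the right-hand side $\le(ect/(m\sqrt N))^m$, which is $o(1)$ unless $t=\Omega(m\sqrt N)=\Omega(B\sqrt N)$.

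The second ingredient is soundness via the blockwise framework. The output is accepted when, for at least $dB$ blocks, the output value matches the hidden one; this is a sum of products of local accepting projectors $P_j$. We show that for low-weight records, with at most $m$ solved blocks, every block outside the solved set is near-uniform in its $\ket\bot$ component. Its overlap with $P_j$ is then $\|P_j\ket\bot\|^2\le 1/N$ for $\rmif$, and $1/N^2$ for a fully guessed $\bhvf$ pair (or $O(1/N)$ for a half-recorded block). The exact operator-norm characterization bounds the total acceptance of the low-weight part by roughly $\binom{B}{dB/2}N^{-dB/2}$, which is negligible. Combining the two parts gives success $\le o(1)+\|\Pi_{\ge m}\ket{\phi_T}\|^2<2/3$ for $T=o(B\sqrt N)$. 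For constant $d$, this yields the $\Omega(B\sqrt N)$ bound.

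The main obstacle is the bidirectional recording step. Forward and backward queries touch the same pair register in different bases, so the two recording operators need not commute. We also need the per-query weight increase to be bounded by $O(1/\sqrt N)$ while a block can be partly learned from one side. Our first attempt is a three-level record per block, $\bot$, one side known, and both known, with an explicit computation of each query's matrix on this $N^2$-dimensional space. From that computation we derive the norm bounds and show that the level can rise by at most one per query.
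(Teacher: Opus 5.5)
\textbf{There is a genuine gap in your progress step for BHVF.} As written, your induction runs on the number of \emph{solved} blocks (both coordinates recorded), and you claim that each unit increase of this count costs a factor $c/\sqrt N$. For this weight that claim is false, because the step from a half-recorded block to a solved block has constant amplitude.

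Concretely, take a block in the recording basis state $\ket{x,\perp}$. Then $(S\otimes S)\ket{x,\perp}=S\ket{x}\otimes\ket{\tilde u}$. A forward query at $x$ with $u\neq 0$ maps $\ket{x}\ket{\tilde u}$ to $\ket{x}\otimes\frac{1}{\sqrt N}\sum_j\omega^{u(j+1)}\ket{j}$, and the second factor has overlap only $-1/N$ with $\ket{\tilde u}$. Undoing $S\otimes S$ therefore leaves a both-recorded component of norm $1-O(1/N)$. This simply reflects the problem structure: once $a$ is known, one forward query at $a$ returns $b$ with certainty.

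Your claim that a block becomes solved only through the half-recorded level also fails. From $\ket{\perp\perp}$, a single query already puts amplitude about $1/\sqrt N$ directly on both-recorded states. So the recurrence $\Delta_{t+1,m+1}\le\Delta_{t,m+1}+\frac{c}{\sqrt N}\Delta_{t,m}$ cannot be derived for the solved count by a per-query argument, and the induction as you set it up does not go through.

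\textbf{The repair is to count non-vacuum blocks}, meaning any block not in $\ket{\perp\perp}$; this is what the paper does. Leaving the vacuum is then the only costly transition. A single overlap computation, identical for both directions, gives $\langle\tilde u\tilde u|\mathcal{O}^\tau_{x,u}|\tilde u\tilde u\rangle=1-\frac1N-\frac1{N^2}$ for $u\neq 0$. Hence $\|\Pi_{\ge k+1}\mathcal{R}\,\Pi_{=k,\perp}\|\le 2/\sqrt N$ and $\Delta_{t,k}\le\binom{t}{k}(2/\sqrt N)^k$.

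Your solved-count bound then follows by domination. It is cleaner, though, to put the soundness threshold on the non-vacuum count as well. Half-recorded blocks are then simply conceded, and only vacuum blocks are charged, each with overlap exactly $|\langle\tilde u\tilde u|g^A,g^B\rangle|^2=1/N^2$. The rank-one exact lemma then applies verbatim, with no $O(1/N)$ half-recorded case and no need to generalize it to a higher-rank ``unsolved'' subspace.

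The obstacle you anticipate also does not arise. Both directional oracles are phase oracles diagonal in $\ket{a,b}$, so they commute and are recorded by the same $S\otimes S$; no three-level matrix computation is needed.

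Finally, combine the two parts at the amplitude level rather than adding probabilities: $\sqrt{\Pr[\mathrm{succ}]}\le\|\Pi_{succ}\mathcal{T}\Pi_{<m}\ket{\phi_T}\|+\|\Pi_{\ge m}\ket{\phi_T}\|$. Use that $\Pi_{succ}$ is block-diagonal in the output $\eta$, so the first term is bounded by a maximum over $\eta$, not a sum.
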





\paragraph{Two Contributions to the Recording Method.} \Cref{thm:informal-search} is proved with the recording method, introduced by Zhandry \cite{Zhandry19} and developed into a general tool for search lower bounds by Hamoudi and Magniez \cite{HamoudiMagniez2023}. Recording techniques have also appeared in 
\cite{belovs2026tight, Gilani2026Quantum, carolan-compressed} for combinatorial and graph problems.
Our proof contributes two ingredients that we believe are of independent interest.
\begin{enumerate}[label=(\arabic*)]
     \item \emph{A recording architecture for bidirectional oracles.} The two query directions of a $BHVF$ block are not two unrelated search oracles. They describe the same hidden pair, and a successful probe in either direction reveals the value hidden in the other one. We build a block-level recording representation in which both directions read and write one shared record per block, and the change of representation remains exact after any number of queries. The algorithm stays unrestricted: it may choose the block, the direction and the probed position in superposition, with arbitrary computation between queries.
    \item \emph{A generic soundness lemma.} Every recording-based lower bound must at some point show that success forces a heavy record. We isolate this step in \Cref{lem:technical lemma 1} and prove it as an exact operator identity, not an inequality. The lemma takes an arbitrary test in each block, given by a projector of arbitrary rank, and computes the largest overlap between the joint accepting subspace and any subspace of low record weight. The answer is a classical lower-tail probability for independent per-block coins. Because it is an operator-norm statement, it automatically accounts for arbitrary superpositions of, and entanglement across, the blocks. The lemma separates the problem-specific work of identifying the accepting subspace from the generic claim that a light record cannot pass many tests, and it can be applied off the shelf in other recording lower bounds.
\end{enumerate}
The details are discussed in \Cref{sec:BHVF-lb-proof} and \Cref{sec:appnd:RMIF-lb}.
\paragraph{Triangle Listing.}
Triangle listing is a well studied subgraph problem, studied classically since the 1970s \cite{itai1978finding, chiba1985arboricity, alon1997finding}. The fastest known classical listing algorithms are due to Bj\"orklund, Pagh, Vassilevska Williams and Zwick \cite{bjorklund2014listing}, and their running times are conditionally optimal under the 3SUM conjecture \cite{patrascu2010towards}. Fine-grained connections to all-pairs shortest paths appeared in \cite{williams2020monochromatic}. On the quantum side, work has concentrated on the decision version, triangle \emph{finding}, which has served for two decades as a benchmark for quantum query algorithms. Search-based algorithms \cite{buhrman2001quantumfDurr} were followed by quantum walks \cite{szegedy2004quantum, magniez2007search, findtriangle}, span programs and learning graphs \cite{belovs2012constant, jeffery2013nested, lee2013improved}, leading to Le Gall's $\tilde{O}(n^{5/4})$ bound \cite{sotafindtriangle}, with refinements for sparse graphs \cite{legall2017sparse} and in logarithmic factors \cite{carette2020extended}. Hamoudi and Magniez~\cite{HM19Chebyshev} studied the related problem of triangle counting and gave an optimal
quantum algorithm using $\widetilde{O}\!\left(\sqrt{n}/t^{1/6}+m^{3/4}/\sqrt{t}\right)$ queries, via their quantum Chebyshev inequality.
Their task is estimation of the number of triangles rather than recovery of the triangles themselves. Very recently, Jiang and Peng \cite{jiang2026quantum} initiated the study of quantum algorithms for triangle listing. They list all $t$ triangles of a graph with $n$ vertices and $m$ edges in time $\tilde{O}\big(\min(n^{5/4}t^{7/12}+n^{7/6}t^{7/9},\; m+m^{3/4}t^{1/2},\; n^{3/2}t^{1/2})\big)$, and they use listing as a primitive for triangle cut sparsification and triangle-based clustering. 
\\On the lower bound side, results have been few and far between. For triangle finding, the best known quantum bound is $\Omega(n)$ \cite{belovs2014power}. For triangle listing, no nontrivial quantum lower bound was known at all, and \cite{jiang2026quantum} raise the question explicitly. Our first application is a progress towards answering this question.

\begin{theorem}[Informal version of \Cref{thm:triangle listing main}]\label{thm:informal-triangle}
There is a family of $n$-vertex graphs, each containing $\Theta(n)$ triangles, on which listing any constant fraction of the triangles with success probability at least $2/3$ requires
$\Omega\left(\frac{n^{3/2}}{2^{O(\sqrt{\log n})}}\right)=\Omega\left(n^{3/2-o(1)}\right)$
quantum queries in the mixed model.
\end{theorem}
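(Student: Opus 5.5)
We prove the statement by reduction from $(B,N,d)$-$\rmif$, using \Cref{thm:informal-search} (formally \Cref{thm:rmif}). We take $B$ and $N$ with $B\sqrt N = n^{3/2}/2^{O(\sqrt{\log n})}$. The natural choice is $B=\Theta(n/2^{O(\sqrt{\log n})})$ blocks and $N=\Theta(n)$ positions per block.

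\textbf{Step 1: embed the instance.} The factor $2^{O(\sqrt{\log n})}$ suggests a Behrend-type (Ruzsa--Szemer\'edi) construction. We fix a tripartite base graph on parts $X,Y,Z$ of size $\Theta(n)$ that is a union of $M=n^2/2^{O(\sqrt{\log n})}$ edge-disjoint triangles and contains no other triangles. We then let each RMIF block control one candidate triangle structure: block $j$ is associated with a vertex $x_j\in X$ and a set of $N$ candidate pairs, and its marked position $a_j$ selects which edge is present. Concretely, we plant exactly one triangle $T_{j,a_j}$ per block and keep a sparse, fixed, input-independent triangle-free background. The graph then has $\Theta(B)$ triangles. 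Padding with $\Theta(n)$ further fixed triangles on fresh vertices brings the total to $\Theta(n)$ while keeping a constant fraction of triangles hidden, as long as $B=\Theta(n/2^{O(\sqrt{\log n})})$. If instead we want $B=\Theta(n)$ hidden triangles, we put the $2^{O(\sqrt{\log n})}$ loss into $N$.

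\textbf{Step 2: simulate queries.} We must show that every adjacency, degree and neighborhood query can be answered with $O(1)$ queries to the RMIF membership oracles, and that this holds coherently. For adjacency, a pair $(u,v)$ is either fixed background or corresponds to a unique (block, position), so one membership query answers it. For degrees, we design the embedding so that every vertex has the same degree whatever the marked positions are. To do this, each block swaps one edge for another: the marked position turns on edge $e_{j,a}$ and turns off a paired edge $e'_{j,a}$ so that degrees are preserved. Degree queries then need no oracle access. For neighborhood queries, the $i$-th neighbor ordering must be a fixed function of $i$ and a single membership bit. Slot $i$ of $u$ is either a fixed neighbor or a two-candidate slot ``$e_{j,a}$ if $a=a_j$, else $e'_{j,a}$'', so one query suffices. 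Each simulation is a reversible classical map using $O(1)$ oracle calls, so it lifts to superposition queries.

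\textbf{Step 3: decode.} Any correct listing of a constant fraction of all triangles contains a constant fraction of the planted $T_{j,a_j}$, and each of these reveals $a_j$. The RMIF solver therefore succeeds with $d$ constant. Theorem \Cref{thm:rmif} then gives $\Omega(B\sqrt N)$ graph queries, which equals the claimed bound.

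\textbf{Main obstacle.} The hard part is the degree-preserving, neighborhood-simulatable embedding that also creates no spurious triangles. The swapped-out edges $e'_{j,a}$, together with the background, must not close triangles for any choice of marked positions. This is where the Ruzsa--Szemer\'edi induced-matching structure enters, and where the $2^{O(\sqrt{\log n})}$ loss is paid. My first attempt is to realise each block as a bipartite gadget between $x_j$'s neighborhood and a set $Z$. In this gadget the edges from $Z$ are arranged as a Behrend 3-AP-free pattern, so the only triangles through $x_j$ are those formed by the marked edge. The swapped-out edges should go to disjoint dummy vertices whose neighborhoods are independent sets.
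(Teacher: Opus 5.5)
Your overall route matches the paper's: reduce from \rmif, make each block an induced matching of a Ruzsa--Szemer\'edi graph, plant one triangle per block, keep degrees input-independent, simulate each query with $O(1)$ oracle calls, and decode. Your ``link of $x_j$'' picture is exactly the paper's construction. The $Y$--$Z$ graph is the bipartite RS graph $G_{RS}$, the link of $x_j$ is the induced matching $E_i$, and $x_j$ is the hub $w_i$ adjacent to all $2N$ endpoints of $E_i$. However, the proposal stops where the proof's actual content begins, and two of your concrete claims are wrong.

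First, your ``natural'' parameter choice fails. Take $B=n/2^{O(\sqrt{\log n})}$ hidden triangles plus $\Theta(n)$ public padding triangles. The hidden ones are then only a $2^{-O(\sqrt{\log n})}=o(1)$ fraction, so an algorithm lists a constant fraction of all triangles with zero queries by printing the public ones. Without the padding, the graph has $o(n)$ triangles instead of $\Theta(n)$. Only your fallback works: $B=\Theta(n)$ blocks, one per $x_j$, with $N=n/2^{O(\sqrt{\log n})}$ positions per block. This is what the paper uses, with $B=n/3$.

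Second, and more seriously, the degree-balancing gadget is missing, and the mechanism you sketch cannot work. Swapping $e_{j,a}=(y,z)$ for one paired edge $e'_{j,a}$ cannot preserve degrees. Turning on $(y,z)$ raises two degrees, and the only single edge incident to both $y$ and $z$ is $(y,z)$ itself. You therefore need compensating edges $(y,g)$ and $(z,g')$, and then the degrees of $g,g'$ must also stay fixed. If these are ``disjoint dummy vertices'' per position, you create $\Theta(BN)=n^{2-o(1)}$ vertices. That destroys the bound as a function of the vertex count, and each dummy's degree still depends on whether its position is marked.

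The missing idea is a single complementary hub $w'_i$ per block. It is adjacent to every endpoint of $E_i$ except the two endpoints of the marked edge. Because RMIF has exactly one marked position per block, $w'_i$ always misses exactly two neighbors. These are compensated by two fixed pendant leaves $s^x_i,s^y_i$, placed in the vacated neighbor slots. For a regular vertex, the neighbor slot belonging to its $E_i$-edge holds its partner if that edge is marked and $w'_i$ otherwise. This yields:
\begin{itemize}
\item all degrees are public, so degree queries are free;
\item every neighbor slot depends on one membership bit, so one oracle call answers it;
\item the graph has $2n+4B=O(n)$ vertices.
\end{itemize}
For the triangle count, the inducedness of $E_i$, the bipartiteness of $G_{RS}$, the absence of gadget--gadget edges, and the degree-one leaves together show that $w_i$ closes exactly one triangle and $w'_i$ closes none. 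Without this gadget or an equivalent one, your Step~2 simulation claims and Step~3 triangle count are unverified. Your ``Behrend 3-AP-free pattern'' sketch also does not address triangles through the compensating vertices.
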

The technical details are given in \Cref{sec:triangle-listing-reduction}.
On graphs of the relevant density ($n^{2-o(1)}$ edges with $t=\Theta(n)$ triangles), the best known upper bound is $\tilde{O}(n^{5/4}t^{7/12}+n^{7/6} t^{7/9})=\tilde{O}(n^{35/18})$ \cite{jiang2026quantum}; closing the gap between $\tilde{O}(n^{35/18})$ and $\Omega(n^{3/2-o(1)})$ remains an interesting open problem.

\paragraph{Multiplicative $k$-Spanner Construction.} Given an integer $k\geq 1$, a multiplicative $k$-spanner of a graph $G$ is a spanning subgraph $H$ with $\dist_H(u,v)\le k\cdot \dist_G(u,v)$ for every pair of vertices $u,v$. The construction task is explicit: the algorithm must output actual edges of the unknown graph. Spanners were introduced by Peleg and Sch\"affer \cite{peleg1989graph}, having appeared implicitly in the study of network synchronizers \cite{awerbuch1985complexity, peleg1989optimal}. Every graph admits a $(2k-1)$-spanner with $O(n^{1+1/k})$ edges via the greedy algorithm \cite{althofer1993sparse}, such spanners can be computed in near-linear time \cite{baswana2007simple}, and they underlie approximate distance oracles \cite{thorup2005approximate} and many further applications surveyed in \cite{ahmed2020graph}. Quantum algorithms enter this picture through the work of Apers and de Wolf \cite{apers2022quantum}: their quantum algorithms for cut approximation and Laplacian solving are powered by a quantum spanner-construction subroutine running in time $\tilde{O}(\sqrt{mn})$, which is sublinear in the number of edges $m$ for dense graphs. This raises a basic question: how many quantum queries are \emph{necessary} to construct a $k$-spanner?

To the best of our knowledge, before this work, the only known route to a lower bound went through the length of the output. If every cycle of $G$ is strictly longer than $k+1$, then no $k$-spanner of $G$ can omit any edge (\Cref{lem:mandatory-edge-criterion}). A dense graph of girth at least $k+2$ therefore forces a long answer, and the question becomes extremal: how many edges can a graph of high girth have? The girth conjecture of Erd\H{o}s \cite{erdos1964extremal} asserts that there are $n$-vertex graphs of girth $2\kappa+2$ with $\Omega(n^{1+1/\kappa})$ edges, which would be tight \cite{bondy1974cycles}. The conjecture is proven exactly for $\kappa\in\{1,2,3,5\}$ \cite{erdos1966problem, brown1966graphs, benson1966minimal, wenger1991extremal} and remains unproven everywhere else including $\kappa=4$. Translated to the stretch parameter, the proven cases are $k\in\{1,2\}:\Omega(n^2)$, $k\in\{3,4\}=\Omega(n^{3/2})$, $k\in\{5,6\}:\Omega(n^{4/3})$ and $k\in\{9,10\}:\Omega(n^{6/5})$. The pair $k\in\{7,8\}$ is the first missing case, and the conjectured $k$-spanner lower bound of $\Omega(n^{1+2/k})$ or $\Omega(n^{1+2/(k+1)})$ for even and odd k respectively is open for every stretch beyond 10. Unconditionally, for generic $k$, the densest known high-girth graphs are those of Lazebnik, Ustimenko and Woldar \cite{LUW95} (see also \cite{furedi2013history}). Through the mandatory-edge route, they give a lower bound of $\Omega(n^{6/5})$ for $k\in\{7,8\}$, and a lower bound of $\Omega(n^{1+\frac{4}{3k}+O(1/k^2)})$ for large $k$. Our second application takes a different route and proves a stronger bound.
\begin{theorem}[Informal version of \Cref{thm:generic-lower-bound}]\label{thm:informal-spanner}
For every fixed integer $k\ge 7$, constructing an explicit multiplicative $k$-spanner with success probability at least $2/3$ requires
$\Omega\left(n^{1+\frac{1}{2\mu_k}}\right)$
quantum queries in the mixed model, where
$\mu_k=\left\lfloor\frac{k+1}{3}\right\rfloor-\ind[k\equiv 2 \pmod 6]=\frac{k}{3}+O(1)$. In particular, the bound is $\Omega(n^{5/4})$ for $k\in\{7,8\}$, and $\Omega\left(n^{1+\frac{3}{2k}+O(1/k^2)}\right)$ for large $k$.
\end{theorem}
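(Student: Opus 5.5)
We will derive the bound by a query-preserving reduction from $(B,N,d)$-$\bhvf$ to explicit $k$-spanner construction, and then apply \Cref{thm:informal-search}. The hard instances are built from a fixed ``skeleton'' graph with few vertices, in which a collection of gadget edges is chosen so that each gadget edge is forced by the girth criterion (\Cref{lem:mandatory-edge-criterion}). We then hide each gadget behind a $\bhvf$ block. Concretely, each block $i$ corresponds to a pair of vertex sets $(L_i,R_i)$ of size $N$. The hidden pair $(a_i,b_i)$ determines one crossing edge $\{\ell_{a_i},r_{b_i}\}$, or a short path segment between designated vertices of $L_i$ and $R_i$. The remaining structure is chosen so that every other cycle through this edge has length at least $k+2$. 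Then any valid $k$-spanner must contain the hidden edge, and outputting it reveals $(a_i,b_i)$. An algorithm that outputs a $k$-spanner thus solves $\bhvf$ for all blocks, in particular for any constant fraction $d$.

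\textbf{Simulation step.} We must show that each adjacency, degree and neighborhood query to the graph can be answered with $O(1)$ $\bhvf$ queries, coherently. Degrees are fixed by the construction: the hidden edge replaces a fixed default edge, so degrees are independent of $(a,b)$. An adjacency query $(\ell_x,r_y)$ is a forward probe at $x$ followed by a comparison with $y$. A neighborhood query at $\ell_x$, asking for the slot reserved for the hidden edge, is a forward probe returning $b$ if $x=a$ and a default answer otherwise. The symmetric query at $r_y$ is a backward probe. This is exactly why the bidirectional oracle is needed: each side of the hidden edge can be explored by neighborhood queries. Everything else is computed from public structure. Since these are unitary and computed in superposition with uncomputation, $T$ graph queries give $O(T)$ $\bhvf$ queries.

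\textbf{Parameter count.} The key quantitative step is to fit $B$ blocks into $n$ vertices while preserving the mandatory-edge property with stretch $k$. The number of vertices per level of the gadget should be about $N$. The exponent $\mu_k=\lfloor (k+1)/3\rfloor-\ind[k\equiv 2\pmod 6]$ suggests the following layout. We use a layered graph whose public part has girth well above $k$ and depth about $\mu_k$. We need $B\approx n^{1+1/\mu_k}/N$ blocks with $N\approx n^{1/\mu_k}$, so that $B\sqrt N\approx n\cdot n^{1/(2\mu_k)}$. The idea is that a hidden edge whose endpoints are chosen among $N$ candidates needs only about a third of the girth to be ``certified'' by public paths. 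The remaining two thirds of the cycle length comes from the hidden endpoints being free. Hence the girth requirement $k+2$ translates into public structure of depth $\mu_k\approx k/3$, and one can use provable dense high-girth-like layered graphs of that depth. For example, we would take a $\mu_k$-level tree-like or generalized-polygon-based structure with degree $n^{1/\mu_k}$, replicated so that the $B$ blocks are disjoint. The choice of $N$ and $B$ then gives $\Omega(B\sqrt N)=\Omega(n^{1+1/(2\mu_k)})$. As a sanity check, $\mu_7=\mu_8=2$ gives $n^{5/4}$.

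\textbf{Main obstacle.} The hard part is the combinatorial construction itself. It must achieve three things simultaneously: (i) every possible hidden edge $\{\ell_a,r_b\}$, for every realization of all blocks jointly, lies on no cycle of length at most $k+1$, including cycles that pass through several hidden edges of different blocks; (ii) degrees and neighbor orderings do not leak $(a,b)$; and (iii) the vertex budget matches $\mu_k$ exactly, including the $k\equiv 2\pmod 6$ correction. For this we would first try a ``subdivided'' construction. Each hidden edge is replaced by a path whose two end-edges are hidden, with public vertices placed at distance about $k/3$ from each hidden endpoint via a $\mu_k$-level expansion. We would then bound the shortest cycle by case analysis on how many hidden edges it uses, and adjust the parity-dependent lengths to obtain the floor and indicator terms.
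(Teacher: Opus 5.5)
Your outer reduction matches the paper's. It uses $B=\Theta(n)$ blocks and $N=\Theta(n^{1/\mu_k})$, with forward and backward probes answering the neighbor slot at the two endpoints of a hidden edge, and a default partner that keeps degrees fixed. (One small slip: that partner's own neighbor list also depends on the hidden value, so it needs a probe too, as with the paper's gadget nodes.) However, the combinatorial core that produces $\mu_k$ is absent, and the target you set for it is unattainable.

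The blocks cannot be disjoint: that needs $n\ge 2BN$, and then $B\sqrt N\le n/\sqrt N$. So a typical vertex must lie in the endpoint sets of $\Theta(N)$ blocks. Once endpoint sets are shared like this, your requirement (i) fails, namely that every hidden edge be forced in \emph{every} joint realization. To see this, choose all hidden endpoints inside a common hitting set of the $2B$ endpoint sets; such a set exists with size $O(n\log n/N)$. The $\Theta(n)$ hidden edges are distinct, since an output edge must identify its block. They now live on $O(n^{1-1/\mu_k}\log n)$ vertices. Since $\mu_k\le\lfloor (k+1)/2\rfloor$, the Moore bound for girth $k+2$ shows that all but $o(n)$ of them lie on cycles of hidden edges of length at most $k+1$, so they are not mandatory. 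In the paper's own layout, setting every hidden value to $0$ already turns the hidden edges into a copy of the dense skeleton.

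Worst-case forcing therefore caps $N$ at roughly $n^{1/(\lfloor (k+1)/2\rfloor+1)}$ up to logarithmic factors, far below $n^{1/\mu_k}$. Building the public part from high-girth layered graphs or generalized polygons leads straight back to the girth-conjecture barrier the theorem is meant to bypass.

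The missing idea is that forcing only has to hold \emph{on average}. \Cref{thm:bhvf} is a bound for uniformly random hidden pairs, and it tolerates any constant fraction $d$ of blocks.

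The paper's construction is as follows. It takes a random bipartite skeleton with $S=N^{\mu_k-1}$ vertices per side and edge probability $p=cN/S$, where $c=0.01/k$. Each skeleton vertex becomes a cluster of $N$ coordinate nodes joined to a leader, so clusters have diameter $2$. Each skeleton edge carries one hidden edge plus gadget and dummy nodes, and the resulting graphs have girth $4$.

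The analysis tracks replacement paths. A replacement path crossing $h$ hidden edges ($h$ odd, $h\ge 3$) has length $h+2j$, where $j\le h+1$ counts junctions at which consecutive hidden edges mismatch. The remaining $h+1-j$ junctions must match, each independently with probability $1/N$. The union bound over skeleton paths and match patterns gives at most $S^{h-1}\binom{h+1}{j}p^{h}N^{j-h-1}\le 2^{h+1}c^{h}N^{j-\mu_k}$. This is small exactly when $j\le\mu_k$. That is why $\mu_k$ is defined as the largest $j$ with $h+2j\le k$ and $j\le h+1$ for some odd $h\ge 3$ (\Cref{def:mismatch-budget}), which is also where the floor and the $k\equiv 2\pmod 6$ correction come from.

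Averaging over skeletons and applying Markov (\Cref{lem:exist-good-skeleton-graph}) fixes a skeleton for which, with probability at least $39/40$, at least $B/5$ hidden edges are essential. That suffices to solve $(B,N,1/5)$-$\bhvf$. Without this mismatch-versus-detour accounting, your plan has no argument for the exponent $1+\frac{1}{2\mu_k}$.
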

The technical details are given in \Cref{sec:k-spanner-reduction}.
Three remarks are due on \Cref{thm:informal-spanner}. First, for $k\in\{7,8\}$ the theorem gives, unconditionally, the $n^{5/4}$ scale that the open girth 10 case of the Erd\H{o}s conjecture would only predict through output length. It thus fills the first gap left by the unproven instances of the girth conjecture, and it does so without constructing any high-girth graph. Second, for large $k$ the exponent $1+\frac{3}{2k}$ is polynomially larger than the exponent $1+\frac{4}{3k}$ available from the graphs of \cite{LUW95}. Third, the hard instances behind the theorem have girth four, and each of them, in fact, admits a $3$-spanner with $O(n)$ edges. The bound is therefore not driven by a long answer, and the proof neither relies on nor produces dense graphs of high girth. The obstruction is local: a constant fraction of the hidden edges have no replacement path of length at most $k$, even though the graph is full of short cycles and always has a sparse valid spanner. What the approach measures is genuinely the query cost of \emph{locating} a valid spanner.

\subsection{Technical Overviews}
\subsubsection{Quantum Query Lower Bounds}
\paragraph{The setup.}
The technical heart of the quantum lower bounds is \Cref{sec:BHVF-lb-proof}, which proves the query lower bound for the Bidirectional Hidden-Value Finding (BHVF) problem defined in \Cref{subsec:BHVF}. An instance consists of $B$ independent blocks. Each block $t\in [B]$ hides a pair of values $(r^A_t,r^B_t)$ sampled uniformly from the support $[N]\times [N]$. Block $t$ is accessed through the two functions
\(
A^r_t(x)=(r^B_t+1)\cdot \ind[x=r^A_t],
\:\:
B^r_t(x)=(r^A_t+1)\cdot \ind[x=r^B_t],
\)
where the additional shift by 1 is considered for technical reasons.
 The task is to output the full hidden pair for at least a $d$ fraction of the blocks, for a given constant $d$. A quantum algorithm may query all blocks inside one superposed query, build up correlations across blocks, and defer every decision to a single final measurement. The companion bound for the simpler RMIF problem (\Cref{thm:rmif}) follows from the same template. Essentially, we get \Cref{thm:informal-search}.


\paragraph{The recording framework.}
Classically, a recording argument keeps a log of the positions the algorithm has probed, and argues that a short log cannot pin down many blocks. Quantumly, queries are made in superposition and no such log exists. Following the compressed-oracle framework of Zhandry \cite{Zhandry19} and its extension by Hamoudi and Magniez \cite{HamoudiMagniez2023}, we instead purify the input: the algorithm runs against a uniform superposition of all instances, and the input register itself becomes the log. 
The first two subsections of \Cref{sec:BHVF-lb-proof} set this up for our problem. Essentially, the joint initial state is $\ket{0}_{\Alg}\otimes(\ket{\tilde u}\,\ket{\tilde u})^{\otimes B}$ with $\ket{\tilde u}=\frac1{\sqrt N}\sum_{i\in[N]}\ket i$,
Every input coordinate is granted one extra symbol $\perp$, read as ``nothing recorded yet''. We call this enlarged space the recording space. The change of viewpoint is a single fixed local unitary: $S$ which exchanges $\ket\perpp\leftrightarrow\ket{\tilde u}$ and fixes the orthogonal complement of their span. Conjugating the query oracle by this unitary produces the recording oracle. 
\\
First we note that, both query directions of a block act on the same pair of recorded coordinates, so the record of a block collects everything the algorithm has learned about it from either side. Second, and crucially, nothing is lost in this change of basis: \Cref{thm:relation-phi-psi} argues that after any number of queries the true joint state equals the recorded joint state up to this fixed unitary. Every statement about the real execution can therefore be proved on the recorded one, whose basis states carry a transparent meaning: a block is \emph{vacuum} when both of its recorded coordinates are $\perp$ (i.e., the block is $\ket{\perp\perp}$), and \emph{nonvacuum} otherwise. We call the number of nonvacuum blocks the \emph{record weight} of a basis state. It is our formal proxy for how many blocks the algorithm has genuinely touched. The lower bound then rests on two pillars: the record weight grows slowly, and a light record caps the success probability.

\paragraph{Bounding the recording progress.}
The first pillar states that few queries produce a light record. We measure progress after $t$ queries by the amplitude that the recorded state places on basis states of record weight at least $k$. Two observations drive the analysis. A query acts on a single block, so one query raises the weight by at most one. Moreover, raising it is costly: an overlap computation shows that a single query moves amplitude at most $2/\sqrt{N}$ out of the vacuum of the queried block. Together they imply

\begin{lemma}[Progress bound; informal version of \Cref{lem:progress-recurrence',lem:binomial-progress'}]
After $t$ queries, the amplitude on basis states of record weight at least $k$ is at most $\binom{t}{k}\left(2/\sqrt{N}\right)^{k}$.
\end{lemma}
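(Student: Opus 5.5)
We prove the progress bound by induction on the number of queries $t$, using the recorded execution provided by \Cref{thm:relation-phi-psi}. Let $\ket{\phi_t}$ be the recorded joint state after $t$ queries. Let $\Pi_{\ge k}$ be the projector onto recording basis states of record weight at least $k$, i.e.\ with at least $k$ blocks not in $\ket{\perp\perp}$. Set $q_{t,k}=\|\Pi_{\ge k}\ket{\phi_t}\|$. The target is the recurrence
\[
q_{t+1,k}\le q_{t,k}+\frac{2}{\sqrt N}\,q_{t,k-1},
\]
with boundary values $q_{t,0}\le 1$ and $q_{0,k}=0$ for $k\ge 1$. The initial recorded state is all-vacuum, since $S$ maps $\ket{\tilde u}\ket{\tilde u}$ to $\ket{\perp\perp}$.

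Given the recurrence, the closed form follows by induction on $t+k$. Assume $q_{t,k}\le\binom tk c^k$ with $c=2/\sqrt N$. Then
\[
q_{t+1,k}\le \binom tk c^k+c\binom t{k-1}c^{k-1}=\binom{t+1}k c^k
\]
by Pascal's rule. Interleaved algorithm unitaries act only on the algorithm register, so they commute with $\Pi_{\ge k}$ and leave each $q_{t,k}$ unchanged. It therefore suffices to analyse a single application of the recording oracle $R$.

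\textbf{One query.} Write $\ket{\phi_t}=\Pi_{\ge k}\ket{\phi_t}+\Pi_{k-1}\ket{\phi_t}+\Pi_{\le k-2}\ket{\phi_t}$, split by exact weight. The recording oracle is controlled on the query register $(\text{block } j,\ \text{direction},\ x,\ \text{output})$ and acts nontrivially only on the record of block $j$. Hence it changes the weight by at most one, which gives the following facts.
\begin{itemize}
\item $\Pi_{\ge k}R\,\Pi_{\le k-2}=0$.
\item On the weight-$(k-1)$ part, only components whose queried block $j$ is vacuum can reach weight $k$.
\end{itemize}
By the triangle inequality,
\[
\|\Pi_{\ge k}R\ket{\phi_t}\|\le\|\Pi_{\ge k}\ket{\phi_t}\|+\|\Pi_{\ge k}R\,\Pi_{k-1}\ket{\phi_t}\|.
\]
For the second term, decompose by the control value $j$. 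The pieces for different $j$ are orthogonal both before and after $R$, because the query register is preserved. It is therefore enough to prove the local operator bound
\[
\bigl\|(\Id-\proj{\perp\perp})\,R_{j,\mathrm{dir},x}\,\proj{\perp\perp}\bigr\|\le \frac{2}{\sqrt N}
\]
for every fixed direction and position $x$, with the output register included. Summing the orthogonal pieces then gives at most $\frac{2}{\sqrt N}\|\Pi_{k-1}\ket{\phi_t}\|\le \frac{2}{\sqrt N}q_{t,k-1}$.

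\textbf{Main obstacle: the local bound.} This is the real computation. Conjugate by $S\otimes S$ and start from $\ket{\perp\perp}\ket{y}$. Under $S$ this is $\ket{\tilde u}\ket{\tilde u}\ket y$. Take the forward direction at position $x$. The oracle acts as the identity on the component where the first coordinate is not $x$, and adds $r^B+1$ to the output when the first coordinate equals $x$. So
\[
O\ket{\tilde u\tilde u}\ket y=\ket{\tilde u\tilde u}\ket y+\frac{1}{\sqrt N}\ket x\otimes\Bigl(\frac1{\sqrt N}\sum_b\ket b\ket{y+b+1}-\ket{\tilde u}\ket y\Bigr).
\]
Mapping back with $S\otimes S$, the first term returns to $\ket{\perp\perp}\ket y$. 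The correction term has norm at most $\frac{1}{\sqrt N}\cdot 2$, by the triangle inequality on its two unit-norm pieces. Projecting away from $\ket{\perp\perp}$ can only decrease this, which yields $2/\sqrt N$. The backward direction is symmetric.

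Care is needed because the local bound must hold as an operator norm on the vacuum subspace tensored with the output register, so that superposed outputs are covered. The formula above is linear in $\ket y$ and the correction is an isometry-like map scaled by at most $2/\sqrt N$, so this holds. With this bound the recurrence, and hence the lemma, follow.
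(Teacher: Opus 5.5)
Your proposal is correct and follows essentially the paper's route. It uses the same recurrence $\Delta_{t+1,k+1}\le\Delta_{t,k+1}+\frac{2}{\sqrt N}\Delta_{t,k}$, obtained by splitting off the weight-$k$ component whose queried block is vacuum, reduces to a single-block operator-norm bound of $2/\sqrt N$, and closes with Pascal's rule. The one difference is the local step: you use the additive oracle, so the output register is modified rather than a control, as your local bound correctly accounts for, and you bound the correction term by the triangle inequality; the paper keeps the phase oracle, reduces to basis states because the phase register is preserved, and computes the exact vacuum-return amplitude $\beta_u^\perp=1-\frac1N-\frac1{N^2}$, giving $1-|\beta_u^\perp|^2\le 4/N$.
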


In particular, if the number of queries is at most a small constant times $B\sqrt{N}$, then the amplitude on record weight at least $dB/2$ is $2^{-\Omega(B)}$. In words: after few queries, almost all of the recorded state knows only a small fraction of the blocks.

\paragraph{Soundness of recording.}
The second pillar states that a light record is fatal for a query algorithm that claims high success probability. Any recording-based lower bound must eventually argue that a high record weight is \emph{necessary} for success; otherwise a slowly growing record proves nothing. This soundness step is often the delicate part of such arguments, because the success measurement and the record-weight decomposition do not commute. Our soundness lemma settles it in an exact and fully generic form.

\begin{lemma}[Soundness of recording; informal version of \Cref{lem:technical lemma 1}]
Fix a subset $D$ of blocks and, for each block $t\in D$, an arbitrary test given by a projector $Q_t$ on the record of that block. Let $p_t$ be the measure of the overlap of the test $Q_t$ with the vacuum. Then the largest probability with which a state of record weight at most $k$ passes all the tests simultaneously is exactly
$\Pr\left[\sum_{t\in D} X_t\le k\right]$,
where the $X_t\in\{0,1\}$ are independent Bernoulli random variable with $\Pr[X_t=0]=p_t$.
\end{lemma}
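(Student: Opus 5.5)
The identity is an operator-norm computation, and the plan is to reduce it to commuting projectors. Write the record space of each block $t$ as $\mathcal{H}_t$. Let $V_t=\proj{\perp\perp}$ be the vacuum projector and $V_t^\perp=\Id-V_t$. Let $P_{\le k}$ be the projector onto basis states with at most $k$ nonvacuum blocks. This projector is diagonal in the vacuum/nonvacuum decomposition:
\[
P_{\le k}=\sum_{S\subseteq[B],\,|S|\le k}\ \bigotimes_{t\in S}V_t^\perp\otimes\bigotimes_{t\notin S}V_t .
\]
Blocks outside $D$ do contribute to the weight. The worst case, which attains the supremum, puts them all in vacuum. So I reduce to the blocks in $D$ (tensored with an identity on the algorithm register) and set $Q=\bigotimes_{t\in D}Q_t$. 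The quantity to compute is
\[
\max\{\|Q\psi\|^2:\ \psi=P_{\le k}\psi,\ \|\psi\|=1\}=\|QP_{\le k}\|^2=\|P_{\le k}QP_{\le k}\|.
\]

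The key step is a per-block two-dimensional reduction. Here I read $p_t$ as $\|Q_t\ket{\perp\perp}\|^2$ (the overlap with the vacuum), with the test's remaining mass sitting in the nonvacuum part. The plan is to decompose each $\mathcal{H}_t$ by Jordan's lemma for the pair of projectors $(Q_t,V_t)$. Since $V_t$ has rank one, there is exactly one nontrivial Jordan block, spanned by $\ket{\perp\perp}$ and the normalized vector $\ket{w_t}\propto V_t^\perp Q_t\ket{\perp\perp}$. On this block, $Q_t$ is the projector onto $\sqrt{p_t}\ket{\perp\perp}+\sqrt{1-p_t}\ket{w_t}$. On the complement, $V_t=0$, and $Q_t$ acts as some projector commuting with everything relevant. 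If $Q_t$ has range inside $V_t^\perp$ on this complement, it contributes weight-one states that pass with probability $1$, matching $\Pr[X_t=1\text{ passes}]=1$.

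The main obstacle is that $P_{\le k}$ does not factor across blocks, so the per-block pieces must be recombined globally. I would handle it as follows. The operator $P_{\le k}QP_{\le k}$ commutes with each local observable $V_t$ only after compressing. However, the combined system is a direct sum over choices of Jordan sector per block. In each sector, every block is either (a) a qubit $\{\ket{\perp\perp},\ket{w_t}\}$ where the test is a rank-one projector onto $\ket{\phi_t}=\sqrt{p_t}\ket0+\sqrt{1-p_t}\ket1$, or (b) a one-dimensional space of fixed weight with pass value $0$ or $1$.

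In each sector the problem becomes the following. Let $\Phi=\bigotimes\ket{\phi_t}$, with $\ket0=$ vacuum and $\ket1=$ nonvacuum. Then $Q$ restricted to the sector is rank one, $\proj{\Phi}$ tensored with fixed factors. The maximal pass probability of a weight-$\le k$ state is therefore exactly
\[
\|P_{\le k}\Phi\|^2=\sum_{|S|\le k}\prod_{t\in S}(1-p_t)\prod_{t\notin S}p_t=\Pr\Big[\sum_t X_t\le k\Big].
\]
This uses the fact that $P_{\le k}$ is diagonal in the $\ket0/\ket1$ product basis. Sectors of type (b) only shift weights or kill acceptance. The rank-one sector for a given block is the worst one, because its nonvacuum component can be absorbed into that sector's bound by monotonicity of the lower tail in $p_t$ (replacing $p_t$ with $0$ only lowers the tail). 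This gives the upper bound.

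For the matching lower bound, take $\psi=P_{\le k}\Phi/\|P_{\le k}\Phi\|$. Then $\|Q\psi\|^2=|\langle\Phi|\psi\rangle|^2=\|P_{\le k}\Phi\|^2$, which attains the supremum. This establishes the exact equality.
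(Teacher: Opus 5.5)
Your argument is essentially the paper's proof. Your $\ket{\phi_t}$ is the paper's $\ket{w_t}=Q_t\ket{\perp\perp}/\sqrt{p_t}$, your pass-one nonvacuum type-(b) pieces together make up $R_t=\Ran(Q_t)\cap\mathcal{H}_t^{\neq\perp}$, and your sectors are the paper's $\mathcal{H}_{J,S}$; the paper stays inside $\Ran(Q_D)$ and diagonalizes $Q_D\Lambda_{\le k}Q_D$ directly rather than using Jordan's lemma and a rank-one reduction, but the monotonicity step that selects $J=Z$, $S=\emptyset$ is the same as yours.

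Tidy two points. When $p_t=0$ your $\ket{w_t}\propto V_t^\perp Q_t\ket{\perp\perp}$ vanishes, so $\ket{\phi_t}$ must be chosen as any unit vector of $\Ran(Q_t)$; this needs $Q_t\neq 0$, which the formal lemma assumes and which is necessary, since for $Q_t=0$ the left side is $0$. Also, the reduction of blocks outside $D$ to vacuum should be justified by your sector argument (they are type (b) with pass value $1$) rather than asserted.
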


Intuitively, a block whose record is vacuum passes its test only through the vacuum overlap $p_t$, and a block escapes this penalty only by being one of the at most $k$ nonvacuum blocks. The lemma says these are the only two options, and that the trade-off between them is governed by independent per-block coins. Three features are important about this lemma. It is an identity rather than an inequality: the relevant operator norm is computed exactly, through an explicit orthogonal decomposition into joint eigenspaces. It is generic: the statement never mentions our specific problem, and the per-block tests $Q_t$ are arbitrary, so the lemma serves as an off-the-shelf soundness step in any recording lower bound where high record weight must be shown necessary. Finally, it converts an operator question about two non-commuting projectors into a purely probabilistic one, so standard concentration bounds apply immediately.

For our problem BHVF, the test at block $t$ is that the algorithm's guess for this block is correct. A guess that pins down one specific pair has vacuum overlap exactly $1/N^{2}$, so a still-vacuum block is guessed correctly with probability $1/N^{2}$ and no better. Since a successful output must name at least $dB$ blocks, a Chernoff bound turns the soundness lemma into the following statement: any state of record weight at most $dB/2$ passes the success measurement with probability at most $e^{-\Omega(B)}$ (\Cref{lem:main-mbeh helper'}).

\paragraph{From progress and soundness to the lower bound.}
The two pillars combine mechanically in the last subsection of \Cref{sec:BHVF-lb-proof}. Run any algorithm for $T$ queries and split its recorded state at the weight threshold $dB/2$. If $T$ is below a small constant times $B\sqrt{N}$, the part above the threshold carries amplitude $2^{-\Omega(B)}$ by the progress bound, and the part below the threshold passes the success measurement with probability $e^{-\Omega(B)}$ by soundness. The triangle inequality then caps the overall success probability strictly below $1/3$, which proves \Cref{thm:bhvf}. The same template, run with a single recorded coordinate per block, proves the RMIF lower bound (\Cref{thm:rmif}) in \Cref{sec:appnd:RMIF-lb}.
\subsubsection{Reductions from Graph Problems to Intermediate Search Problems}
\paragraph{Triangle Listing}
The hard family starts from a bipartite Ruzsa--Szemer\'edi graph whose edges
split into $B=\Theta(n)$ large induced matchings~\cite{behrend1946sets,ruzsa1978triple}.  Each
matching represents one $\rmif$ block.  A small public augmentation turns the
marked edge of that block into one triangle, while inducedness prevents
unintended triangles.  Thus every reported triangle identifies the marked
location in a distinct block.  The adjacency, degree, and ordered-neighbor
oracles of the resulting graph are all simulated with constant overhead, so
\Cref{thm:informal-search} transfers to the graph problem. The detailed reduction is given in \Cref{sec:triangle-listing-reduction}.
\paragraph{k-Spanner} The hard instances have a different structure from dense high-girth graphs.
In fact, they have girth exactly four and each instance
admits a multiplicative $3$-spanner with only $O(n)$ edges. Thus the lower bound does not
arise from forcing the algorithm to print a dense answer, and the construction
does not seek a new extremal graph of large girth.  The difficulty is to locate
which input edges must be retained. 
The reduction starts from a sparse public bipartite skeleton.  Each skeleton
edge carries one $\bhvf$ block and contributes one hidden graph edge between
two coordinate arrays.  Public degree-balancing gadgets make all degrees
independent of the hidden pair.  The construction ensures that a constant
fraction of the hidden edges have no replacement path of length at most $k$;
these edges must occur in every valid $k$-spanner.  Outputting the spanner
therefore recovers the corresponding $\bhvf$ pairs.  The two graph endpoints
expose exactly the two source-query directions, and the full mixed graph oracle
is simulated with constant overhead.  The lower bound of \Cref{thm:informal-search} translates here. The detailed reduction is given in \Cref{sec:k-spanner-reduction}.

\section{Preliminaries}\label{sec:preliminaries}
\paragraph{Basic Notations.} Given any integers $1\leq m\leq n$, we define the set $[n]=\{0,1,2,...,n-1\}$ and $[m,n]=\{m,...,n-1\}$. Given a set $S$, we use $|S|$ to denote the cardinality of $S$. Given an array $A$, we use $|A|$ to denote the size of the array and we use $A[x]$ to denote the $x$-th element or $x$-th position of $A$ where $x\in [|A|]$. 
\paragraph{Notations for Quantum Computing.}
We use $\omega_N = e^{2i\pi/N}$ to denote a primitive $N$-th root of unity.
Bra-ket notation is used throughout: let $\ket{\psi}$ be a normalized (column) quantum state in a Hilbert Space $\CalH$, $\bra{\psi}$ be the conjugate transpose of $\ket{\psi}$, $\braket{\phi|\psi}$ be the inner product between $\ket{\phi}$ and $\ket{\psi}$, and $\ket\phi\bra\psi$ be the outer product between $\ket{\phi}$ and $\ket{\psi}$.

For an $n$ dimension Hilbert space the computational basis is $\{\ket{0},\ket{1},\cdots,\ket{n-1}\}$. The norm $\|\ket{\psi}\|$ refers to the standard $\ell_2$ norm.
For an orthonormal basis set $\{\ket{v_i}\}_i$ of an Hilbert space $\CalH$, if $\ket{\psi}=\sum_i \alpha_i\ket{v_i}$, then probability of $\ket{\psi}$ being measured as $\ket{v_i}$ is $|\alpha_i|^2$. A projector $\Pi$ on $\CalH$ is a PSD opeartor satisfying $\Pi^2=\Pi$. For a collection of orthogonal projectors $\{\Pi\}_i$ satisfying $\sum_i \Pi_i=\Id$, probability that a state $\ket\psi$ gives the measurement outcome $i$ is given by $\norm{\Pi_i\ket\psi}^2$. The range of a projector is $\Ran(\Pi)=\{\ket{x}\in \mathcal{H}: \Pi \ket{x} =\ket{x}\}$.
\\~\\
For a linear operator $A: \mathcal{H}\rightarrow \mathcal{H}$, the spectral operator norm is defined as 
\[
\norm{A}=\sup_{\ket{v}\in \mathcal{H}:\norm{\ket{v}}=1} \norm{A\ket{v}}.
\]
It has the property $\norm{A}^2=\norm{A^\dagger A}=\lambda_{\max}(A^\dagger A)$ where $\lambda_{\max}$ denotes the largest eigenvalue in magnitude. Note that the operator norm measures the largest possible amplification of vector length under the action of $A$. Given a positive semidefinite operator $A\geq 0$, it has $\norm{A}=\lambda_{\max}(A)$.
\paragraph{Notations for Graph}
Consider a graph $G:=(V,E)$ with the vertex set $V$ and the edge set $|E|$. For each vertex $v\in V$, we use $\mathfrak{N}(u)$ to denote the neighborhood list of the vertex $v$, which has an arbitrary and fixed ordering of neighbors of $v$. Let $\deg(v)=|\mathfrak{N}(v)|$ be the degree of vertex $v$.
For arbitrary vertices $u,v\in V$ and an integer $x\in \mathbb{N}_{\geq 0}$, 
\begin{itemize}
    \item We denote by $\mathcal{A}(u,v)$ the adjacency query which returns $1$ if there exists an edge $(u,v)\in E$ and returns $0$ otherwise.
    \item We denote by $\mathcal{D}(v)$ the degree query 
    which returns the vertex degree $\deg(v)$. 
    \item We denote by $\mathcal{N}(u,x)$ the neighbor query which returns the $x$-th neighbor in the neighborhood list of $u$ if $x\leq \deg(u)-1$ and returns a special symbol $\perp$ if $x\geq \deg(u)$. 
\end{itemize}
In this paper, an algorithm may coherently call any of the adjacency queries, degree queries and neighbor queries defined as above i.e.,
\begin{align*}
\mathcal{O}_{\mathcal{A}}\lvert u,v,z\rangle&=\lvert u,v,z\oplus \ind[\{u,v\}\in E]\rangle.\\
\mathcal{O}_{\mathcal{D}}\lvert v,z\rangle&=\lvert v,z\oplus \deg(v)\rangle,\\
\mathcal{O}_{\mathcal{N}}\lvert v,x,z\rangle&=\lvert v,x,z\oplus \mathfrak{N}(v)[x]\rangle
\end{align*}
An invalid port returns a public symbol $\perp$. One call to any oracle or its inverse counts as one
graph query.

\subsection{Quantum Algorithms and Quantum Query Models}
\label{subsec:quantum-query-model}
We adopt the standard formalism of quantum query complexity used in \cite{Ambainis00,BW02,Zhandry19,HamoudiMagniez2023,hamoudi2025briefintroductionquantumquery}, wherein an
algorithm interacts with an unknown function $f : [M] \to [N]$ exclusively
through quantum oracle calls, with the goal of producing an output
$w_{\mathrm{out}}$ satisfying a given relation $R(f, w_{\mathrm{out}})$.
\paragraph{Query bounded Quantum Algorithm.}
A $T$-query algorithm is given by a sequence of unitaries
$U_0, U_1, \ldots, U_T$ and oracle calls $\CalO_f$ acting on a Hilbert space $\CalH_{Alg}$ with three registers: the
query register $Q$ ranging over $[M]$, the phase register $P$ ranging over
$[N]$, and the working register $W$. We write basis states as
$\ket{x, u, w}_{QPW}$. Upon making $t \leq T$ queries to a function
$f : [M] \to [N]$, the algorithm's state evolves to
\begin{equation}\label{eqn:state-evolution-general}
    \ket{\psi^f_t}
    \;=\;
    U_t \mathcal{O}_f\, U_{t-1} \cdots U_1 \mathcal{O}_f\, U_0\, \ket{0}
\end{equation}
where $\mathcal{O}_f$ is the phase oracle defined by its action on basis states:
\begin{equation}
    \mathcal{O}_f \ket{x, u, w}
    \;=\;
    \omega_N^{\,u f(x)}\, \ket{x, u, w},
    \qquad
    \omega_N := e^{2i\pi/N}.
\end{equation}
The value $f(x)$ is encoded into the phase of the quantum amplitude rather
than written into an output register. Although an alternative convention writes
the oracle as the map $\ket{x, u, w} \mapsto \ket{x, u + f(x) \bmod N, w}$,
the two are unitarily equivalent and the phase encoding is more amenable to the query framework we employ. The algorithm's answer is given by a
designated substring $w_{\mathrm{out}}$ of $w$. Writing
$\Pi^f_{\mathrm{succ}}$ for the projector onto basis states $\ket{x, u, w}$
whose output substring satisfies $R(f, w_{\mathrm{out}})$, the
\emph{success probability} on input $f$ is
\begin{equation}
    \sigma_f \;=\; \bigl\|\Pi^f_{\mathrm{succ}}\ket{\psi^f_T}\bigr\|^2.
\end{equation}

\paragraph{Oracle Register.}\label{para:oracle-register} As described in \cite{Ambainis00, Zhandry19, HamoudiMagniez2023}
it is often convenient to maintain the input function explicitly in an
additional \emph{oracle register} $F$, which evolves in superposition
alongside the algorithm. This perspective is standard in the adversary method
and is central to the recording technique used in \cite{Zhandry19, HamoudiMagniez2023}. The register $F$ is decomposed
into $M$ sub-registers $F_1, \ldots, F_M$, with $F_x$ storing the value
$f(x) \in [N]$ for each $x \in [M]$. Basis states of the oracle Hilbert space (denoted by $\CalH_{inp}$)
are written $\ket{f}_F := \bigotimes_{x \in [M]} \ket{f(x)}_{F_x}$.

Given a distribution $\mathcal{D}$ over the function space $[N]^M$, the
initial state of the oracle register is set to
\begin{equation}\label{eqn:Separable-input-oracle}
    \ket{\mathcal{D}}_F := \sum_{f \in [N]^M} \sqrt{\Pr[f \leftarrow \mathcal{D}]}\,\ket{f}_F
\end{equation}
The joint query operator $\mathcal{O}$ acts on both the algorithm workspace
and $F$ by
\begin{equation}
    \mathcal{O}\ket{x, u, w}\ket{f}
    \;=\;
    \bigl(\mathcal{O}_f\ket{x, u, w}\bigr)\ket{f},
\end{equation}
keeping $F$ intact. Extending each $U_i$ trivially to act as the identity on
$F$, the joint state of the algorithm and oracle after $t$ queries becomes
\begin{equation}\label{eqn:state-evolution-separable-input}
    \ket{\psi_t}
    \;=\;
    U_t \mathcal{O}\, U_{t-1} \cdots U_1 \mathcal{O}\, U_0
    \bigl(\ket{0}\ket{\mathcal{D}}\bigr)
    \;=\;
    \sum_{f \in [N]^M}
    \sqrt{\Pr[f \leftarrow \mathcal{D}]}\;\ket{\psi^f_t}\ket{f}.
\end{equation}
The overall success probability of the algorithm under the distribution
$\mathcal{D}$ is then measured by jointly observing the working and oracle
registers of $\ket{\psi_T}$. Letting $\Pi_{\mathrm{succ}}$ be the projector
onto basis states $\ket{x,u,w}\ket{f}$ for which $w_{\mathrm{out}}$ satisfies
$R(f, w_{\mathrm{out}})$, this probability is
\begin{equation}
    \sigma \;=\; \bigl\|\Pi_{\mathrm{succ}}\ket{\psi_T}\bigr\|^2.
\end{equation}
The graph reductions used later are classical in their description.  In the
quantum query model, however, the corresponding oracle simulation must be
performed coherently and must erase all temporary source-oracle information.

\begin{fact}[Implicit in Section 8 of \cite{DHHM06}]
\label{fact:clean-coherent-graph-reduction}
Let $\mathsf{P}_{\mathrm{src}}$ be an oracle search problem whose input
$x$ is sampled from a distribution $\mathcal{D}_{\mathrm{src}}$, and let
$\mathcal{O}_x$ denote its query oracle.  Suppose that a public construction associates
with every source input $x$ a graph $G_x$, and let
$\mathcal{D}_{\mathrm{gr}}$ denote the induced distribution of $G_x$ when
$x\sim\mathcal{D}_{\mathrm{src}}$.  Assume that the following properties hold.

\begin{enumerate}
    \item Every valid output for the graph instance $G_x$ can be converted,
    without any additional calls to $\mathcal{O}_x$, into a valid output for the source
    instance $x$.

    \item Every adjacency, degree, or ordered-neighbor query to $G_x$ can be
    answered exactly by a deterministic classical procedure using at most
    $c=O(1)$ calls to $\mathcal{O}_x$.
\end{enumerate}

Then any $T$-query quantum algorithm for the graph problem can be converted
into an $O(cT)$-query quantum algorithm for
$\mathsf{P}_{\mathrm{src}}$, with at least the same success probability.
Consequently, a distributional lower bound $\Omega(L)$ for
$\mathsf{P}_{\mathrm{src}}$ under $\mathcal{D}_{\mathrm{src}}$ implies an
$\Omega(L)$ quantum query lower bound, up to the constant simulation overhead, for the graph problem under
$\mathcal{D}_{\mathrm{gr}}$.  It also implies the same lower bound for the
worst-case graph problem on the corresponding graph family.
\\~\\
The classical query procedure is implemented reversibly in
compute--write--uncompute form.  Namely, it coherently computes the required
source-oracle answers into work registers, computes the graph-oracle response,
writes that response into the designated answer register or applies the
corresponding phase and then reverses the entire computation.  Thus all
temporary registers are returned to zero.
\end{fact}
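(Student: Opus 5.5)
The plan is a direct simulation argument. Fix a $T$-query graph algorithm $\mathcal{A}$, given by unitaries $U_0,\dots,U_T$ interleaved with calls to $\mathcal{O}_{\mathcal{A}}$, $\mathcal{O}_{\mathcal{D}}$, $\mathcal{O}_{\mathcal{N}}$ (or their inverses) on $G_x$. We build an algorithm $\mathcal{B}$ for $\mathsf{P}_{\mathrm{src}}$ that runs the same unitaries $U_i$ and replaces each graph-oracle call by a unitary circuit $W$ that makes at most $2c$ calls to $\mathcal{O}_x$ or $\mathcal{O}_x^{\dagger}$. Finally, $\mathcal{B}$ applies the classical conversion from property 1 to the measured output.

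\textbf{Step 1: build $W$.} By property 2, each graph query $q$ is answered by a deterministic classical procedure $P$ making at most $c$ adaptive source queries. Using standard reversible (Bennett) simulation, we implement $P$ as a unitary $V$ acting on $\ket{q}\ket{0}_{\mathrm{work}}$. It makes at most $c$ calls to $\mathcal{O}_x$ and produces $\ket{q}\ket{g_q(x),\mathrm{garbage}_q(x)}$, where $g_q(x)$ is the graph-oracle answer.

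\begin{itemize}
\item If the source oracle is a phase oracle, we first convert it to XOR form by a Fourier transform on the phase register. This costs no extra queries.
\item We then XOR $g_q(x)$ into the answer register, or apply the corresponding phase $\omega^{u g_q(x)}$ for the phase convention.
\item Last, we apply $V^{\dagger}$, which uses at most $c$ further oracle calls.
\end{itemize}

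Because $P$ is deterministic, for each fixed $x$ the work register after $V$ is a basis state determined by $(q,x)$. Uncomputation therefore returns it exactly to $\ket{0}$. Hence $W$ acts as $\mathcal{O}^{G_x}\otimes \Id_{\mathrm{work}}$ on states whose work register is zero, for every fixed $x$. Inverse graph queries are simulated by $W^{\dagger}$.

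\textbf{Step 2: states coincide exactly.} We argue by induction on $t$ that the state of $\mathcal{B}$ after simulating $t$ queries equals $\ket{\psi^{G_x}_t}\ket{0}_{\mathrm{work}}$ for each fixed $x$. The same identity then holds in the purified picture of \eqref{eqn:state-evolution-separable-input}, with the oracle register in superposition according to $\mathcal{D}_{\mathrm{src}}$. The reason is that all operators are block-diagonal in $\ket{x}_F$, so no entanglement with stale garbage arises. Consequently the output distribution of $\mathcal{B}$ on $x$ equals that of $\mathcal{A}$ on $G_x$.

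\textbf{Step 3: success probability and lower bound.} By property 1, whenever $\mathcal{A}$'s output is valid for $G_x$, the converted output is valid for $x$. So $\sigma_{\mathcal{B}}(x)\ge\sigma_{\mathcal{A}}(G_x)$ pointwise, and this also holds on average over $x\sim\mathcal{D}_{\mathrm{src}}$, i.e.\ over $G_x\sim\mathcal{D}_{\mathrm{gr}}$. The query count of $\mathcal{B}$ is at most $2cT=O(cT)$.

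If every algorithm for $\mathsf{P}_{\mathrm{src}}$ with success probability at least $2/3$ under $\mathcal{D}_{\mathrm{src}}$ needs $\Omega(L)$ queries, then $T=\Omega(L/c)$. A worst-case graph algorithm succeeding with probability at least $2/3$ on every graph in the family in particular succeeds on average under $\mathcal{D}_{\mathrm{gr}}$, which gives the worst-case statement.

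The only delicate point, and the main obstacle, is Step 1's exact uncomputation. Any residual dependence of the work register on $x$ would break interference in the purified superposition. This is precisely why property 2 demands a deterministic and exact procedure, and the per-$x$ argument in Step 2 handles it.
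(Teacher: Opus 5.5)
Your proposal is correct and uses the same compute--write--uncompute simulation the paper sketches. You reversibly compute the at most $c$ source answers, write the graph answer or phase, and uncompute so that the work register returns exactly to zero for every fixed $x$; this preserves success probability pointwise and uses at most $2cT$ source queries. Your per-$x$ exactness argument in the purified picture and your average-to-worst-case step supply details the paper leaves implicit.
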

\subsection{Recording Query Model}
\label{subsec:recording-query-model}
The quantum recording query model is a modification of the standard query model
of Section~\ref{subsec:quantum-query-model} designed to be transparent to the algorithm while
simultaneously enabling a fine-grained accounting of the algorithm's progress
towards solving the problem at hand. The original formulation is due to \cite{Zhandry19}, but we adopt here the version of \cite{HamoudiMagniez2023} which is a simplified and more general variant that
applies to any product distribution on the input, rather than being restricted
to the uniform distribution.
\paragraph{Construction.}
Throughout this section we fix a product distribution
$\mathcal{D} = \mathcal{D}_1 \otimes \cdots \otimes \mathcal{D}_M$ over $[N]^M$.
The starting point of the construction is to enlarge the range of the oracle
from $[N]$ to $[N] \cup \{\perp\}$, so that the oracle register may now hold
functions $f : [M] \to [N] \cup \{\perp\}$. This extension of the space $\CalH_{inp}$ is denoted as $\CalH_{rec}$ with dimension $(N+1)^M$. The symbol $\perp$ at position $x$
signals that the algorithm has not yet obtained any information about $f(x)$.
By convention, any basis state $\ket{x,u,w}\ket{f}$ in which $f$ takes the
value $\perp$ at any coordinate lies outside the support of
$\Pi_{\mathrm{succ}}$, and hence cannot contribute to the success probability.
Crucially, the oracle register is now initialised independently of the
distribution $\mathcal{D}$, always starting in the fully unknown state
$\ket{\perp^M}_F$. The standard query operator $\mathcal{O}$ is extended
consistently by declaring it to act as the identity on states with $f(x) = \perp$.

To connect the recording model back to the standard one, we introduce for each
$x \in [M]$ the state
$\ket{\mathcal{D}_x}_{F_x} := \sum_{y \in [N]} \sqrt{\Pr[y \leftarrow \mathcal{D}_x]}\,\ket{y}_{F_x}$
and a unitary $S_x$ acting on register $F_x$ that swaps $\ket{\perp}$ and
$\ket{\mathcal{D}_x}$ while fixing every vector orthogonal to both:
\begin{equation}\label{eqn:action-of-S-x}
    S_x :
    \begin{cases}
        \ket{\perp}        \;\mapsto\; \ket{\mathcal{D}_x}, \\
        \ket{\mathcal{D}_x} \;\mapsto\; \ket{\perp}, \\
        \ket{\varphi}      \;\mapsto\; \ket{\varphi}
        \quad \text{for } \ket{\varphi} \perp \operatorname{span}\{\ket{\perp},\,\ket{\mathcal{D}_x}\}.
    \end{cases}
\end{equation}
From these local operators we build two global unitaries on
$\mathcal{H}_{QPW} \otimes \mathcal{H}_F$:
\begin{align}
    \CalT_{\mathcal{D}}
    &\;=\; \Id_{QPW} \otimes \bigotimes_{x \in [M]} S_x,
    \label{eq:TD} \\
    S_{\mathcal{D}}
    &\;=\; \sum_{x \in [M]} \ket{x}\bra{x}_Q \otimes
           \Id_{PW} \otimes
           \Id_{F_1 \cdots F_{x-1}} \otimes S_x \otimes \Id_{F_{x+1} \cdots F_M}.
    \label{eq:SD}
\end{align}
The operator $\CalT_{\mathcal{D}}$ applies $S_x$ to every sub-register simultaneously,
while $S_{\mathcal{D}}$ applies $S_x$ only to the sub-register $F_x$ selected
by the current query index in $Q$. The \emph{recording query operator} is then
defined as
\begin{equation}
    \CalR_{\mathcal{D}} \;:=\; S_{\mathcal{D}}^\dagger\, \mathcal{O}\, S_{\mathcal{D}}.
    \label{eq:RD}
\end{equation}
The joint state of the algorithm and oracle after $t$ steps in the recording
model is
\begin{equation}\label{eqn:recording-evolution-separable}
    \ket{\phi_t}
    \;=\;
    U_t \CalR_{\mathcal{D}}\, U_{t-1} \cdots U_1 \CalR_{\mathcal{D}}\, U_0
    \bigl(\ket{0}\ket{\perp^M}\bigr).
\end{equation}
Since $\CalR_{\mathcal{D}}$ can modify sub-register $F_{x'}$ only when the query
index satisfies $x = x'$, the following structural fact is immediate.

\begin{fact}[Fact 3.2 from \cite{HamoudiMagniez2023}]
\label{fact:support}
    For every $t \geq 0$, the state $\ket{\phi_t}$ is supported on basis states
    $\ket{x,u,w}\ket{f}$ in which $f$ has \emph{at most $t$ entries
    different from $\perp$}.
\end{fact}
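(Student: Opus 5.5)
We argue by induction on $t$, tracking the set of coordinates on which the oracle register can be non-$\perp$. The base case $t=0$ is immediate. The initial state is $\ket{0}\ket{\perp^M}$, so every coordinate of $f$ equals $\perp$ and the number of non-$\perp$ entries is $0$.

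For the inductive step, fix $t$ and let $V_t$ denote the span of all basis states $\ket{x,u,w}\ket{f}$ for which $f$ has at most $t$ entries different from $\perp$. Assume $\ket{\phi_t}\in V_t$. We must show $\ket{\phi_{t+1}}=U_{t+1}\CalR_{\mathcal{D}}\ket{\phi_t}\in V_{t+1}$. This splits into two claims.

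The first claim is that $U_{t+1}$ preserves $V_{t+1}$. This holds because $U_{t+1}$ acts as $U_{t+1}\otimes\Id_F$, so it never changes the oracle register.

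The second claim is that $\CalR_{\mathcal{D}}$ maps $V_t$ into $V_{t+1}$. Here we use the block-diagonal structure of the three factors $S_{\mathcal{D}}$, $\mathcal{O}$ and $S_{\mathcal{D}}^\dagger$ with respect to the query register $Q$. Each factor commutes with $\ket{x}\bra{x}_Q$, so it suffices to fix a query index $x$ and restrict to $\ket{x}_Q\otimes\mathcal{H}_{PW}\otimes\mathcal{H}_F$. On this block the factors act as follows.
\begin{itemize}
\item $S_{\mathcal{D}}$ acts as $S_x$ on $F_x$ and as the identity on every $F_{x'}$ with $x'\neq x$.
\item $\mathcal{O}$ multiplies by a phase depending only on $u$ and the content of $F_x$. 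On $\ket{\perp}_{F_x}$ this phase is $1$, since $\mathcal{O}$ acts as the identity when $f(x)=\perp$.
\item $S_{\mathcal{D}}^\dagger$ acts as $S_x^\dagger$ on $F_x$ and as the identity elsewhere.
\end{itemize}
Hence the restriction of $\CalR_{\mathcal{D}}$ to this block has the form $\Id_{F_{-x}}\otimes K_x$, where $K_x$ acts on $\mathcal{H}_{PW}\otimes\mathcal{H}_{F_x}$. Because it is the identity on all $F_{x'}$ with $x'\neq x$, the coordinates $f(x')$ for $x'\ne x$ are unchanged in every basis state of the output. Only the single entry $f(x)$ can change, possibly from $\perp$ to a value in $[N]$. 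The number of non-$\perp$ entries therefore grows by at most one, which proves that $\CalR_{\mathcal{D}}$ maps $V_t$ into $V_{t+1}$.

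Combining the two claims gives $\ket{\phi_{t+1}}\in V_{t+1}$, which completes the induction. The only point needing care is the locality of the middle factor $\mathcal{O}$. Its phase $\omega_N^{u f(x)}$ reads only $F_x$, and we must check that the extension of $\mathcal{O}$ to $f(x)=\perp$ by the identity is also local to $F_x$. It is, because that extension depends only on whether $f(x)=\perp$ and not on any other coordinate. With this, the composition keeps every sub-register $F_{x'}$ with $x'\neq x$ fixed on the $\ket{x}_Q$ block, which is all the argument requires.
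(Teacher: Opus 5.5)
Your induction is correct and takes the same route as the paper, which only asserts the fact as immediate because on the $\ket{x}_Q$ branch $\CalR_{\mathcal{D}}$ can modify no sub-register other than $F_x$; you make that locality argument explicit. The $U_0$ in $\ket{\phi_0}$, which your base case skips, is covered by your first claim. The locality of the $\perp$-extension that you flag is not actually needed: $\mathcal{O}$ is diagonal in the computational basis of $F$, so it cannot change any coordinate $f(x')$ no matter how its phases are defined.
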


Intuitively, $\ket{\phi_t}$ keeps a running record of which coordinates have
been queried: at most $t$ coordinates can carry a non-$\perp$ value after
$t$ oracle calls.

\paragraph{Indistinguishability.}
The recording model would be of little use if it altered the behavior of the
algorithm. The following theorem, which is the cornerstone of the entire
framework, shows that the two models produce identical reduced states on the
algorithm's registers.

\begin{theorem}[Theorem 3.3 from \cite{HamoudiMagniez2023}]
\label{thm:indistinguishable}
    Let $\mathcal{D}$ be a product distribution and let $(U_0, \ldots, U_T)$
    be a $T$-query quantum algorithm. Let $\ket{\psi_T}$ and $\ket{\phi_T}$
    denote the final states in the standard and recording query models
    respectively. Then
    \begin{equation}
        \ket{\psi_T} \;=\; \CalT_{\mathcal{D}}\,\ket{\phi_T}.
        \label{eq:indistinguishable}
    \end{equation}
    In particular, the marginal states on the algorithm registers $QPW$ are
    identical in both models.
\end{theorem}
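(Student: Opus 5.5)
We prove $\ket{\psi_t}=\CalT_{\mathcal{D}}\ket{\phi_t}$ for every $t\le T$ by induction on $t$; the case $t=T$ is the statement. The whole argument rests on two commutation facts. First, $\CalT_{\mathcal{D}}$ acts only on the oracle register $F$, so it commutes with every algorithm unitary $U_i$, which acts as the identity on $F$. Second, we need a key identity relating the standard and recording query operators:
\[
\mathcal{O}\,\CalT_{\mathcal{D}} \;=\; \CalT_{\mathcal{D}}\,\CalR_{\mathcal{D}}
\quad\text{on the relevant states.}
\]
Given these two facts, the induction is mechanical.

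\emph{Base case.} We have $\CalT_{\mathcal{D}}(\ket{0}\ket{\perp^M}) = \ket{0}\bigotimes_x S_x\ket{\perp} = \ket{0}\bigotimes_x\ket{\mathcal{D}_x}$. Because $\mathcal{D}$ is a product distribution, this equals $\ket{0}\ket{\mathcal{D}}$. Since $U_0$ commutes with $\CalT_{\mathcal{D}}$, we get $\ket{\psi_0}=\CalT_{\mathcal{D}}\ket{\phi_0}$.

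\emph{Inductive step.} Suppose $\ket{\psi_{t-1}}=\CalT_{\mathcal{D}}\ket{\phi_{t-1}}$. Then
\[
\ket{\psi_t}=U_t\mathcal{O}\,\CalT_{\mathcal{D}}\ket{\phi_{t-1}}=U_t\CalT_{\mathcal{D}}\CalR_{\mathcal{D}}\ket{\phi_{t-1}}=\CalT_{\mathcal{D}}U_t\CalR_{\mathcal{D}}\ket{\phi_{t-1}}=\CalT_{\mathcal{D}}\ket{\phi_t}.
\]

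\emph{The key identity.} This is the main step, and I would check it block by block on a fixed query index $\ket{x}_Q$. Write $\CalT_{\mathcal{D}} = S_{\mathcal{D}}\cdot\bigl(\Id\otimes\bigotimes_{x'\neq x}S_{x'}\bigr)$ on that branch. Since the $S_{x'}$ are involutions, $S_{\mathcal{D}}^2=\Id$, so $S_{\mathcal{D}}^\dagger = S_{\mathcal{D}}$. On the $\ket{x}_Q$ branch, $\mathcal{O}$ acts only on the registers $P$ and $F_x$, so it commutes with $\bigotimes_{x'\neq x}S_{x'}$. We therefore need
\[
\mathcal{O}\,S_{\mathcal{D}}\,\Bigl(\textstyle\bigotimes_{x'\neq x}S_{x'}\Bigr) \;=\; \CalT_{\mathcal{D}}\,S_{\mathcal{D}}^\dagger\,\mathcal{O}\,S_{\mathcal{D}}.
\]
The right-hand side equals $\bigl(\bigotimes_{x'\neq x}S_{x'}\bigr)\,S_x S_x\,\mathcal{O}\,S_{\mathcal{D}} = \bigl(\bigotimes_{x'\neq x}S_{x'}\bigr)\,\mathcal{O}\,S_{\mathcal{D}}$, using $S_xS_x=\Id$. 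The factor $\bigotimes_{x'\neq x}S_{x'}$ commutes with $\mathcal{O}S_{\mathcal{D}}$ on this branch, because both $\mathcal{O}$ and $S_{\mathcal{D}}$ touch only $F_x$ there. Hence the two sides agree, and the identity in fact holds as an exact operator equality. This justifies writing $\CalR_{\mathcal{D}}=S_{\mathcal{D}}^\dagger\mathcal{O}S_{\mathcal{D}}$ in the form used above.

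\emph{Main obstacle.} The delicate part is bookkeeping rather than depth. The operator $S_{\mathcal{D}}$ is controlled on $Q$, and $\mathcal{O}$ does not change $Q$, so the control is preserved through the query. Still, on each branch we must confirm that only the sub-register $F_x$ is touched. We also have to handle the symbol $\perp$: since $\mathcal{O}$ is extended to act as the identity when $f(x)=\perp$, the operator stays unitary on the enlarged space and all manipulations remain valid.

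\emph{Marginals.} Finally, $\CalT_{\mathcal{D}}$ is a unitary acting only on $F$. The partial trace over $F$ is invariant under conjugation by such a unitary, so $\ket{\psi_T}$ and $\ket{\phi_T}$ have identical reduced states on $QPW$. This proves the second claim of the theorem.
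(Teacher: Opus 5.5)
Your proof is correct and follows the same route as the paper's proof of its adapted version (\Cref{thm:relation-phi-psi}, Appendix): commute $\CalT_{\mathcal{D}}$ past each $U_i$, use the intertwining $\mathcal{O}\CalT_{\mathcal{D}}=\CalT_{\mathcal{D}}\CalR_{\mathcal{D}}$, and use the product structure to map $\ket{0}\ket{\perp^M}$ to $\ket{0}\ket{\mathcal{D}}$. The only additional ingredient is your branch-wise check that the $Q$-controlled conjugation $S_{\mathcal{D}}^\dagger\mathcal{O}S_{\mathcal{D}}$ coincides with $\CalT_{\mathcal{D}}^\dagger\mathcal{O}\CalT_{\mathcal{D}}$; the paper avoids this step by defining $\mathcal{R}=\mathcal{T}^\dagger\mathcal{O}\mathcal{T}$ globally, and your verification of it is sound.
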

As an immediate consequence, the success probability in the standard model can
be expressed entirely in terms of the recording model state:
\begin{equation}
    \sigma
    \;=\; \bigl\|\Pi_{\mathrm{succ}}\ket{\psi_T}\bigr\|^2
    \;=\; \bigl\|\Pi_{\mathrm{succ}}\,\CalT_{\mathcal{D}}\,\ket{\phi_T}\bigr\|^2.
    \label{eq:sigma-recording}
\end{equation}
Any upper bound on $\sigma$ established in the recording model therefore
transfers without loss to the standard model.
\section{Intermediate Searching Problems}
\label{sec:two-problems}
\subsection{Bidirectional Hidden-Value Finding (BHVF) Problem}\label{subsec:BHVF}
Let $B, N \geq 4$ be some given sufficiently large parameters. Consider an input $r=(r_t)_{t \in [B]}$ where $r_t=(r_t^A, r_t^B)$ is a pair with $r_t^A, r_t^B \in [N]$ sampled independently and uniformly. Set up $B$ blocks. For each block $t \in [B]$, we define two functions $A^r_t, B^r_t: [N] \rightarrow [N+1]$ such that
\[
    A^r_t(x)=
    \begin{cases}
        r^B_t+1, &x=r^A_t,\\
        0, &x \neq r^A_t,
    \end{cases}
    \qquad
    B^r_t(x)=
    \begin{cases}
        r^A_t+1, &x=r^B_t,\\
        0, &x \neq r^B_t.
    \end{cases}
\]
We call $(r^A_t,r^B_t)$ the hidden values for block $t$. To find the hidden values, the (classical or quantum) algorithm can query the $\bhvf$ oracle $O_r$ such that given a block index $t\in [B]$, a type $\tau \in \{A,B\}$ and a position $x\in [N]$, 
\[
\text{if } \tau = A, \quad O_r(t, \tau, x) = \begin{cases}
    r_t^B+1, \quad & x=r_t^A,\\
    0, & x \neq r_t^A.
\end{cases} 
\]
\[\text{if } \tau = B, \quad 
O_r(t, \tau, x) = \begin{cases}
    r_t^A+1, \quad & x=r_t^B, \\
    0, & x \neq r_t^B.
\end{cases} 
\]

\begin{problem}[$(B,N,d)$-$\bhvf$]\label{prob:mbeh}
    Given sufficiently large integers $B,N$ and some constant $d \in (0,1]$, the task $(B,N,d)$-$\bhvf$ is to return the hidden values $(r_t^A, r^B_t)$ for at least $\lceil dB\rceil$ distinct blocks.

    In the quantum query model, the input graph $r=(r_t^A,r_t^B)_{t\in [B]}$ with $r_t^A , r_t^B\in [N]$ sampled uniformly and independently. The input can be accessed by the $\bhvf$ oracle. 
    The success probability of the algorithm should be at least $2/3$ over both the randomness of the input and the randomness of the algorithm.
\end{problem}

\begin{theorem}\label{thm:main-mbeh short version}\label{thm:bhvf}
    Given any constant $d\in (0,1]$, the task $(B,N,d)$-$\bhvf$ has quantum query complexity $\Omega(B \sqrt{N})$.
\end{theorem}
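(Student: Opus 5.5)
We prove the bound with the recording-query method, using a block-level recording representation with one shared record per block. The input register for block $t$ is the pair register $F_t=(F_t^A,F_t^B)$ holding $(r_t^A,r_t^B)$, initialized in $\ket{\tilde u}\ket{\tilde u}$ with $\ket{\tilde u}=\frac1{\sqrt N}\sum_{i\in[N]}\ket i$. We enlarge the block space to include a vacuum symbol. Then we let $S_t$ swap $\ket{\perp\perp}$ with $\ket{\tilde u}\ket{\tilde u}$ and fix the orthogonal complement of their span. The recording oracle is $\CalR=S^\dagger O S$, where $S$ applies $S_t$ controlled on the queried block index $t$, regardless of the direction $\tau$.

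\textbf{Step 1: exactness of the change of representation.} We show, exactly as in \Cref{thm:indistinguishable}, that $\ket{\psi_T}=\CalT\ket{\phi_T}$ with $\CalT=\bigotimes_t S_t$. The key check is that $O$ acts as the identity on the uniform state of a block. Indeed, the phase oracle with input $(t,\tau,x,u)$ multiplies by $\omega^{u\,O_r(t,\tau,x)}$, a function of the block's register only. So $S_{t'}$ commutes with the query whenever $t'\neq t$, and the standard telescoping argument goes through. This gives \Cref{thm:relation-phi-psi}.

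\textbf{Step 2: progress bound.} Call a block nonvacuum if its record differs from $\perp\perp$, and let $W_{\ge k}$ project onto record weight at least $k$. A query touches one block, so it changes the weight by at most one. We then bound $\norm{(\Id-\Pi^{\mathrm{vac}}_t)\,\CalR\,\Pi^{\mathrm{vac}}_t}\le 2/\sqrt N$ for each block $t$ and direction. Concretely, $\CalR\ket{\perp\perp}$ equals $\ket{\perp\perp}$ plus $S_t^\dagger(O-\Id)\ket{\tilde u\tilde u}$. The state $(O-\Id)\ket{\tilde u\tilde u}$ is supported on pairs with $r^A=x$, which have total weight $1/N$, and its coefficients are bounded by $2$. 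This gives norm at most $2/\sqrt N$; applying $S^\dagger$ and projecting can only shrink it, up to the constant. Combining these gives a recurrence
\[
\norm{W_{\ge k}\ket{\phi_{t+1}}}\le \norm{W_{\ge k}\ket{\phi_t}}+\tfrac{2}{\sqrt N}\norm{W_{\ge k-1}\ket{\phi_t}},
\]
which solves to $\binom{t}{k}(2/\sqrt N)^k$. With $T\le cB\sqrt N$ and $k=dB/2$, this is at most $(2eT/(k\sqrt N))^k\le 2^{-\Omega(B)}$ for small $c$.

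\textbf{Step 3: soundness.} This is the main obstacle, because the success projector (after $\CalT$) does not commute with the weight decomposition. First condition on the output register. A fixed output claims values for a set $D$ of at least $dB$ blocks. On each block $t\in D$, success becomes the test $Q_t=S_t\proj{a_t b_t}S_t$, whose vacuum overlap is $\bra{\perp\perp}Q_t\ket{\perp\perp}=|\braket{\tilde u\tilde u|a_tb_t}|^2=1/N^2$. We then need $\norm{(\bigotimes_{t\in D}Q_t)W_{\le k}}^2\le\Pr[\sum X_t\le k]$, with $X_t$ independent and $\Pr[X_t=0]=1/N^2$.

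To prove this, work on each block in the two-dimensional span of $\ket{\perp\perp}$ and $Q_t\ket{\perp\perp}$ together with its orthogonal complement. By Jordan's lemma this splits each local space into invariant pieces. Taking the tensor product gives joint eigenspaces, and the relevant norm becomes a maximum over product configurations. That maximum is the binomial lower tail: each vacuum block contributes factor $p_t$, and at most $k$ blocks are free to contribute factor $1$.

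Since the output register is block-diagonal, we take a union over outputs via the maximum, not a sum. A Chernoff bound then gives $e^{-\Omega(B)}$ when $k=dB/2$.

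\textbf{Step 4: combine.} Write $\ket{\phi_T}=W_{\le k}\ket{\phi_T}+W_{>k}\ket{\phi_T}$. The triangle inequality gives
\[
\sqrt{\sigma}\le e^{-\Omega(B)}+2^{-\Omega(B)}<\sqrt{2/3}.
\]
Hence $T=\Omega(B\sqrt N)$.
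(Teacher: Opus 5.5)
Your proposal is correct and follows the paper's proof essentially step for step. Both use the exact change of representation $\ket{\psi_T}=\mathcal{T}\ket{\phi_T}$ and the recurrence $\Delta_{t+1,k+1}\le\Delta_{t,k+1}+\frac{2}{\sqrt N}\Delta_{t,k}$ solved to $\binom{t}{k}(2/\sqrt N)^k$. Both prove soundness through the split $\Ran(Q_t)=\spanop\{\ket{w_t}\}\oplus R_t$ with $\ket{w_t}\propto Q_t\ket{\perp\perp}$, which is exactly your Jordan-lemma decomposition, and this gives the binomial lower tail with $p_t=1/N^2$. The proof then takes a maximum over output basis states and combines the two bounds by the triangle inequality.

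The remaining differences are cosmetic. You swap $\ket{\perp\perp}\leftrightarrow\ket{\tilde u\tilde u}$ at the block level rather than applying $S\otimes S$ coordinatewise. You also bound the leak by $\norm{(O-\Id)\ket{\tilde u\tilde u}}$ instead of computing $\beta_u^\perp$ exactly. One correction to Step 1: the exactness argument rests only on $S_{t'}$ commuting with queries to blocks $t\neq t'$. It does not rest on $O$ fixing the uniform state, which is false for the queried block when $u\neq 0$; that nontrivial action is exactly what produces the $2/\sqrt N$ leak in Step 2.
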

A simple one-directional analogue of this problem is the following.
\subsection{Restricted Marked-Item Finding (RMIF) Problem}\label{subsec:RMIF}
Let $B, N \geq4$ be some given sufficiently large parameters. Consider an input $r=(r_t)_{t \in [B]}=(r_0, r_1, \dots, r_{B-1})$ where each $r_t\in [N]=\{0,1,\dots, N-1\}$ is sampled independently and uniformly. Set up $B$ blocks. For each block $t \in [B]$, define a function $f^r_t:[N] \rightarrow \{0,1\}$ such that 
\[
    f^r_t(x)=
    \begin{cases}
        1,&x=r_t,\\
        0,&x\ne r_t.
    \end{cases}
\]
We call $r_t$ the marked item for block $t$. To find the marked items, the (classical or quantum) algorithm can query the $\bhvf$ oracle $O_r$ such that given a block index $t\in [B]$ and a position $x\in [N]$, 
\[
O_r(t, x) = \begin{cases}
    1, \quad & x=r_t,\\
    0, & x \neq r_t.
\end{cases} 
\]

\begin{problem}[$(B,N,d)$-$\rmif$]
    Given sufficiently large integers $B,N$ and some constant $d \in (0,1]$, the task $(B,N,d)$-$\rmif$ is to return the hidden values $r_t$ for at least $\lceil dB\rceil$ distinct blocks.

    In the quantum query model, the input graph $r=(r_t)_{t\in [B]}$ with $r_t\in [N]$ sampled uniformly and independently. The input can be accessed by the $\rmif$ oracle. 
    The success probability of the algorithm should be at least $2/3$ over both the randomness of the input and the randomness of the algorithm.
\end{problem}

\begin{theorem}\label{thm:rmif}
    Given any constant $d\in (0,1]$, the task $(B,N,d)$-$\rmif$ has quantum query complexity $\Omega(B \sqrt{N})$.
\end{theorem}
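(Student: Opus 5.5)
We will prove \Cref{thm:rmif} with the recording template of \Cref{subsec:recording-query-model}, specialized to one recorded coordinate per block. First we purify the input. The oracle register is $F=F_0\otimes\cdots\otimes F_{B-1}$, where $F_t$ stores $r_t\in[N]$ and is initialized to $\ket{\tilde u}=\frac1{\sqrt N}\sum_{i}\ket i$. Since the input distribution is a product of uniform distributions, \Cref{thm:indistinguishable} applies with $S_t$ swapping $\ket\perp\leftrightarrow\ket{\tilde u}$. The recording oracle $\CalR=S^\dagger\mathcal O S$ touches only the register $F_t$ of the queried block $t$. Here $\mathcal O$ applies the phase $\omega^{u\,\ind[x=r_t]}$. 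We define the record weight of a basis state as the number of $t$ with $F_t\neq\perp$, and write $P_{\ge k}$ for the projector onto weight at least $k$.

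\textbf{Progress.} Next we bound how much amplitude one query can move out of the vacuum. Restricted to block $t$ and query input $(x,u)$ with $u\neq0$, the oracle is $\mathcal O=\Id+(\omega^u-1)\ket x\bra x_{F_t}$. Then
\[
\CalR\ket\perp=\ket\perp+(\omega^u-1)S^\dagger\ket x\langle x|\tilde u\rangle ,
\]
so the component leaving $\ket\perp$ has norm at most $2/\sqrt N$. The reverse transition from a non-$\perp$ state back to $\perp$ is bounded the same way, by Hermiticity. Combined with the facts that a query changes the weight by at most one and that the $U_i$ do not touch $F$, this gives the recurrence
\[
\|P_{\ge k}\ket{\phi_{t+1}}\|\le\|P_{\ge k}\ket{\phi_t}\|+\tfrac{2}{\sqrt N}\,\|P_{\ge k-1}\ket{\phi_t}\|,
\]
exactly as in \cite{HamoudiMagniez2023}. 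By induction, $\|P_{\ge k}\ket{\phi_T}\|\le\binom Tk(2/\sqrt N)^k\le (2eT/(k\sqrt N))^k$. Taking $k=\lceil dB/2\rceil$ and $T\le c\,B\sqrt N$ with $c$ small makes this at most $2^{-\Omega(B)}$.

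\textbf{Soundness.} Now fix any output $w_{\mathrm{out}}$ listing at least $\lceil dB\rceil$ pairs $(t,g_t)$. The success projector after applying $\CalT$ is, on $F$, the tensor product of $Q_t=\proj{g_t}$ over the named blocks $D$. Conjugated back by $\CalT$, it becomes the product of the projectors $S_t\proj{g_t}S_t$, whose overlap with the vacuum is
\[
p_t=|\langle\perp|S_t|g_t\rangle|^2=|\langle\tilde u|g_t\rangle|^2=1/N.
\]
By \Cref{lem:technical lemma 1}, stated later, any state of weight below $k=dB/2$ passes all tests with probability at most $\Pr[\sum_{t\in D}X_t\le k]$, where $\Pr[X_t=1]=1-1/N$. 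Since $|D|\ge dB$ and $N\ge4$, the mean of $\sum_t X_t$ is at least $\tfrac34 dB$. A Chernoff bound then gives $e^{-\Omega(B)}$. Because $\Pi_{\mathrm{succ}}$ is block-diagonal in the algorithm's output register, this bound for each fixed output extends to the whole projector $\CalT^\dagger\Pi_{\mathrm{succ}}\CalT(\Id-P_{\ge k})$, with the same norm bound.

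\textbf{Combining.} Write
\[
\sqrt\sigma\le\|\Pi_{\mathrm{succ}}\CalT(\Id-P_{\ge k})\ket{\phi_T}\|+\|P_{\ge k}\ket{\phi_T}\|\le e^{-\Omega(B)}+2^{-\Omega(B)}.
\]
This is below $\sqrt{2/3}$ for large $B$, so any algorithm with success probability at least $2/3$ needs $T=\Omega(B\sqrt N)$. If $B$ is small and $N$ large, we instead use the weaker Grover-type bound $\Omega(\sqrt N)$ per block, which follows from the same argument with $k=1$.

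The main obstacle is the soundness step, because the success projector and the weight decomposition do not commute. We rely on \Cref{lem:technical lemma 1} to handle this exactly, and we must check that the per-output decomposition of $\Pi_{\mathrm{succ}}$ is compatible with taking the operator norm.
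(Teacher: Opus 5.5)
Your proposal is correct and follows the paper's proof essentially step for step. It uses the single-coordinate recording model, the recurrence $\Delta_{t+1,k+1}\le\Delta_{t,k+1}+\frac{2}{\sqrt N}\Delta_{t,k}$, and blockwise soundness with $p_t=1/N$. For one coordinate per block that soundness lemma is exactly \Cref{lem:technical lemma 2}. It then applies a Chernoff bound and the triangle-inequality split at $k=\lceil dB/2\rceil$. Your direct bound on the component leaving $\ket\perp$ is equivalent to the paper's $\beta_u^\perp=1-2/N$, and your $k=1$ argument for bounded $B$ is a valid addition that the paper sidesteps by taking $B$ sufficiently large.
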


\section{Lower Bounds of the Triangle-Listing Problem}\label{sec:triangle-listing-reduction}
In this section, we prove the lower bounds for the triangle listing problem .
\begin{problem}[$(n,t,d)$-\trianglelisting]
Given some constant $d \in (0,1]$ and an $n$-node graph class $\mathcal{G}_n$ such that every graph $G\in \mathcal{G}_n$ contains exactly $t$ triangles, randomly sample a graph $G$ from $\mathcal{G}_n$ and the task is to list at least $\lceil dt\rceil$ triangles in $G$. 

In the general graph query model, the sampled graph can be accessed by adjacency queries, neighborhood queries and degree queries. The success probability of the algorithm should be at least $2/3$ over both the randomness of the input and the randomness of the algorithm.
\end{problem}
\begin{theorem}\label{thm:triangle listing main}
    Given some $t=\Theta(n)$, for any constant $d \in (0,1]$, the  $(n,t,d)$-$\trianglelisting$ problem requires $\Omega\left(\frac{n^{3/2}}{2^{O(\sqrt{\log n})}}\right)$ i.e., $\Omega(n^{3/2-o(1)})$ quantum queries.
\end{theorem}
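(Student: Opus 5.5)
We reduce from $(B,N,d')$-$\rmif$ and invoke \Cref{thm:rmif} through \Cref{fact:clean-coherent-graph-reduction}.

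\textbf{The host graph.} Fix a bipartite Ruzsa--Szemer\'edi graph $H$ on $L\cup R$ with $|L|=|R|=\Theta(n)$. By Behrend's construction, its edge set partitions into $B=\Theta(n)$ induced matchings $M_0,\dots,M_{B-1}$, each of size $N=n/2^{O(\sqrt{\log n})}$.

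\textbf{The instance $G_r$.} For each block $t$, add a public apex vertex $w_t$. An RMIF input $r$ determines the graph $G_r$ as follows:
\begin{itemize}
\item all edges of $H$ are present;
\item for each $t$, if $r_t=\{u,v\}$ is the $r_t$-th edge of $M_t$, then $w_t$ is joined to both $u$ and $v$.
\end{itemize}
So each block contributes exactly one triangle $\{w_t,u,v\}$.

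\textbf{Counting triangles.} We must check that $G_r$ has exactly $B=\Theta(n)$ triangles.
\begin{itemize}
\item $H$ is bipartite, so it has no triangles.
\item The apices form an independent set, so every triangle has exactly one apex $w_t$ and two neighbours of $w_t$.
\item The neighbours of $w_t$ are $u\in L$ and $v\in R$, which are adjacent in $H$. No two neighbours on the same side exist, so the only triangle at $w_t$ is the intended one.
\end{itemize}
Since $w_t$ has only these two neighbours, inducedness of the matchings is not even needed here. This keeps $t=\Theta(n)$ provided the number of vertices stays $\Theta(n)$, which holds since $B=\Theta(n)$.

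\textbf{Degree balancing.} Degrees must not reveal $r$. The degree of $w_t$ is always $2$. However, $u$ gains one neighbour if its matched edge in $M_t$ is selected. To hide this, I would instead attach $w_t$ via subdivided or paired structure. The cleanest fix is to give every $x\in V(M_t)$ a public dummy partner, so that each matching vertex always has exactly one extra neighbour per block it belongs to, namely either $w_t$ or a dummy pendant. Concretely, for each edge $e\in M_t$ with endpoints $u,v$, add private vertices $p_{t,e,u},p_{t,e,v}$. Then $u$'s slot for block $t$ points to $w_t$ if $e=r_t$ and to $p_{t,e,u}$ otherwise.

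The counts work out as follows:
\begin{itemize}
\item Each $u$ lies in at most one edge per matching, so its degree is $\deg_H(u)$ plus the number of matchings covering it, which is public.
\item The apex degree is always $2$.
\item The pendant vertices have degree $1$ or $0$, which is a concern.
\end{itemize}
To fix the pendants, make $p_{t,e,u}$ adjacent to $u$ always, and let $w_t$'s two neighbours be produced by neighbour queries. The apex of each block is then the one vertex whose identity reveals $r_t$. The simpler approach is to make the apex edges replace the pendant edges, and to give $w_t$ ordered-neighbour list $(u,v)$ computed from one query.

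The vertex count is $n'=O(BN)$, which is too many. So the pendants must be avoided, or $N$ must be chosen so that $BN=O(n)$. The main obstacle is this balancing, and I would first try the standard trick of routing the extra edge through the already-existing degree slack rather than new vertices. For example, use a public perfect matching $P_t$ on $V(M_t)$ disjoint from $H$, and swap the single edge pair. I expect this detail is where the real work lies.

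\textbf{Oracle simulation.} Each adjacency, degree, or ordered-neighbour query to $G_r$ is answered with $O(1)$ RMIF queries. To decide whether $u$ and $w_t$ are adjacent, query $O_r(t,e)$ for the edge $e\in M_t$ containing $u$. Degrees are public. Neighbour lists are ordered, with the block-$t$ slot last, and are computed the same way. Note that the neighbour list of $w_t$ requires locating $r_t$, which is one RMIF search. That is not $O(1)$ queries, so instead I would hide $w_t$'s list by making the apex a per-edge vertex. This loops back to the balancing obstacle above.

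\textbf{Conclusion.} Once balancing is fixed, each listed triangle identifies $r_t$ for a distinct block, so a $d$ fraction of the triangles solves $(B,N,d)$-$\rmif$. \Cref{thm:rmif} then gives
\[
\Omega\bigl(B\sqrt N\bigr)=\Omega\bigl(n\cdot\sqrt{n/2^{O(\sqrt{\log n})}}\bigr)=\Omega\bigl(n^{3/2}/2^{O(\sqrt{\log n})}\bigr).
\]
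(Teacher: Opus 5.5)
\textbf{There is a genuine gap.} The step you flag as ``where the real work lies'' is exactly the step the proof needs, and as written your construction fails the hypotheses of \Cref{fact:clean-coherent-graph-reduction}. With all of $H$ present and $w_t$ joined only to the endpoints of $e^{(t)}_{r_t}$, two kinds of query cannot be simulated cheaply.
\begin{itemize}
\item The ordered-neighbour list of $w_t$ is $(x^{(t)}_{r_t},y^{(t)}_{r_t})$. A single neighbour query on $w_t$ therefore amounts to solving block $t$ of \rmif{}, which costs about $\sqrt N$ queries rather than $O(1)$.
\item The degree $\deg(u)=\deg_H(u)+|\{t: u\in e^{(t)}_{r_t}\}|$ depends on the hidden bit of every block covering $u$. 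A degree query thus costs up to $\deg_H(u)$ source queries, which is $\Theta(N)$ on average.
\end{itemize}
Your pendant repair adds $\Theta(BN)$ vertices, which destroys $t=\Theta(n')$ and the resulting bound. The ``public perfect matching / swap'' idea is left unspecified, so no lower bound follows yet. Note also that your triangle count never uses inducedness. That is a symptom: once the apex's neighbourhood is made public, inducedness is exactly what excludes spurious triangles.

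\textbf{The missing idea} (the paper's construction) is to put the hidden information on the matching edge rather than on the apex's incidences. For each block $t$:
\begin{itemize}
\item drop $H$ and include only the selected edge $e^{(t)}_{r_t}$;
\item join a public apex $w_t$ to all $2N$ vertices of $V(M_t)$;
\item add a complementary gadget $w'_t$ joined to $V(M_t)\setminus e^{(t)}_{r_t}$, with two pendants $s^x_t,s^y_t$ hanging from $w'_t$.
\end{itemize}
In every block covering it, a regular vertex $u$ is adjacent to $w_t$ and to exactly one of its $M_t$-partner or $w'_t$. Hence $\deg(u)=2\deg_H(u)$ is public, and $\deg(w_t)=\deg(w'_t)=2N$. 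The vertex count is $2n+4B=\Theta(n)$. Every adjacency or neighbour query costs at most one \rmif{} query; for example, slot $j$ of $w'_t$ is $x^{(t)}_j$ or $s^x_t$ according to $O_r(t,j)$.

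For the triangle count, the only regular--regular edges are selected edges. A selected edge with both endpoints in $V(M_t)$ lies in $M_t$ by inducedness, so it equals $e^{(t)}_{r_t}$. Thus $w_t$ lies in exactly one triangle and $w'_t$ in none. This gives exactly $B$ triangles, each revealing $r_t$ for a distinct block, after which your concluding computation goes through.
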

To prove \Cref{thm:triangle listing main}, we introduce the definitions of induced matchings and Ruzsa-Szemeredi graphs in \Cref{subsec:induced-matching-RS-graph} to construct graph instances and prove a reduction in \Cref{subsec:reduction-from-RMIF-tri-listing}.

\subsection{Induced matchings and Ruzsa-Szemeredi graphs}\label{subsec:induced-matching-RS-graph}

\begin{definition}[Induced Matching] Given a graph $G=(V,E)$, we say a set of edges $E'\subseteq E$ is an induced matching of $G$ if $E'$ is a matching and there is a node subset $S \subseteq V$ such that $E'$ is exactly the edge set of the subgraph induced by $S$.
\end{definition}
\begin{definition}
    Given some integers $r,t\geq 1$, we call a graph $G$ an $(r,t)$-Ruzsa-Szemeredi graph if there exists a partition of its edges into $t$ sets of size $r$, such that each set constitutes an induced matching of $G$.
\end{definition}
These graphs were first introduced in the famous paper by Ruzsa and Szemeredi \cite{ruzsa1978triple}, in which they prove that there exists no $n$-node $(r,t)$-RS graph $G=(V,E)$ with $r =\Omega(n)$ and $t =\Omega(n)$. In addition, they constructed the following result, based on the result of Behrend \cite{behrend1946sets}.

\begin{theorem}[ \cite{behrend1946sets,ruzsa1978triple}]\label{thm:rt RS-graph}
    There exist $n$-node $(\frac{n}{2^{O(\sqrt{\log n})}},\frac{n}{3})$-Ruzsa-Szemeredi graphs. 
\end{theorem}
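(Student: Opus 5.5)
The plan is to prove \Cref{thm:rt RS-graph} with the classical Ruzsa--Szemer\'edi construction from a Behrend set. The input is Behrend's theorem \cite{behrend1946sets}: for every $m$ there is a set $S\subseteq[m]$ with no nontrivial three-term arithmetic progression and $|S|\ge m/2^{O(\sqrt{\log m})}$. I will assume this result and not reprove it. The rest is an explicit bipartite graph together with a short verification.

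\emph{Construction.} Let $L=n/2$, assuming divisibility by $6$ and handling rounding by padding with isolated vertices. Take two copies $U$ and $W$ of $[L]$, and let $S\subseteq[L/6]$ be a Behrend set. For each centre $x\in[L/6,5L/6)$, define
\[
M_x=\{(x-s)_U\,(x+s)_W : s\in S\}.
\]
Every endpoint lies in $[L]$. Each $M_x$ is a matching of size $|S|$, because distinct $s$ give distinct endpoints on both sides. There are $2L/3=n/3$ centres, and $|S|\ge (n/12)/2^{O(\sqrt{\log n})}=n/2^{O(\sqrt{\log n})}$. If some count needs to be exact, I will keep exactly $n/3$ centres and, if necessary, trim $S$ to a fixed size $r$; subsets of $S$ are still progression-free.

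\emph{Partition.} Let $G$ be the union of all $M_x$. An edge $u_U w_W$ of $G$ determines its centre $x=(u+w)/2$ and its parameter $s=(w-u)/2$ uniquely. Hence the matchings $M_x$ are pairwise disjoint and partition $E(G)$.

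\emph{Inducedness.} This is the key step, and the only place the progression-free property enters. Suppose an edge $(y-s')_U(y+s')_W$ of $G$ has both endpoints in $V(M_x)$. Then $y-s'=x-s_1$ and $y+s'=x+s_2$ for some $s_1,s_2\in S$. Subtracting gives $2s'=s_1+s_2$, so $s_1,s',s_2$ is a three-term progression in $S$. By Behrend's property it is trivial, so $s_1=s'=s_2$, which gives $y=x$ and places the edge in $M_x$. Therefore $V(M_x)$ induces exactly $M_x$. This yields an $(n/2^{O(\sqrt{\log n})},\,n/3)$-Ruzsa--Szemer\'edi graph on $n$ vertices.

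The main obstacle is the inducedness check, which reduces to the progression-free property exactly as above. The remaining work is bookkeeping: choosing the ranges so that all endpoints stay in $[L]$ while giving $n/3$ matchings, and absorbing the constant loss $L/6$ versus $n$ into $2^{O(\sqrt{\log n})}$.
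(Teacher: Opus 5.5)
Your proposal is correct and is the standard Behrend-set construction of Ruzsa--Szemer\'edi graphs, which is what the paper relies on by citing \cite{behrend1946sets,ruzsa1978triple}; the paper gives no proof of its own. Your centre range $[L/6,5L/6)$, the unique recovery $x=(u+w)/2$, $s=(w-u)/2$, and the reduction of inducedness to $2s'=s_1+s_2$ all check out, and because your graph is already bipartite it would also supply directly the bipartite graph that the paper builds afterwards via a double cover.
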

In this paper, we utilize the  bipartite $(r,t)$-RS graph with $r=\frac{n}{2^{O(\sqrt{\log n})}}$ and $t=\frac{n}{3}$ constructed from $\Cref{thm:rt RS-graph}$.
\begin{lemma}\label{cor:base graph}
    There exists a bipartite graph $G_{RS}=(L,R,E_{RS})$ with $|L|=|R|=n$ such that it is a $(\frac{n}{2^{O(\sqrt{\log n})}},\frac{n}{3})$-Ruzsa-Szemeredi graph.
\end{lemma}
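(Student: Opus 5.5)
We obtain the bipartite graph from the general (possibly non-bipartite) Ruzsa--Szemer\'edi graph of \Cref{thm:rt RS-graph} by the standard bipartite double cover. Start from an $n$-node $(r,t)$-RS graph $G=(V,E)$ with $r=\frac{n}{2^{O(\sqrt{\log n})}}$ and $t=\frac n3$, with edge partition $E=M_1\cup\cdots\cup M_t$ into induced matchings of size $r$. Define $L=\{v_L: v\in V\}$, $R=\{v_R:v\in V\}$, so $|L|=|R|=n$. For every edge $\{u,v\}\in E$, put both $\{u_L,v_R\}$ and $\{v_L,u_R\}$ into $E_{RS}$. For each $i$, let $M_i'=\{\{u_L,v_R\},\{v_L,u_R\}:\{u,v\}\in M_i\}$. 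These $t$ sets partition $E_{RS}$, and each has size $2r$.

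We first check that each $M_i'$ is a matching. Suppose $u_L$ were covered twice. Then $u$ would be covered twice in $M_i$, contradicting that $M_i$ is a matching. The same holds for any vertex of $R$.

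Inducedness is the main step. Let $S_i$ be the vertex set of $M_i$ in $G$, and take $S_i'=\{w_L,w_R: w\in S_i\}$. Any edge of $E_{RS}$ with both ends in $S_i'$ has the form $\{x_L,y_R\}$ with $\{x,y\}\in E$ and $x,y\in S_i$. Since $M_i$ is induced, $\{x,y\}\in M_i$, and therefore $\{x_L,y_R\}\in M_i'$. So $M_i'$ is exactly the edge set induced by $S_i'$.

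This gives a bipartite $(2r,t)$-RS graph. To get exactly $r$ edges per matching, either discard one of each pair of copies, keeping $\{u_L,v_R\}$ for a fixed orientation of each edge, or note that $2r$ is still $\frac{n}{2^{O(\sqrt{\log n})}}$. The first option is cleaner. Fix an arbitrary orientation of $E$ and keep only $\{u_L,v_R\}$ for each oriented edge $(u,v)$. Removing edges keeps each part a matching. It also preserves inducedness: an edge among $S_i'$ in the subgraph is an edge among $S_i'$ in the double cover, hence lies in $M_i'$, and since it survived it lies in the restricted $M_i'$. This yields exactly an $(\frac{n}{2^{O(\sqrt{\log n})}},\frac n3)$-RS bipartite graph with $|L|=|R|=n$.

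The one point needing care is that restricting to a subgraph does not break inducedness. As shown, it does not, because inducedness only forbids extra edges, and removing edges cannot create any.
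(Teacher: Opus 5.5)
Your proposal is correct and follows the paper's proof: you take the bipartite double cover of the Ruzsa--Szemer\'edi graph from \Cref{thm:rt RS-graph} and lift each induced matching $M_i$ to a matching $M_i'$ of size $2r$. Your explicit inducedness check via $S_i'$ spells out the step the paper dismisses as easy to verify, and your optional orientation step (keeping exactly $r$ edges per matching) is a harmless refinement, since the paper simply absorbs the factor $2$ into $\frac{n}{2^{O(\sqrt{\log n})}}$.
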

\begin{proof}[Proof of \Cref{cor:base graph}]
    Let $G=(V,E)$ be the $n$-node $(r=\frac{n}{2^{O(\sqrt{\log n})}},t=\frac{n}{3})$-Ruzsa-Szemeredi graph by \Cref{thm:rt RS-graph}. Let $V=\{v_1, v_2, \dots, v_n\}$. Set up the node sets $L=\{u_1, \dots , u_n\}$ and $R=\{w_1, \dots , w_n\}$. Start from an empty set
    $E_{RS}$.
    For every edge $(v_i,v_j)\in E$, add the edges $(u_i,w_j)$ and $(u_j,w_i)$ to the set $E_{RS}$. Let $G_{RS}=(L,R,E_{RS})$ be the final graph. It is easy to see that $G_{RS}$ is a bipartite graph with $|L|=|R|=n$. In the following, we prove that $G_{RS}$ is a $(2r,t)$-RS graph where $r=\frac{n}{2^{O(\sqrt{\log n})}}$ and $t=\frac{n}{3}$.

    Let $E=E_0 \sqcup E_1 \sqcup \ldots \sqcup E_{t-1}$ be the partition of $E$ such that every $E_i\subseteq E$ is an induced matching of $G$ with size $r$.
    For every $k \in [t]$, set up the edge subset $E_k^*\subseteq E_{RS}$ such that \[
    (u_i, w_j),(u_j, w_i)\in E^*_k \iff (v_i, v_j) \in E_k.
    \]
    It is easy to verify that for every subset $E^*_k \subseteq E_{RS}$, it is an induced matching of $G_{RS}$ with size $|E^*_k|=2r$ and
    $E_{RS}=E^*_0\sqcup \ldots \sqcup E^*_{t-1}$. 
\end{proof}
\subsection{Reduction from RMIF to Triangle-Listing}\label{subsec:reduction-from-RMIF-tri-listing}
For every large integer $n\geq 100$, consider the parameters $B=n/3$ and $N=n/2^{O(\sqrt{\log n})}$ such that the $2n$-node bipartite $(r=N,t=B)$-Ruzsa-Szemeredi graph $G_{RS}=(L,R,E_{RS})$ exists with $|L|=|R|=n$. Fix the graph $G_{RS}$. Note that the edge set $E_{RS}$ can be partitioned as
\[
E_{RS}=E_0 \sqcup E_1 \sqcup \ldots \sqcup E_{B-1}
\]
where each $E_i$ is an induced matching of $G_{RS}$ with size $|E_i|=N$. We label nodes and edges in the following way.
\begin{enumerate}
    \item For each node $u\in L \cup R$, we use $d_{RS}(u)$ to denote its degree and $N_{RS}(u)= \{u_1, u_2,\dots, u_{k}\}$ to node its neighborhood with $k=d_{RS}(u)=|N_{RS}(u)|$. 
    \item For each induced matching $E_i$ of $G_{RS}$ with $i \in [B]$, label its edges and nodes such that 
    \[
    E_i = \{e^{(i)}_0, e^{(i)}_1,\ldots, e^{(i)}_{r-1} \}
    \]
    and every edge $e_j^{(i)}=(x^{(i)}_j, y^{(i)}_j)$ where $x^{(i)}_j \in L$ and  $y^{(i)}_j \in R$. 
    Let $L_i=\{x_j^{(i)}: i\in [N] \}$ and $R_i=\{y_j^{(i)}: i\in [N] \}$ be the node sets of $E_i$. 
\end{enumerate}
Therefore, for every edge $e\in E_{RS}$, we know the indices $j\in [N]$ and $i\in [B]$ such that $e= e^{(i)}_j$ is the $j$-th edge in $E_i$.

\paragraph{Graph Construction} Consider any input $r=(r_i)_{i\in [B]}$ of the RMIF problem where every $r_i \in [N]$. First, we construct a graph $G_r$ from $G_{RS}$ described by $r$. Go through every block $i\in [B]$ with input $r_i \in [N]$, 
\begin{enumerate}
    \item Add the edge $e^{(i)}_{r_i}=(x^{(i)}_{r_i}, y^{(i)}_{r_i}) \in E_i$.
    \item Add a gadget node $w_i$ and the edges $(w_i,u)$ for each node $u\in L_i \cup R_i$.
    \item Add a gadget node $w'_i$ and the edges $(w'_i, u)$ for each node $u \in L_i \cup R_i \setminus \{x^{(i)}_{r_i}, y^{(i)}_{r_i}\}$.
    \item Add two dummy nodes $s_i^x, s_i^y$ and two edges $(w'_i, s_i^x)$ and $(w'_i, s_i^y)$.
\end{enumerate}
See \Cref{fig:1 reduction to triangle finding} for an illustration.
Let $W=\{w_i, w'_i\}_{i \in [B]}$ be the set of gadget nodes and $S=\{s_i^x, s_i^y\}_{i \in [B]}$ be the set of dummy nodes. We call nodes in $L \cup R$ regular nodes.
For every possible input $r=(r_i)_{i\in [B]}$, we use $G_{r}=(V_r, E_r)$ to denote the graph described by $r$ where $V_r =L\cup R\cup W\cup S$ has size $|V_r|=2n + 4B=10n/3$. Every regular node $u \in L \cup R$ has degree $2d_{RS}(u)$ in $G_r$. The gadget node $w\in W$ has degree $2N$ and the dummy node $s \in S $ has degree $1$ in $G_r$.
\begin{observation}\label{obs:triangle-labels}
    For every $r=(r_i)_{i\in[B]}$, the graph $G_r$ contains exactly $B=n/3$ triangles over the nodes $\{x^{(i)}_{r_i}, y^{(i)}_{r_i}, w_i\}$ for every $i\in [B]$.
\end{observation}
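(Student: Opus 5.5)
We need to show that the triangles of $G_r$ are exactly the $B$ triples $\{x^{(i)}_{r_i},y^{(i)}_{r_i},w_i\}$. The plan is to first exhibit these triangles and then classify every triangle by how many gadget or dummy nodes it contains.

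\emph{Existence.} For each $i\in[B]$, the edge $(x^{(i)}_{r_i},y^{(i)}_{r_i})$ is added in step 1. The node $w_i$ is adjacent to all of $L_i\cup R_i$, which contains both endpoints. So each triple is a triangle. These $B$ triangles are distinct because they contain distinct gadget nodes $w_i$.

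\emph{Structure of $E_r$.} Every edge of $G_r$ falls into one of three types:
\begin{itemize}
\item edges between regular nodes, which are exactly the selected edges $e^{(i)}_{r_i}$, one per block $i$;
\item edges between a regular node and a gadget node $w_i$ or $w'_i$, with the regular endpoint in $L_i\cup R_i$;
\item the two edges from $w'_i$ to its dummy nodes $s_i^x$ and $s_i^y$.
\end{itemize}
In particular, no two gadget nodes are adjacent, and the gadget nodes form an independent set. Dummy nodes have degree $1$, so they lie on no triangle.

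\emph{Case analysis.} Consider any triangle $\Delta$. It contains no dummy node. It contains at most one gadget node, since gadget nodes are pairwise nonadjacent.
\begin{itemize}
\item If $\Delta$ has no gadget node, it consists of three regular nodes. Since $L\cup R$ is bipartite, the regular subgraph has no triangles.
\item Otherwise $\Delta=\{g,u,v\}$ with $g\in\{w_i,w'_i\}$ and $u,v$ regular. Then $(u,v)$ is a regular edge, so it equals $e^{(j)}_{r_j}$ for some block $j$. Moreover $u,v\in L_i\cup R_i$.
\end{itemize}

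\emph{Main step: the edge lies in block $i$.} This is where the induced-matching property is used. The edge $e^{(j)}_{r_j}$ belongs to $E_{RS}$ and has both endpoints in $V(E_i)=L_i\cup R_i$. Since $E_i$ is an induced matching of $G_{RS}$, it equals the edge set of the subgraph induced on its vertex set. Hence $e^{(j)}_{r_j}\in E_i$. The sets $E_0,\dots,E_{B-1}$ are disjoint, so $j=i$ and $\{u,v\}=\{x^{(i)}_{r_i},y^{(i)}_{r_i}\}$.

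\emph{Finishing.} The gadget $g$ cannot be $w'_i$, because $w'_i$ is not adjacent to $x^{(i)}_{r_i}$ or $y^{(i)}_{r_i}$. Therefore $g=w_i$. So the triangles are exactly the $B=n/3$ listed ones.
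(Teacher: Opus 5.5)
Your argument is correct and complete. The paper states this as an observation without a written proof, and your case analysis matches the reasoning it alludes to when it says that inducedness prevents unintended triangles: dummies have degree one, gadgets are pairwise nonadjacent, regular triples are excluded by bipartiteness, and the induced-matching property of $E_i$ together with disjointness of the $E_j$ forces the regular edge under $w_i$ or $w'_i$ to be $e^{(i)}_{r_i}$. The one step you leave implicit is immediate: $E_i=E(G_{RS}[S])$ for some $S$ gives $E_i=E(G_{RS}[L_i\cup R_i])$, since $L_i\cup R_i\subseteq S$. Your observation that $w'_i$ lies on no triangle also shows that the set $\{w'_i,\ldots\}$ in the paper's reduction proof is a typo for $w_i$.
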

\begin{figure}
    \centering
    \includegraphics[width=0.5\linewidth]{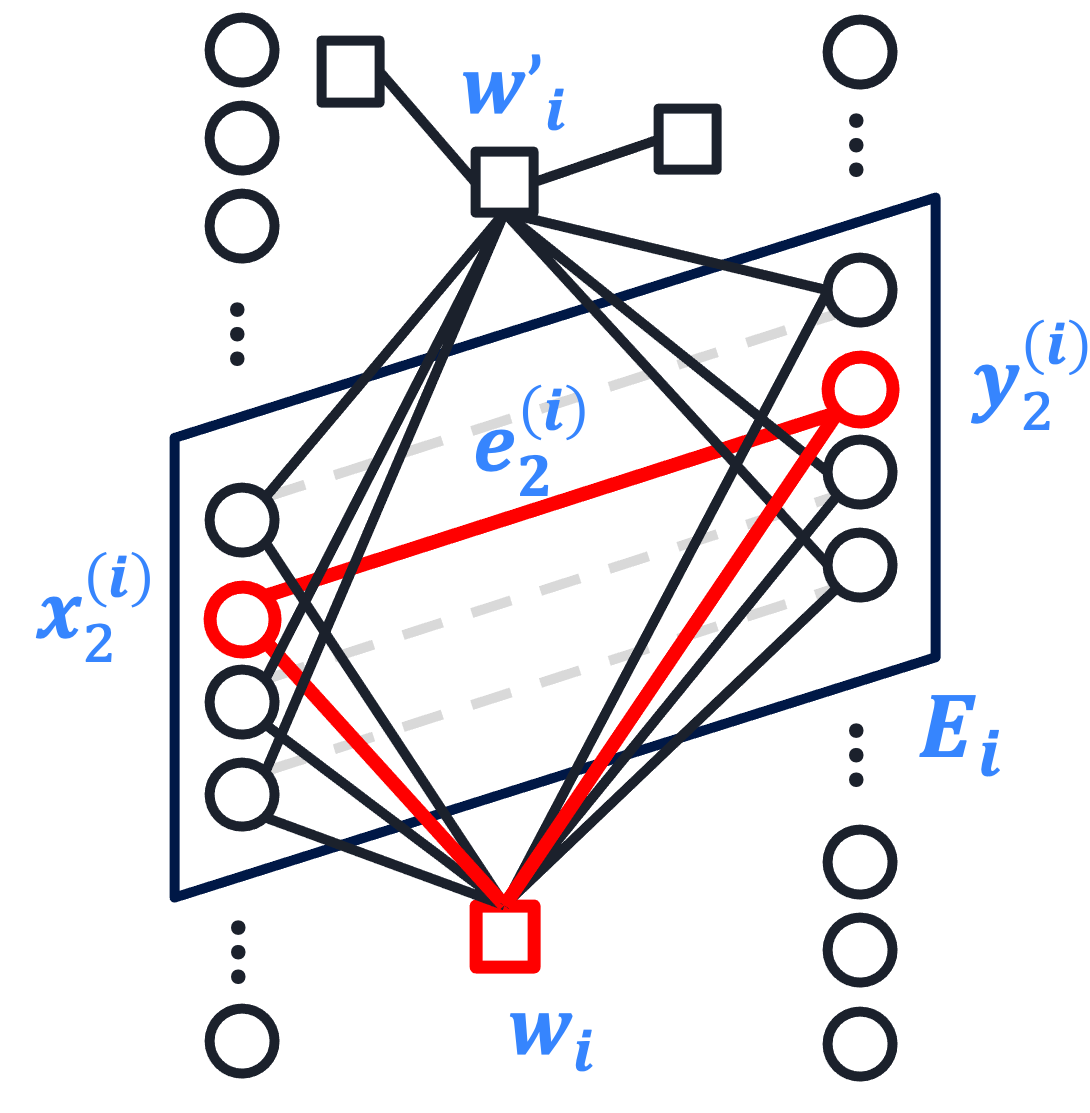}
    \caption{Consider any induced matching $E_i \subseteq E_{RS}$ with $i\in [B]$. The marked item for block $i$ has value $2$. All unselected edges of $E_i$ are represented by gray dashed lines.
    In addition, the triangle over $\{x_2^{(i)}, y_2^{(i)}, w_i\}$ is highlighted in red.
    }
    \label{fig:1 reduction to triangle finding}
\end{figure}
\paragraph{Reduction from RMIF} For every large integer $n\geq 100$, let $B=n/3$ and $N=n/2^{O(\sqrt{\log n})}$ be the chosen parameters and $\mathcal{G}=\{G_{r}\}$ be the constructed graph class. Note that $\mathcal{G}=\{G_{r}\}$ consists of the graph $G_r=(V_r, E_r)$ for every possible input $r=(r_t)_{t\in [B]}$. Every graph $G_r \in \mathcal{G}$ has size $|V_r|=10n/3$ and contains $B=n/3$ triangles.
We prove the following main result.

\begin{theorem}\label{thm:Reduction-RMIF-to-T-Listing}
    If there exists an algorithm $\mathcal{B}$ (classical or quantum) that solves $(\frac{10n}{3},\frac{n}{3}, d)$-$\trianglelisting$ over $\mathcal{G}$ using $Q$ graph queries, there exists an algorithm $\mathcal{A} $ (respectively classical or quantum) that solves $(B=\frac{n}{3},N=\frac{n}{2^{O(\sqrt{\log n})}},d)$-$\rmif$ using $O(Q)$ $\rmif$ oracle queries.
\end{theorem}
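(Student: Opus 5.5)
The plan is to verify the two hypotheses of \Cref{fact:clean-coherent-graph-reduction} for the map $r\mapsto G_r$, with $\mathsf{P}_{\mathrm{src}}=(B,N,d)$-$\rmif$ and $\mathcal{D}_{\mathrm{src}}$ the uniform distribution on $[N]^B$. The map $r\mapsto G_r$ is injective, so uniform $r$ induces the uniform distribution on $\mathcal{G}$. The Fact then yields an $O(cQ)$-query algorithm $\mathcal{A}$ with at least the same success probability. The classical case is the same argument with the simulation run classically.

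\textbf{Step 1 (output conversion; proof of \Cref{obs:triangle-labels}).} I will classify all triangles of $G_r$.
\begin{itemize}[label=$\circ$]
\item A triangle cannot lie entirely in $L\cup R$, since $G_r[L\cup R]\subseteq G_{RS}$ is bipartite.
\item Dummy nodes have degree $1$, so they lie on no triangle.
\item Two gadget nodes are never adjacent, so a triangle contains at most one gadget node $w\in\{w_i,w'_i\}$.
\item Such a triangle therefore consists of $w$ and an edge of $G_r$ inside $N(w)\subseteq L_i\cup R_i$. Since $E_i$ is an induced matching, the only $G_{RS}$-edges inside $L_i\cup R_i$ are those of $E_i$. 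Edges selected for other blocks $i'\neq i$ lie in $E_{i'}$, so they cannot appear there either. Hence the only such edge present in $G_r$ is $e^{(i)}_{r_i}$.
\item The endpoints of $e^{(i)}_{r_i}$ are excluded from $N(w'_i)$, so only $w_i$ closes a triangle with this edge.
\end{itemize}
This gives exactly one triangle $\{x^{(i)}_{r_i},y^{(i)}_{r_i},w_i\}$ per block. Any list of $\lceil dB\rceil$ triangles therefore names $\lceil dB\rceil$ distinct blocks $i$, identified by their gadget node $w_i$. For each, the edge index $j$ with $\{x,y\}=e^{(i)}_j$ equals $r_i$, and reading off $j$ uses no oracle calls.

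\textbf{Step 2 (simulating the three oracles with at most one $\rmif$ query each).}
\begin{itemize}[label=$\circ$]
\item \emph{Degrees} are public, as listed above. For a regular node $u$, each block $i$ with $u\in L_i\cup R_i$ contributes exactly two neighbours: $w_i$, and either the matching partner of $u$ in $E_i$ or $w'_i$. This gives $\deg(u)=2d_{RS}(u)$ independently of $r$.
\item \emph{Adjacency.} For a pair of regular nodes, first check publicly whether the pair is some $e^{(i)}_j\in E_{RS}$; if so, answer $O_r(i,j)$. For a pair $(w'_i,u)$ with $u$ the endpoint of $e^{(i)}_j$, answer $1-O_r(i,j)$. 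All other pairs are public.
\item \emph{Neighbor queries for $u\in L\cup R$.} Fix a public ordering of the blocks containing $u$. For the $k$-th such block $i$, slot $2k$ is $w_i$. Slot $2k+1$ is the partner of $u$ in $e^{(i)}_j$ if $O_r(i,j)=1$, and $w'_i$ otherwise.
\item \emph{Neighbor queries for $w'_i$.} Slots $2j,2j+1$ hold $x^{(i)}_j,y^{(i)}_j$ if $O_r(i,j)=0$, and $s^x_i,s^y_i$ otherwise.
\item \emph{Neighbor queries for $w_i$ and dummy nodes} are public.
\end{itemize}
Each response is thus a deterministic function of public data and at most one oracle bit, so $c=1$. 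This is implemented in compute--write--uncompute form, as the Fact requires.

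\textbf{Main obstacle.} The delicate point is the neighbor oracle. The list positions must be fixed independently of $r$, so that a single query suffices and no search over the list is needed. The slot-pairing above is designed precisely so that each slot's content depends on one hidden bit.
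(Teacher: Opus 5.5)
Your proposal is correct and follows the paper's proof. Degrees are public. Each adjacency and ordered-neighbor query is answered by a case analysis using at most one $\rmif$ call; your interleaved slot ordering versus the paper's split halves makes no difference. Each listed triangle $\{x^{(i)}_j,y^{(i)}_j,w_i\}$ is decoded to $r_i=j$ with no further queries.

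You also prove \Cref{obs:triangle-labels} and the injectivity of $r\mapsto G_r$, which the paper only asserts. One small imprecision: strictly, $N(w'_i)$ also contains $s^x_i,s^y_i$, but you had already excluded dummy nodes, so nothing breaks. You correctly place the triangle at $w_i$; the paper's proof text writes $w'_i$, which is a slip.
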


\begin{proof}
Let $\mathcal{B}$ be the given algorithm that solves $(10n/3,n/3, d)$-$\trianglelisting$ over $\mathcal{G}$ using $Q$ queries. Firstly, note that the algorithm $\mathcal{B}$ can access to adjacency queries, neighborhood queries and degree queries to the input graph.

Let all the degree queries be free, since the degrees in $G_r$, denoted by $\deg_{G_r}(u)$, are predetermined. In the following, we prove that every adjacency query and every neighborhood query over $\mathcal{G}$ can be simulated with $O(1)$ calls to the $\rmif$ oracle. The proof is explicitly for the classical version of the oracles, and it can be easily lifted to the quantum speedup by calling oracles in superposition since we prove it via case-by-case analysis.
\begin{description}
    \item[Adjacency Queries] Given any two input nodes $u,v\in V_r$, we first categorize $u,v$ in the categories of the regular nodes, the gadget nodes, and the dummy nodes. 
    Now we proceed via the following cases:
    \begin{enumerate}
        \item Assume (w.l.o.g.) $u\in \{s_i^x, s_i^y\}$ for some $i\in [B]$. Return $\mathcal{A}(u,v)=1$ if and only if $v=w'_i$.
        \item Assume (w.l.o.g.) $u=w_i$ for some $i\in [B]$. Return $\mathcal{A}(u,v)=1$ if and only if $v \in L_i \cup R_i$.
        \item Assume (w.l.o.g.) $u=w'_i$ for some $i\in [B]$. In addition, assume $v \in L_i \cup R_i$, since otherwise we can return $\mathcal{A}(u,v)=0$.
        Let $j \in [N]$ be the index such that $v=x^{(i)}_j$ (w.l.o.g.). Query $O_r(i, j)$ and return $\mathcal{A}(u,v)=1$ if and only if $O_r(i, j)=0$.
        \item In the last case, both $u,v$ are regular nodes. If $(u,v)\notin E_{RS}$, then $\mathcal{A}(u,v)=0$. Otherwise, let $i \in [B]$ and $j\in [N]$ be the indices such that $(u,v)=e^{(i)}_j\in E_i$. Query $O_r(i, j)$ and return $\mathcal{A}(u,v)=1$ if and only if $O_r(i, j)=1$.
    \end{enumerate}
    \item[Neighborhood Queries] Given an input node $u\in V_r$ and an integer $k \in \mathbb{N}_{\geq 0}$, we assume $k \in [\deg_{G_r}(u)]$ since $\deg_{G_r}(u)$ is given for free. For each regular node $u\in L\cup R$ in $G_{RS}$, we publicly order its neighbors such that $N_{RS}(u)=\{u_0,u_1, \ldots, u_{d_{RS}(u)-1}\}$. For each neighbor $u_q \in N_{RS}(u)$, let $i_q \in [B]$ and $j_q \in [N]$ be the public indices such that
    \[
    e^{(i_q)}_{j_q} = (u,u_q)\in E_{i_q}, \quad \forall q\in [d_{RS}(u)].
    \]

    \begin{itemize}
        \item For the adjacency list of a regular node $u\in L \cup R$, it has length $2 d_{RS}(u)$. Given $k\in [d_{RS}(u)]$, set $q=k$ and return 
        \[
        \mathcal{N}(u,k=q)=w_{i_k}.
        \]
        For other input $k\in [d_{RS}(u), 2d_{RS}(u)]$, set $q=k-d_{RS}(u)$ and compute $O_r(i_q, j_q)$. Return
        \begin{equation*}
            \mathcal{N}(u,k=d_{RS}(u)+q)=\begin{cases}
                 u_{q} \quad & \text{if }  O_r(i_q, j_q)=1, \\
                 w'_{i_q} \quad & \text{if } O_r(i_q, j_q)=0.
             \end{cases}
        \end{equation*}
        \item For the adjacency list of a gadget node $u=w_i \in W$ for some block $i\in [B]$, it has length $2N$. Return
        \[
        \mathcal{N}(u,k)=\begin{cases}
            x^{(i)}_k \quad & \text{if } k \in [N],  \\
            y^{(i)}_{k-N} & \text{if } k \in [N,2N].
        \end{cases}
        \]
        \item For the adjacency list of a gadget node $u=w'_i \in W$ for some block $i\in [B]$, it has length $2N$. Given $k \in [N]$, query $O_r(i, k)$ and return
        \[
        \mathcal{N}(u,k)=\begin{cases}
            x^{(i)}_k \quad & \text{if } O_r(i, k)=0,  \\
            s_i^{x} & \text{if } O_r(i, k)=1.
        \end{cases}
        \]
        Given $k \in [N,2N]$, query $O_r(i, k-N)$ and return
        \[
        \mathcal{N}(u,k)=\begin{cases}
            y^{(i)}_{k-N} \quad & \text{if } O_r(i, k-N)=0,  \\
            s_i^{y} & \text{if } O_r(i, k-N)=1.
        \end{cases}
        \]
        \item For the adjacency list of a dummy node $u\in \{s_i^x, s_i^y\}$ for some block $i\in [B]$, it has length $1$. Return $\mathcal{N}(u, k)=w'_i$ with $k=0$.
    \end{itemize}
    Therefore, every adjacency query and every neighborhood query over $\mathcal{G}$ can be simulated with $O(1)$ calls to the $\rmif$ oracle.

    Given a random input $r=(r_t)_{t\in [B]}$, we construct the corresponding graph $G_r$ described by $r$. By \Cref{obs:triangle-labels}, there are exactly $B=n/3$ triangles and every triangle is over the node set $\{w'_i, x^{(i)}_{r_t}, y^{(i)}_{r_t}\}$ for $i \in [B]$. 
    
    We can simulate the algorithm $\mathcal{B}$ that solves $(10n/3,n/3,d)$-$\trianglelisting$ over $G_r$. For every triangle over the node $\{w'_i, u, v\}$ returned by the algorithm $\mathcal{B}$, it implies that $r_i = j$ where $j\in [N]$ is the index such that
    edge $(u,v)=e^{(i)}_j$.

    Therefore, given the output of the algorithm $\mathcal{B}$ with at least $dB$ distinct triangles in $G_r$, we can return the marked item $r_i \in [N]$ for at least $dB$ distinct blocks.
\end{description}
\end{proof}
\paragraph{Proof of \Cref{thm:triangle listing main}} Now we complete the proof of \Cref{thm:triangle listing main}.

For every large integer $n\geq 100$, let $B=n/3$ and $N=n/2^{O(\sqrt{\log n})}$ be the chosen parameters and
$\mathcal{G}=\{G_{r}\}$ be the constructed graph class. 

Consider any algorithm $\mathcal{B}$ that solves $(10n/3,n/3,d)$-$\trianglelisting$ over $\mathcal{G}$ with $Q$ quantum graph queries. By \Cref{thm:Reduction-RMIF-to-T-Listing}, the task $(B,N,d)$-$\rmif$ can be solved with $Q'=O(Q)$ 
quantum queries to the $\rmif$ oracle. 
By 
\Cref{thm:rmif}, $Q'=\Omega(B\sqrt{N})$. Therefore
\[
Q=\Omega(B \sqrt{N}) = \Omega\left(n\cdot \left(\frac{n}{2^{O(\sqrt{\log n})}}\right)^{1/2}\right)
=
\Omega\left(\frac{n^{3/2}}{2^{O(\sqrt{\log n})}}\right).
\]

\section{Lower Bounds for the k-Spanner Problem}\label{sec:k-spanner-reduction}
In this section, we prove \Cref{thm:generic-lower-bound} for the k-spanner construction.

\begin{problem}[$(n,k)$-$\spanner$]
Given some integer $k \geq 1$ and an $n$-node graph class $\mathcal{G}_n$, randomly sample a graph $G$ from $\mathcal{G}_n$ and the task is to output an edge set
$E' \subseteq E$ such that the subgraph $H=(V,E')$ is a $k$-spanner of $G$.

In the general graph query model, the sampled graph can be accessed by adjacency queries, neighborhood queries and degree queries.
The success probability of the algorithm should be at least $2/3$ over both the randomness of the input and the randomness of the algorithm.
\end{problem}

\begin{theorem}\label{thm:generic-lower-bound}
For every fixed integer $k\ge 7$, the $(n,k)$-$\spanner$ has the quantum query complexity
\[
\Omega\!\left(n^{1+\frac{1}{2\mu_k}}\right)
\]
where $\mu_k=\left\lfloor(k+1)/3\right\rfloor
-\mathbbm{1}[k\equiv 2\pmod 6]$.
\end{theorem}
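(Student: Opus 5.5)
The plan is to reduce $(B,N,d)$-$\bhvf$ to $(n,k)$-$\spanner$ through \Cref{fact:clean-coherent-graph-reduction} and then invoke \Cref{thm:bhvf}. I will build a public family $\{G_r\}$ indexed by $\bhvf$ inputs $r=(r_t^A,r_t^B)_{t\in[B]}$. The parameters $B,N$ will be tuned against $n$ so that $B\sqrt N=\Theta(n^{1+\frac1{2\mu_k}})$.

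\textbf{Construction.} Fix a public bipartite skeleton $\Sigma$ on $s$ vertices with $B$ edges. I will try $\Sigma$ dense, e.g.\ a complete bipartite graph with $B=\Theta(s^2)$, rather than a high-girth graph. Each skeleton vertex $v$ is replaced by a coordinate array $X_v$ of $N$ graph vertices, so $n=\Theta(sN)$ plus gadget vertices.
\begin{itemize}
\item Each skeleton edge $t=\{u,v\}$ carries block $t$ and contributes exactly one hidden edge, between $X_u[r_t^A]$ and $X_v[r_t^B]$.
\item As in the triangle construction of \Cref{subsec:reduction-from-RMIF-tri-listing}, public degree-balancing gadgets are attached. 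Each array vertex has one port per incident block. That port leads either to the hidden partner or to a public gadget vertex, so every degree is independent of $r$.
\item The gadgets are wired so that any path through a gadget, between two array vertices that are not hidden-adjacent, has length at least $3$. This factor $3$ is what produces $\mu_k\approx k/3$.
\end{itemize}
Balancing $s$ and $N=s^{c}$ with $n=sN$ gives the target $B\sqrt N=n^{1+1/(2\mu_k)}$. For $B=s^2$ this means $c=(2\mu_k-1)/(\mu_k+1)$. I expect the floor and the $k\equiv 2\pmod 6$ correction to come from the exact parity of path lengths through the gadgets.

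\textbf{Oracle simulation and decoding.} I will check, case by case as in \Cref{thm:Reduction-RMIF-to-T-Listing}, that each graph oracle costs $O(1)$ $\bhvf$ queries.
\begin{itemize}
\item Degree queries are free.
\item Adjacency queries between $X_u[a]$ and $X_v[b]$ use one forward query $O_r(t,A,a)$, which returns $r_t^B+1$ exactly when $a=r_t^A$.
\item Neighbor queries from $X_u[a]$ on the port of block $t$ use a forward query. From the $X_v$ side they use a backward query, so that the returned value names the partner vertex. This is exactly why the bidirectional oracle is needed.
\item Gadget-vertex neighbor queries use one query to decide whether a slot is occupied by the hidden endpoint.
\end{itemize}
For decoding, a hidden edge output by the spanner algorithm reveals both $r_t^A$ and $r_t^B$. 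By \Cref{lem:mandatory-edge-criterion}, a hidden edge with no replacement path of length at most $k$ must appear in every $k$-spanner. So it suffices that a constant fraction $d$ of the blocks have this property.

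\textbf{Main obstacle.} The hardest step is the replacement-path analysis: showing that with probability $1-o(1)$ over uniform $r$, at least $dB$ hidden edges have no alternative path of length at most $k$, even though $G_r$ has girth four.
\begin{itemize}
\item I would first classify short paths by how many hidden edges $j$ they use. Each maximal public segment between consecutive hidden edges costs at least $3$ edges, so $j\le\mu_k$.
\item Next, I would run a first-moment count. A fixed skeleton walk through $j$ blocks closes up on the prescribed coordinates with probability about $N^{-(j)}$. The number of such walks is about $s^{j}$, up to the skeleton degree.
\item The choice of $c$ should make the expected number of replacement paths per hidden edge $o(1)$.
\end{itemize}
If this expectation comes out only $O(1)$, I would fall back on Markov plus a constant-fraction argument. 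A modest independence argument, or even linearity of expectation, gives that $\mathbb E[\#\text{blocks with a short replacement path}]\le B/2$. Hence, with constant probability, at least $B/2$ edges are mandatory. Success probability $2/3$ would then be converted to $\bhvf$ success, after boosting by repetition, with $d$ a constant, which is enough for \Cref{thm:bhvf}. Getting the exact parity bookkeeping in $\mu_k$ right is the part I expect to need the most care.
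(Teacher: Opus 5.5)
There is a genuine gap. It sits in the two steps you treated as routine: the first-moment count and the dense skeleton.

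\textbf{The count is mis-modeled.} You treat every junction of a replacement path as a coincidence of probability $1/N$. But the gadget your oracle simulation describes is one vertex per block and cluster, with a slot for every coordinate, exactly one of which is taken by the hidden endpoint. Such a vertex is adjacent to $N-1$ coordinates of one cluster. So a path can switch coordinates inside a cluster at cost $2$, and such a switch is available with probability $1-1/N$. This also contradicts your requirement that gadget segments have length at least $3$.

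The right accounting is as follows. A replacement path with $h$ hidden edges ($h$ odd, $h\ge 3$) and $j$ switches has length $h+2j$. Its probability is about $N^{-(h+1-j)}$, because only the $h+1-j$ non-switching junctions, including the two endpoints, cost $1/N$. The parameter $\mu_k$ is the largest number of switches compatible with $h+2j\le k$ and $j\le h+1$. It does not bound the number of hidden edges. The factor $1/3$ comes from one hidden edge plus one length-$2$ switch per junction, not from length-$3$ gadgets.

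\textbf{The complete bipartite skeleton then fails once $\mu_k\ge 3$, i.e.\ $k\ge 9$.}
\begin{itemize}
\item Through a fixed skeleton edge there are $\Theta(s^2)$ skeleton walks of length $3$. Paths with $h=3$ and $j=3$ (length $9$) therefore have expected number $\Theta(s^2/N)=s^{2-c}\to\infty$, since your $c=(2\mu_k-1)/(\mu_k+1)<2$.
\item For $k\ge 11$, a path around any skeleton $4$-cycle through the edge, switching wherever needed, has length at most $11$. So essentially no hidden edge is essential.
\item Replacing the switch cost by any fixed $\ell$ only moves the failure threshold to $k\ge 3+3\ell$.
\item A dense skeleton with $s=\epsilon N$ does work for $k\in\{7,8\}$.
\end{itemize}

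\textbf{The missing idea} is to decouple the skeleton's degree from its size. The paper takes a random bipartite skeleton with $S=N^{\mu_k-1}$ vertices per side and edge probability $p=cN/S$, where $c=0.01/k$, so the degree is about $cN$. Then:
\begin{itemize}
\item The expected number of length-$h$ skeleton walks between fixed endpoints is $(cN)^h/S$.
\item The expected count per hidden edge is at most $\binom{h+1}{j}c^hN^{j-\mu_k}\le 2^{h+1}c^h$. This is small exactly because $j\le\mu_k$.
\item Averaging over skeletons and applying Markov fixes one skeleton that, with probability at least $39/40$, has at least $B/5$ essential hidden edges.
\item Since $n=\Theta(B)=\Theta(N^{\mu_k})$, the bound $B\sqrt N$ becomes $\Theta(n^{1+1/(2\mu_k)})$.
\end{itemize}
Boosting is unnecessary. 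The $\bhvf$ lower bound already holds at success probability $1/3$, and $39/40-1/3>1/3$.
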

Note that $\mu_7 = \mu_8 = 2$. For large integer $k$, $\mu_k = k/3+O(1)$. We have the following corollaries.

%

\begin{corollary} 
The $(n,k=7)$-$\spanner$ and  $(n,k=8)$-$\spanner$ has the quantum query complexity $\Omega(n^{5/4})$.
\end{corollary}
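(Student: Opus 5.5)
This corollary is a direct specialization of \Cref{thm:generic-lower-bound}, so the plan is simply to evaluate $\mu_k$ at $k=7$ and $k=8$ and substitute into the exponent $1+\frac{1}{2\mu_k}$.

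\textbf{Case $k=7$.} Here $\lfloor (7+1)/3\rfloor=\lfloor 8/3\rfloor=2$. Since $7\equiv 1\pmod 6$, the indicator $\ind[k\equiv 2\pmod 6]$ vanishes, so $\mu_7=2$.

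\textbf{Case $k=8$.} Here $\lfloor 9/3\rfloor=3$. Since $8\equiv 2\pmod 6$, the indicator equals $1$, so $\mu_8=3-1=2$.

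In both cases $1+\frac{1}{2\mu_k}=1+\frac14=\frac54$. Invoking \Cref{thm:generic-lower-bound}, which applies because $7,8\ge 7$, gives the $\Omega(n^{5/4})$ bound for both $(n,7)$-$\spanner$ and $(n,8)$-$\spanner$.

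There is no genuine obstacle here. The only point needing care is the correction term for $k=8$: without the indicator one would get $\mu_8=3$, i.e.\ the weaker exponent $7/6$. This confirms that the value $\mu_8=2$ stated just before the corollary is exactly what the definition of $\mu_k$ in \Cref{thm:generic-lower-bound} produces. All substantive work lies in the proof of \Cref{thm:generic-lower-bound} itself, via the reduction from $\bhvf$ and \Cref{thm:bhvf}.
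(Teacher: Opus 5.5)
Your proposal is correct and is exactly the paper's argument: the paper notes that $\mu_7=\mu_8=2$ and reads off the exponent $1+\frac{1}{2\mu_k}=\frac54$ from \Cref{thm:generic-lower-bound}. Your values also agree with \Cref{def:mismatch-budget}, which is the form of $\mu_k$ the proof of the theorem actually uses: for both $k=7$ and $k=8$ the maximum there is attained at $h=3$, $j=2$.
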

\begin{corollary}
    For sufficiently large integer $k$, the $(n,k)$-$\spanner$ has the quantum query complexity 
\[
\Omega\!\left(n^{1+\frac{3}{2k}+O(1/k^2)}\right).
\]
\end{corollary}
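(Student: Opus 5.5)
The plan is to derive this directly from \Cref{thm:generic-lower-bound}, which gives $\Omega\bigl(n^{1+\frac{1}{2\mu_k}}\bigr)$ for every fixed $k\ge 7$. What remains is to expand the exponent $\frac{1}{2\mu_k}$ for large $k$. No new query-complexity argument is needed; the work is bounding $\mu_k$ and doing a first-order expansion.

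First, pin down $\mu_k$ up to an additive constant. Since $\frac{k+1}{3}-1<\left\lfloor\frac{k+1}{3}\right\rfloor\le\frac{k+1}{3}$ and $\ind[k\equiv 2\pmod 6]\in\{0,1\}$, we can write $\mu_k=\frac{k}{3}+c_k$ with $c_k\in\left(-\frac{5}{3},\frac{1}{3}\right]$. In particular $|c_k|\le 2$ uniformly in $k$, and $\mu_k\ge \frac{k}{3}-2>0$ for $k\ge 7$.

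Next, expand the exponent:
\[
\frac{1}{2\mu_k}=\frac{3}{2k}\cdot\frac{1}{1+3c_k/k}.
\]
For $k\ge 12$ we have $|3c_k/k|\le \frac12$. Using $\left|\frac{1}{1+x}-1\right|\le 2|x|$ for $|x|\le\frac12$, this gives
\[
\left|\frac{1}{2\mu_k}-\frac{3}{2k}\right|\le \frac{3}{2k}\cdot\frac{6|c_k|}{k}\le\frac{18}{k^2}.
\]
Hence $1+\frac{1}{2\mu_k}=1+\frac{3}{2k}+O(1/k^2)$, and substituting into \Cref{thm:generic-lower-bound} yields the claimed $\Omega\bigl(n^{1+\frac{3}{2k}+O(1/k^2)}\bigr)$.

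The only subtlety is interpreting the $O(1/k^2)$ term. It is a quantity $\epsilon_k$ with $|\epsilon_k|\le 18/k^2$ that depends only on $k$ and not on $n$, and the bound $\Omega(n^{1+1/(2\mu_k)})$ holds exactly with that exponent. So no hidden dependence on $n$ is introduced, and the asymptotic constant is the one from \Cref{thm:generic-lower-bound} for each fixed $k$. I do not expect a real obstacle; the main care point is handling the floor and the indicator uniformly, which the bound $|c_k|\le 2$ does.
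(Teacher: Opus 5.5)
Your proposal is correct and is essentially the paper's argument: the corollary follows by substituting $\mu_k=\frac{k}{3}+O(1)$ into the exponent $1+\frac{1}{2\mu_k}$ of \Cref{thm:generic-lower-bound}. Your explicit bounds $c_k\in(-\frac53,\frac13]$ and $\bigl|\frac{1}{2\mu_k}-\frac{3}{2k}\bigr|\le 18/k^2$ for $k\ge 12$ just make that first-order expansion quantitative.
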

In the following, we first introduce the properties of $k$-spanner in \Cref{subsec:k-spanner} and analyze the properties of random skeleton graphs in \Cref{subsec:random-graph-construction}, which will be applied to the reduction from the BHVF problem to the $\spanner$ in \Cref{subsec:reduction-hbvf-spanner}. Together with \Cref{thm:bhvf}, we prove \Cref{thm:generic-lower-bound}.

\subsection{Introduction to k-Spanner}\label{subsec:k-spanner}
\begin{definition}[Multiplicative $k$-Spanner]
Given some integer $k \geq 1$ and a graph $G=(V,E)$, we say a subgraph $H=(V,E')$ of $G$ is a $k$-spanner of $G$ if for any two nodes $u,v \in V$ with $\dist_G(u,v)<\infty$,
\[
\dist_G(u,v)
\leq
\dist_H(u,v)
\leq
k \cdot \dist_G(u,v).
\]
\end{definition}

\begin{observation}\label{lem:mandatory-edge-criterion}
Given some integer $k \geq 1$ and a graph $G=(V,E)$, let $e=(u,v)\in E$ be an arbitrary edge in $G$. The following statements are equivalent:
\begin{enumerate}[label=(\roman*)]
\item Every $k$-spanner $H$ of $G$ contains the edge $e$.
\item $\dist_{G\setminus e}(u,v)\geq k+1$ where $G\setminus e=(V,E\setminus \{e\})$ denotes the graph obtained after deleting the edge $e$ from $E$.
\item Any cycle in $G$ that contains $e$ has length at least $k+2$.
\end{enumerate}
\end{observation}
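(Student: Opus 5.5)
We prove the three statements equivalent by the cycle of implications (ii)$\Leftrightarrow$(iii) and (i)$\Leftrightarrow$(ii). Throughout, $e=(u,v)$ is an edge of $G$, so $\dist_G(u,v)=1$ and, since $k\ge 1$, the spanner condition for the pair $(u,v)$ reads $\dist_H(u,v)\le k$.

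\emph{(ii)$\Leftrightarrow$(iii).} A cycle of length $\ell$ through $e$ consists of $e$ together with a $u$--$v$ path of length $\ell-1$ in $G\setminus e$. Conversely, a shortest $u$--$v$ path in $G\setminus e$ is a simple path, and it avoids $e$; closing it with $e$ gives a simple cycle of length $\dist_{G\setminus e}(u,v)+1$ that contains $e$. Hence the minimum length of a cycle through $e$ equals $\dist_{G\setminus e}(u,v)+1$, with both sides infinite when no such cycle or path exists. This gives $\dist_{G\setminus e}(u,v)\ge k+1$ if and only if every cycle through $e$ has length at least $k+2$.

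\emph{(ii)$\Rightarrow$(i).} Let $H$ be a $k$-spanner that omits $e$. Then $H\subseteq G\setminus e$, so $\dist_H(u,v)\ge \dist_{G\setminus e}(u,v)\ge k+1>k\cdot\dist_G(u,v)$. This contradicts the spanner property, so every $k$-spanner contains $e$.

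\emph{(i)$\Rightarrow$(ii).} We argue by contraposition: suppose $\dist_{G\setminus e}(u,v)\le k$, and take $H=G\setminus e$. We must check that $H$ is a $k$-spanner of $G$.

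The lower bound $\dist_G\le\dist_H$ is immediate, since $H$ is a subgraph of $G$.

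For the upper bound, take any pair $a,b$ with $\dist_G(a,b)<\infty$ and a shortest $a$--$b$ path $P$ in $G$, of length $\ell=\dist_G(a,b)$. If $P$ avoids $e$, then $\dist_H(a,b)\le \ell$. Otherwise replace the single occurrence of $e$ in $P$ by a $u$--$v$ path of length at most $k$ in $H$. The result is an $a$--$b$ walk in $H$ of length at most $\ell-1+k\le k\ell$, using $\ell\ge1$ and $k\ge1$. So $H$ is a $k$-spanner that does not contain $e$, contradicting (i).

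There is no real obstacle. The only points needing care are that shortest paths are simple (used in (ii)$\Rightarrow$(iii) to produce a genuine cycle) and the inequality $\ell-1+k\le k\ell$ for $\ell\ge 1$ in the last step.
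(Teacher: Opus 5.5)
Your proof is correct and follows the same route as the paper's: (ii)$\Leftrightarrow$(iii) by adjoining or deleting $e$; (ii)$\Rightarrow$(i) because an $e$-free spanner is a subgraph of $G\setminus e$; and (i)$\Rightarrow$(ii) by contraposition, showing that $G\setminus e$ is itself a $k$-spanner via the replacement path and the inequality $\ell-1+k\le k\ell$. Your write-up is somewhat more careful than the paper's: you call the spliced object a walk rather than a shortest path, note that shortest paths are simple so $e$ occurs at most once, and state the identity between the shortest cycle length through $e$ and $\dist_{G\setminus e}(u,v)+1$ explicitly.
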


\begin{proof}[Proof of \Cref{lem:mandatory-edge-criterion}]
Conditions (ii) and (iii) are equivalent by adjoining or deleting the edge $e$. Suppose condition (ii) holds and there exists a $k$-spanner $H=(V,E')$ of $G$ such that $e\notin E'$. Note that $H$ is a subgraph of $G\setminus e$, then
\[
\dist_H(u,v)\geq \dist_{G\setminus e} (u,v)\geq k+1=(k+1)\cdot \dist_G(u,v)
\]
which violates the $k$-spanner property. Conversely, suppose condition (i) holds and $\dist_{G\setminus e}(x,y)\leq k$. For every shortest path in $G$ with length $l$ that contains $e$,
we can construct a shortest path in $G\setminus e$ by replacing $e$ with the replacement path of $e$. Note that it has length $l-1+k \leq k \cdot l$ in $G\setminus e$. Therefore, the subgraph $G\setminus e$ is a $k$-spanner of $G$, which violates property (i).
\end{proof}
\begin{definition}\label{def:mismatch-budget}
For every integer $k\ge 7$, let
\begin{equation}\label{eqn:mu-defn}
\mu_k=\max\left\{
 j:
 \begin{array}{l}
 \exists \text{ odd integer } h \text{ such that} \\[-1mm]
 3\leq h\leq k, 
 h+ 2j\le k, j\le h+1
 \end{array}
\right\}.
\end{equation}
\end{definition}

\begin{observation}\label{lem:mu-closed-form}
For every $k\ge 7$, 
\[
\mu_k = \left\lfloor\frac{k+1}{3}\right\rfloor
-\mathbbm{1}[k\equiv 2\pmod 6] = 
\begin{cases}
    2q & k \in \{6q+1, 6q+2: q \geq 1\}\\
    2q+1 & k \in \{6q+3, 6q+4: q \geq 1\}\\
    2q+2 & k \in \{6q+5, 6q+6: q \geq 1\}
\end{cases}
\]
In particular, $\mu_k=k/3+O(1)$.
\end{observation}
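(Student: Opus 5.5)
The plan is to evaluate the maximum in \Cref{def:mismatch-budget} directly. For a fixed odd $h$ with $3\le h\le k$, the admissible $j$ are exactly those with $j\le g(h):=\min\{h+1,\lfloor (k-h)/2\rfloor\}$. Hence
\[
\mu_k=\max_{h \text{ odd},\,3\le h\le k} g(h).
\]
The first term $h+1$ is increasing in $h$ and the second term $\lfloor (k-h)/2\rfloor$ is non-increasing. The two cross near $h=(k-2)/3$, so the maximum should be attained at the odd $h$ closest to that point.

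Write $k=6q+s$ with $q\ge 1$ and $s\in\{1,\dots,6\}$; this is possible because $k\ge 7$. I will show that the candidate $h^*=2q+1$ always attains the maximum. It is odd, and it satisfies $3\le h^*\le k$ because $q\ge1$. This also avoids the invalid choice $h=1$ that would arise when $q=1$.

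The argument is a two-sided monotonicity bound.
\begin{itemize}
\item For odd $h\le 2q-1$: $g(h)\le h+1\le 2q$.
\item For odd $h\ge 2q+3$: $g(h)\le\lfloor (k-2q-3)/2\rfloor=\lfloor (4q+s-3)/2\rfloor$.
\item At $h^*$: $g(2q+1)=\min\{2q+2,\lfloor (4q+s-1)/2\rfloor\}$.
\end{itemize}

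Checking the six residues gives the following.
\begin{itemize}
\item $s=1,2$: $g(h^*)=2q$, and both other ranges give at most $2q$ (for $h\ge 2q+3$ they give $\lfloor(4q-2)/2\rfloor$ or $\lfloor(4q-1)/2\rfloor$, i.e.\ $2q-1$). So $\mu_k=2q$.
\item $s=3,4$: $g(h^*)=2q+1$, while the other ranges give at most $2q$. So $\mu_k=2q+1$.
\item $s=5,6$: $g(h^*)=2q+2$, with the other ranges giving at most $2q$ and $\le 2q+1$ respectively. So $\mu_k=2q+2$.
\end{itemize}
This establishes the piecewise form.

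It remains to match the piecewise form with $\lfloor (k+1)/3\rfloor-\mathbbm{1}[k\equiv 2 \pmod 6]$. The value $\lfloor(6q+s+1)/3\rfloor$ equals $2q,\,2q+1,\,2q+1,\,2q+1,\,2q+2,\,2q+2$ for $s=1,\dots,6$. The indicator subtracts $1$ exactly at $s=2$, which yields $2q,2q,2q+1,2q+1,2q+2,2q+2$, as required. The asymptotic statement $\mu_k=k/3+O(1)$ is then immediate.

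The only delicate point is bookkeeping rather than ideas. One must verify that the floors are computed correctly in each residue class, especially $s=2$, where the correction term is needed. One must also check that no $h$ below $2q+1$ beats $h^*$ when $q=1$, which holds because such $h$ are simply absent.
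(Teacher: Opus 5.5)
Your proof is correct. Reducing to $\max_h \min\{h+1,\lfloor (k-h)/2\rfloor\}$ is valid, the choice $h^*=2q+1$ is admissible for all $q\ge 1$, and I checked all six residue computations (including the $s=2$ correction term) and found no errors. The paper states this observation without proof, so your two-sided case analysis supplies exactly the verification the paper leaves implicit.
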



\subsection{Random Skeleton Construction}\label{subsec:random-graph-construction}
In this section, we construct two random process and prove the existence of skeleton graphs with properties stated in \Cref{lem:exist-good-skeleton-graph}.

\paragraph{Random Skeleton Construction}
Fix an arbitrary $k\ge 7$ such that $\mu_k\geq 2$. Set $\mu=\mu_k$.
Consider any sufficiently large integer $N$. Let $S=N^{\mu-1}\geq N$. Set up two node sets $U$ and $W$ of size $|U|=|W|=S$. We construct a random skeleton graph $\Sigma=(U,W,E_\Sigma)$ such that for every pair of nodes $(u,w)$ with $u\in U, w\in W$, add the edge $(u,w)\in E_\Sigma$ independently with probability $p=\frac{cN}{S}$ where $c=0.01k^{-1}$. Set $B=B(\Sigma)=|E_\Sigma|$. Note that $\mathbb{E}_\Sigma[B]=S^2 \cdot p = cNS=cN^{\mu}$. Since every edge is sampled independently at random, by the Chernoff bound,
\begin{equation}\label{eq:chernoff-B}
    \Pr_{\Sigma}\left[\abs{B-c SN}\geq \frac{cSN}{2}\right] \leq 2e^{-\frac{1}{12} cSN}\leq 0.01.
\end{equation}
Label the edges in $E_\Sigma = \{\sigma_0, \sigma_1, \ldots, \sigma_{B-1}\}$.
For every edge $(u,w)\in E_\Sigma$ with $u\in U, w\in W$, we make it public such that its index $t\in [E_\Sigma]$ with $\sigma_t =(u,w)$ is public. We call $\Sigma$ the \textit{public skeleton graph}. We call nodes in $v\in U\cup W$ the \textit{skeleton nodes} and edges in $E_\Sigma$ the \textit{skeleton edges}.

\paragraph{Random Graph Construction}
Let $\Sigma=(U,W,E_\Sigma)$ be an arbitrary public skeleton graph generated by the above process. Let $B=|\Sigma_E|$. For every skeleton edge $\sigma_t$ with $t\in [B]$, sample a pair $(a_t,b_t)\in [N]\times [N]$ independently and uniformly. We call $(a_t,b_t)$ the hidden value for the skeleton edge $\sigma_t$.
Set up the string $r=(a_t,b_t)_{t\in[B]}$. In the following, we construct the graph $G_r =(V_r, E_r)$ described by $\Sigma$ and $r$.
\begin{enumerate}
    \item For every skeleton node $v\in U \cup W$, set up a \textit{leader} node $l_v$ and a set of \textit{coordinate}  node $C_v=\{c_{v, x}: x\in [N]\}$. We call $C_v$ the cluster for the skeleton node $v$.
    \begin{itemize}
        \item Add the edges $(l_v, c_{v, x})$ for every $c_{v,x}\in C_v$.
    \end{itemize}
    \item For every skeleton edge $\sigma_t = (u,w)\in E_\Sigma$ with $t\in [B]$, set up two gadget nodes $g_{t, u}, g_{t, w}$ and two dummy nodes $d_{t, u}, d_{t, w}$. Let $(a_t, b_t)\in [N]\times [N]$ be the hidden value for $\sigma_t$.
    \begin{itemize}
        \item Add the edge $e_t = (c_{u, a_t}, c_{w, b_t})\in E_r$ to $G_r$. 
        \item Add the edge $(c_{u,x}, g_{t, u})\in E_r$ to $G_r$ for every $c_{u,x}\in C_u$ with $x \neq a_t$.
        \item Add the edge $(c_{w,y}, g_{t, w})\in E_r$ to $G_r$ for every $c_{w,y}\in C_w$ with $y \neq b_t$.
        \item Add the edges $(g_{t, u}, d_{t, u}), (g_{t, w}, d_{t, w})\in E_r$ to $G_r$.
    \end{itemize}
\end{enumerate}
\begin{figure}
    \centering
    \includegraphics[width=0.7\linewidth]{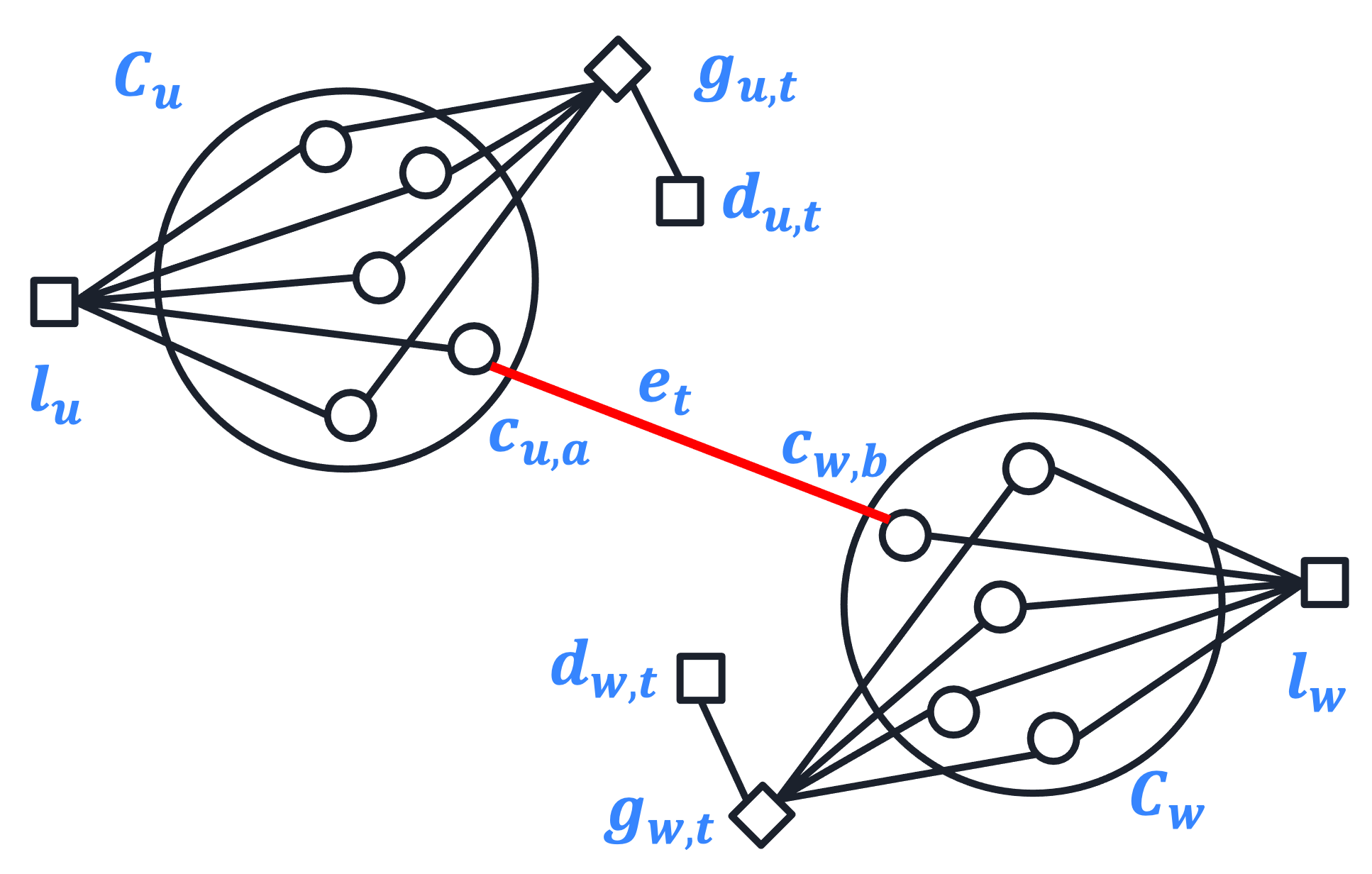}
    \caption{
    Consider the skeleton edge $\sigma_t =(u,w)\in E_\Sigma$ with $u\in U$ and $w\in W$ for some $t\in [B]$. Since the hidden value for block $t$ is $(a,b)$, the hidden edge for block $t$, highlighted in red, is $(c_{u,a}, c_{w,b})\in E_r$. For every unselected node $c_{u,x}\in C_u$ with $x\neq a$, add an incident edge to the gadget $g_{u,t}$. For every unselected node $c_{w,y}\in C_w$ with $y\neq b$, add an incident edge to the gadget $g_{w,t}$.
    }
    \label{fig:2 reduction to spanner}
\end{figure}
See \Cref{fig:2 reduction to spanner} for an illustration.
For every $t\in [B]$, we call $e_t\in E_r$ the hidden edge corresponding to $\sigma_t$.
Let $E_r^*=\{e_t: t\in [B]\}$ be set of hidden edges. We say two distinct skeleton edges $\sigma_i=(u_i,w_i),\sigma_j=(u_j,w_j)\in E_\Sigma$ are adjacent if they share one node, for example, either $u_i=u_j$ or $w_i=w_j$.

Set up the leader set $V_{l} = \{l_v\}_{v\in U \cup W}$, the coordinate set $V_c=\bigcup_{v\in U \cup W} C_v$, the gadget set $V_g = \{g_{t,u},g_{t,w}: \forall (u,w)\in E_\Sigma\}$ and the dummy set $V_d = \{d_{t,u},d_{t,w}: \forall (u,w)\in E_\Sigma\}$. Note that $V_r = V_l \sqcup V_c \sqcup V_g \sqcup V_d$ and it has size
\begin{equation}\label{eq:G-r-node-size}
    n=|V_r|= 2S+2SN + 4B.
\end{equation}
In addition, the edge set $E_r$ has size 
\begin{align}\label{eq:G-r-edge-size}
|E_r| 
& = \frac{1}{2}\left(|V_l|\cdot N + |V_g|\cdot N +  |V_d|\cdot 1 + 2S N+ \sum_{v\in U \cup W} (N \cdot \deg_\Sigma (v))
    \right)= 2SN+B+2BN.
\end{align}
\begin{observation}
    For every $\Sigma=(U,W,E_\Sigma)$, with $|E_\Sigma|\geq 1$, every random graph $G_r$ generated from $\Sigma$ is bipartite with girth $4$.
\end{observation}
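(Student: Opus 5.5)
We prove the two parts of the statement, bipartiteness and girth exactly four, separately, working directly from the construction of $G_r$.

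\emph{Bipartiteness.} We exhibit an explicit proper $2$-colouring. Put all coordinate nodes $V_c$ on side $X$. Put all leaders $V_l$ and all gadgets $V_g$ on side $Y$. Put all dummies $V_d$ on side $X$.

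We then check every edge type of the construction against this colouring:
\begin{enumerate}
\item leader--coordinate edges $(l_v,c_{v,x})$ join $Y$ and $X$;
\item coordinate--gadget edges $(c_{u,x},g_{t,u})$ join $X$ and $Y$;
\item gadget--dummy edges $(g_{t,u},d_{t,u})$ join $Y$ and $X$.
\end{enumerate}
The remaining type is the hidden edge $e_t=(c_{u,a_t},c_{w,b_t})$, and it joins two coordinate nodes, both in $X$. So this colouring fails, and we must refine it using the bipartition $U\sqcup W$ of the skeleton. Place $C_u$ for $u\in U$ on side $X$, and $C_w$ for $w\in W$ on side $Y$. Put $l_u$ and $g_{t,u}$ (for $u\in U$) on side $Y$, and $l_w$ and $g_{t,w}$ (for $w\in W$) on side $X$. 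Finally put $d_{t,u}$ on $X$ and $d_{t,w}$ on $Y$.

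Now every edge joins the two sides. Leader--coordinate and coordinate--gadget edges always stay within one skeleton node's side pattern, so each joins $X$ and $Y$. Gadget--dummy edges alternate by construction. Hidden edges go from $C_u\subseteq X$ to $C_w\subseteq Y$. Since $\Sigma$ is bipartite, no other case arises. Hence $G_r$ is bipartite, and every cycle has even length, so in particular the girth is at least $4$.

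\emph{Exhibiting a $4$-cycle.} Since $|E_\Sigma|\ge 1$, fix a skeleton edge $\sigma_t=(u,w)$. The cluster $C_u$ has $N\ge 4$ nodes, so pick two indices $x\ne x'$, both different from $a_t$. Both $c_{u,x}$ and $c_{u,x'}$ are adjacent to $l_u$ and to $g_{t,u}$, so
\[
l_u - c_{u,x} - g_{t,u} - c_{u,x'} - l_u
\]
is a cycle of length $4$. These four vertices are distinct, since the leader, the gadget and the two coordinate nodes belong to different vertex classes or carry different indices.

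Together with bipartiteness this gives girth exactly $4$. The only step needing care is the colouring: it has to respect the skeleton bipartition $U\sqcup W$ rather than the vertex types, because hidden edges join two coordinate nodes.
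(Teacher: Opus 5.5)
Your proof is correct and is the intended argument: the paper gives none beyond asserting, in the proof of the $3$-spanner lemma, that bipartiteness and girth $4$ are ``easy to see.'' Your $2$-colouring by the $U\sqcup W$ side of each owning skeleton node, together with the $4$-cycle $l_u, c_{u,x}, g_{t,u}, c_{u,x'}$, is exactly the check the paper leaves implicit. In a final write-up, present only the refined colouring, and state that the $4$-cycle needs $N\ge 3$, which the standing assumption that $N$ is sufficiently large provides; for $N=2$ and a single skeleton edge, $G_r$ is a forest.
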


\paragraph{Matching Events at Clusters} Fix a skeleton graph $\Sigma=(U,E,E_\Sigma)$ and fix a graph $G_r$ constructed from $\Sigma$ and $r$. For any skeleton node $v\in U \cup W$ and any two incident skeleton edges $(v,p), (v,q)\in E_\Sigma$, we say the corresponding hidden edges $(c_{v,i},c_{p,x}), (c_{v,j},c_{q,y})\in E_r$ are \textit{matching} at the cluster $C_v$ if they share the same node $c_{v,i}=c_{v,j}$. Otherwise, we say the hidden edges $(c_{v,i},c_{p,x}), (c_{v,j},c_{q,y})\in E_r$ are \textit{mismatching} at $C_v$.

\begin{observation}
    Given two adjacent skeleton edges $\sigma_i=(u_i,w'),\sigma_j=(u_j,w')\in E_\Sigma$, 
    \begin{equation}\label{eq:prob-matching-mismatching}
            \Pr_r [e_i,  e_j \text{ are matching}] = \frac{1}{N} \quad \text{ and }\quad \Pr_r [e_i,  e_j \text{ are mismatching}] = 1-\frac{1}{N}.
    \end{equation}
\end{observation}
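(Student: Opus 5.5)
The claim concerns two adjacent skeleton edges $\sigma_i=(u_i,w')$ and $\sigma_j=(u_j,w')$ that share the node $w'\in W$. I would first fix the public skeleton $\Sigma$ and identify the endpoints of the hidden edges inside the shared cluster $C_{w'}$. By the construction of $G_r$, the hidden edge of $\sigma_i$ is $e_i=(c_{u_i,a_i},c_{w',b_i})$ and that of $\sigma_j$ is $e_j=(c_{u_j,a_j},c_{w',b_j})$. Their endpoints in $C_{w'}$ are therefore $c_{w',b_i}$ and $c_{w',b_j}$.

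By definition, $e_i$ and $e_j$ are matching at $C_{w'}$ exactly when these two coordinate nodes coincide. Since $c_{w',x}$ and $c_{w',y}$ are distinct nodes whenever $x\neq y$, this is the event $b_i=b_j$. The remaining coordinates $a_i,a_j$ live in the clusters $C_{u_i}$ and $C_{u_j}$ and play no role in the matching event at $C_{w'}$.

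The probability computation then uses only the sampling rule. Because $\sigma_i\neq\sigma_j$ are distinct skeleton edges, their hidden pairs are sampled independently and uniformly from $[N]\times[N]$. In particular, $b_i$ and $b_j$ are independent and each uniform on $[N]$. Conditioning on $b_i$ gives
\[
\Pr_r[b_j=b_i]=\sum_{x\in[N]}\Pr[b_i=x]\Pr[b_j=x]=N\cdot\frac{1}{N^2}=\frac1N .
\]
Mismatching is the complementary event, so its probability is $1-1/N$. The symmetric case, where the two edges share a node $u'\in U$, is identical with the $a$-coordinates in place of the $b$-coordinates.

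There is essentially no obstacle here. The only point needing care is that matching is defined relative to the shared cluster, so only the coordinate on the shared side matters. The matching event must not be read as the hidden edges sharing some other endpoint, which would be impossible anyway, since $u_i\neq u_j$ means their other endpoints lie in different clusters.
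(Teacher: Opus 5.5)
Your proposal is correct and matches the paper, which states this observation without proof but intends exactly your argument. Matching at $C_{w'}$ is the event $b_i=b_j$; independence and uniformity of the two hidden pairs give probability $1/N$, and mismatching is the complementary event.
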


\paragraph{Essential Hidden Edges}
Consider a random public skeleton graph $\Sigma=(U,W,E_\Sigma)$ and a random graph $G_r=(V_r, E_r)$ generated in the above process.
\begin{definition}[Essential Hidden Edges] \label{def:bad-event-Z}
    Go through every pair of skeleton nodes $(u,w)$ with $u\in U$ and $w\in W$. Given a hidden edge $e=(c_{u,a}, c_{w,b})$ for some $a,b\in [N]$, we say $e$ is an essential hidden edge if, in the subgraph $G'_r=G_r\setminus e$, the distance between $c_{u,a}$ and $c_{w,b}$ is $\dist_{G'_r}(c_{u,a}, c_{w,b})\geq k+1$. On the other hand, we say $e$ is a nonessential hidden edge if $\dist_{G'_r}(c_{u,a}, c_{w,b})\leq k$.
 
    Set up the following indicator random variables over the randomness of $\Sigma$ and $r$ such that
    \[
    Z_{u,w}=\mathbbm{1}[\text{there exists a \underline{nonessential} hidden edge } (c_{u,a}, c_{w,b})\text{ for some }a,b\in [N] ]
    \]
    \[
    \overline{Z}_{u,w}=\mathbbm{1}[\text{there exists an \underline{essential} hidden edge } (c_{u,a}, c_{w,b})\text{ for some }a,b\in [N] ]
    \]
    Set up $Z=\sum_{u\in U, w\in W} Z_{u,w}$ and $\overline{Z}=\sum_{u\in U, w\in W} \overline{Z}_{u,w}$ to be the number of nonessential hidden edges and essential hidden edges in $G_r$, respectively. 
\end{definition}
Note that the number of hidden edges in $G_r$ is $B=Z + \overline{Z}$. By \Cref{lem:mandatory-edge-criterion}, every $k$-spanner of $G_r$ contains all the essential hidden edges. In the following, we compute the number of nonessential hidden edges, denoted by $Z$, over the randomness of the skeleton graph $\Sigma$ and the hidden values $r$. We start from the following observations. \begin{figure}
    \centering
    \includegraphics[width=0.5\linewidth]{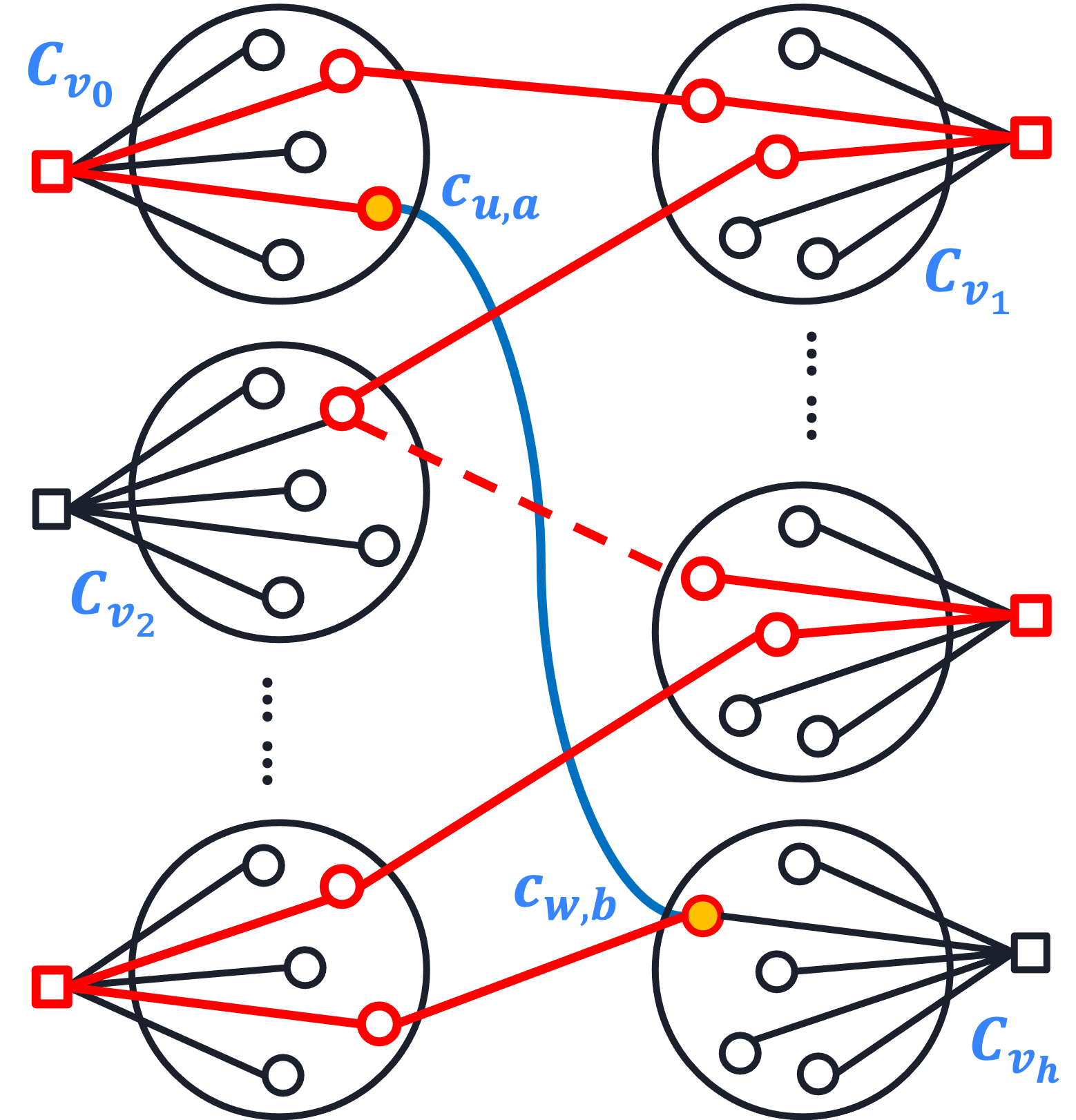}
    \caption{
    Fix two arbitrary coordinate nodes $c_{u,a}$ and $c_{w,b}$ such that $u\in U, w\in W$ for some $a,b\in [N]$. Assume the hidden edge $e=(c_{u,a}, c_{w,b})\in E_r$ exists (highlighted in blue). 
    Fix an arbitrary shortest path $P$ in $G_r\setminus e$ between $c_{u,a}$ and $c_{w,b}$ (highlighted in red). Let $S_P=(v_0=u, v_1, \ldots, v_{h}=w)$ be the projected skeleton sequence of $P$. For the junctions $v_2$ and $v_h$, they belong to the first case in \Cref{obs:shortest-path-candidate}. For the junctions $v_0, v_1, v_{h-2}$ and $v_{h-1}$, they belong to the second case.
    }
    \label{fig:3 reduction-shortest-path}
\end{figure}
\begin{observation}\label{obs:shortest-path-candidate}
    Assume the skeleton edge $(u,w)\in E_\Sigma$ exists. Let $e=(c_{u,a},c_{w,b})\in E_r$ be the corresponding hidden edge. Consider the graph $G'_r=G_r \setminus e$ by removing the edge $e$ from $E_r$. Assume the two coordinate nodes $c_{u,a},c_{w,b}$ are connected in $G'_r$. Fix an arbitrary shortest path $P$ in $G'_r$ between $c_{u,a}$ and $c_{w,b}$ with length $l=|P|$. It satisfies the following properties:
    \begin{itemize}
        \item No dummy nodes occur on the path $P$.
        \item The path length $l$ is odd and $l\geq 3$.
        \item The number of hidden edges on the path $P$, denoted by $h$, is odd and $h\geq 3$. 
        \item Go through every edge of the path $P$ sequentially starting from $c_{u,a}$, we say the path $P$ crosses clusters if the edge $e=(c_{v, x}, c_{v', y})$ is a hidden edge between $C_{v}$ and $C_{v'}$. Given $h$ hidden edges, let
        \[
        S_P =(v_0, v_1, \cdots, v_{h})
        \]
        be the sequence of skeleton nodes such that $(C_{v_0}, C_{v_1}, \cdots, C_{v_h})$ is the sequence of clusters that the path $P$ goes across. Note that $v_0=u$ and $v_h=w$. Note that for every cluster $C_v$, any two coordinate nodes $c_{v,x}$ and $c_{v,y}$ have distance at most $2$ via the leader $l_v$. Then all the skeleton nodes $v_i$ in the sequence $S_P$ are distinct. Otherwise, $P$ is not a shortest path. 
        
        We call $S_P$ the projected skeleton sequence of $P$ and every skeleton node $v_i$ in $S_P$ a junction.
        \item For every junction $v_i$, when the path $P$ enters the cluster $C_{v_i}$ via a hidden edge, it has the following two choices:
        \begin{enumerate}
            \item enters the cluster $C_{v_{i+1}}$ via the next hidden edge,
            \item takes a detour to the leader $l_{v_i}$ or some gadget $g_{v_i, t}$, returns to the cluster $C_{v_{i}}$ and then enters the cluster $C_{v_{i+1}}$ via the next hidden edge. 
        \end{enumerate}
        The first case happens when the corresponding hidden edges of $(v_{i-1},v_{i}),(v_{i},v_{i+1})\in E_\Sigma$ are matching at $C_{v_i}$. The second case happens when the corresponding hidden edges of $(v_{i-1},v_{i}),(v_{i},v_{i+1})\in E_\Sigma$ are mismatching at $C_{v_i}$. The detour has length $2$.
        \item Let $j\leq h+1$ be the number of mismatching events on the path $P$. Therefore
        \[
        l = |P|=h+2j.
        \]
    \end{itemize}
\end{observation}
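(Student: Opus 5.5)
We prove the bullets of \Cref{obs:shortest-path-candidate} one at a time, using only the local structure of $G_r$. Throughout, $P$ is a shortest path in $G'_r=G_r\setminus e$ from $c_{u,a}$ to $c_{w,b}$.

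\emph{Dummy nodes, parity and length.}
\begin{itemize}
\item A dummy node has degree $1$, so it cannot be an internal vertex of any path. Both endpoints of $P$ are coordinate nodes, so $P$ contains no dummy nodes.
\item $G_r$ is bipartite. One side is the $U$-clusters together with the $W$-leaders and the $W$-side gadgets; the other side is the $W$-clusters together with the $U$-leaders and the $U$-side gadgets. Since $c_{u,a}$ and $c_{w,b}$ lie on opposite sides, $l$ is odd.
\item The only edge of $G_r$ joining $c_{u,a}$ and $c_{w,b}$ is $e$, so $l\ge 3$.
\end{itemize}

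\emph{Counting hidden edges.}
\begin{itemize}
\item Every non-hidden edge has an endpoint that is a leader or gadget, and each such node is attached to a single cluster. Hence two consecutive coordinate nodes on $P$ lie in different clusters only if they are joined by a hidden edge. So the sequence of clusters visited changes exactly at the hidden edges, and it gives a walk $v_0=u,v_1,\dots,v_h=w$ in the skeleton.
\item The skeleton $\Sigma$ is bipartite, so $h$ is odd.
\item If $h=1$, the single hidden edge of $P$ joins $C_u$ and $C_w$. It is a hidden edge of some skeleton edge between $u$ and $w$, and $\Sigma$ is simple, so it is the edge of $\sigma_t=(u,w)$, namely $e$ itself. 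This is excluded, so $h\ge 3$.
\end{itemize}

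\emph{Distinct junctions.} Suppose $v_i=v_j$ with $i<j$. Any two nodes of one cluster are at distance $2$ via its leader. Replacing the subpath between the first visit to $C_{v_i}$ and the last visit to $C_{v_j}$ by a route of length at most $2$ through the leader $l_{v_i}$ shortens $P$. This subpath contains at least two hidden edges and has even length, so the replacement is strictly shorter. This contradicts minimality, so all junctions are distinct.

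\emph{Behaviour inside a cluster.}
\begin{itemize}
\item Between consecutive hidden edges, the path stays inside $C_{v_i}$ plus its leader and gadgets. Any such segment from $c_{v_i,x}$ to $c_{v_i,y}$ has even length.
\item The segment has length $0$ exactly when $x=y$, which is the matching case.
\item Otherwise its length is at least $2$, and minimality forces exactly $2$. A length-$2$ detour through the leader always exists, so this is the mismatching case.
\item The same holds at the endpoints $v_0$ and $v_h$. The path may first detour inside $C_u$ from $c_{u,a}$ before its first hidden edge, and similarly inside $C_w$ at the end.
\end{itemize}

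\emph{Summing up.} Let $j$ be the number of detours. There are $h-1$ interior junctions and $2$ endpoint clusters, so $j\le h+1$. Adding up the edges gives $l=h+2j$.

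The main obstacle is to make the junction-distinctness and detour-length arguments rigorous. This rests on two facts. First, a gadget $g_{t,v}$ is adjacent only to nodes of $C_v$ and to its dummy, so it cannot be used to cross between clusters. Second, the leader route always gives the length-$2$ bound.
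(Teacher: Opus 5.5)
Your proposal follows the same route as the paper's inline justification: parity from bipartiteness, cluster changes only along hidden edges, the leader shortcut for distinct junctions, and segments of length $0$ or $2$ inside each cluster. It fills in more detail than the paper does. The dummy-node, parity, $h\ge 3$, in-cluster and counting steps are all correct.

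One inference in the distinct-junctions paragraph does not go through as written. From ``at least two hidden edges and even length'' you only get that the subpath between the two visits has length at least $2$. The leader route between two distinct coordinate nodes of $C_{v_i}$ also has length exactly $2$. So if the subpath were $c_{v_i,x},c_{v_{i+1},y},c_{v_i,z}$ with $x\neq z$, the replacement would have the same length, and minimality of $P$ would not be contradicted.

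What is missing is the reason this case cannot arise. If $v_{i+2}=v_i$, both hidden edges belong to the skeleton edge $\{v_i,v_{i+1}\}$. Since $\Sigma$ is simple and each skeleton edge carries exactly one hidden edge, $P$ would traverse the same hidden edge twice and revisit its endpoints, which a path cannot do. Since $\Sigma$ is bipartite, $j-i$ is even for any repetition $v_i=v_j$ with $i<j$, so in fact $j-i\ge 4$. The subpath then has length at least $4$, and the leader route of length at most $2$ is strictly shorter.

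You already use the simplicity of $\Sigma$ to get $h\ge 3$, so this is a one-line repair. The paper's own sentence ``otherwise, $P$ is not a shortest path'' leaves the same point implicit.
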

See \Cref{fig:3 reduction-shortest-path} for an illustration. 
\begin{lemma}\label{lem:upperbound-Z-u-w}
    Fix any two nodes $u\in U$ and $w\in W$. There exists a nonessential hidden edge between $C_u$ and $C_w$ with probability 
    \[
    \lambda = \mathbb{E}_{\Sigma, r}[Z_{u,w}] \leq \frac{p}{200}.
    \]
\end{lemma}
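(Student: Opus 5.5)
We bound $\lambda$ by a union bound over the shapes of possible short replacement paths, using \Cref{obs:shortest-path-candidate}. Since $Z_{u,w}=1$ requires the skeleton edge $(u,w)$ to exist, we have
\[
\lambda=\Pr[(u,w)\in E_\Sigma]\cdot\Pr[Z_{u,w}=1\mid (u,w)\in E_\Sigma]=p\cdot \Pr[Z_{u,w}=1\mid (u,w)\in E_\Sigma],
\]
so it suffices to show the conditional probability is at most $1/200$.

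Condition on $(u,w)\in E_\Sigma$ and let $e=(c_{u,a},c_{w,b})$ be the hidden edge. If $e$ is nonessential, then by \Cref{obs:shortest-path-candidate} there is a shortest path $P$ in $G_r\setminus e$ of length $l=h+2j\le k$. Here $h\ge3$ is odd, and $P$ projects to a skeleton path $S_P=(u=v_0,v_1,\dots,v_h=w)$ of distinct vertices. This path is alternating between $U$ and $W$ and uses $h$ skeleton edges other than $(u,w)$. Among the $h+1$ junctions (counting the endpoints, where the path must meet $c_{u,a}$ and $c_{w,b}$), exactly $j$ are mismatches and the remaining $h+1-j$ are matches. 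Thus the event is contained in the union, over odd $h\le k$, over $j$ with $h+2j\le k$ and $j\le h+1$, over skeleton paths of length $h$, and over the choice of which $h+1-j$ junctions match, of the event that these skeleton edges all exist and the designated matching events all occur.

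For a fixed path and pattern, the $h$ skeleton edges are distinct from $(u,w)$, so independence gives probability $p^h$. For the matchings, we reveal the hidden pairs along the path one at a time. Each required match pins a coordinate of a fresh, independent uniform value, and hence costs a factor $1/N$; this is the same computation as \eqref{eq:prob-matching-mismatching}, applied to a fresh hidden value each time. The endpoint matches, at $c_{u,a}$ and $c_{w,b}$, are also against the independent value $(a,b)$. We therefore expect probability at most $p^h N^{-(h+1-j)}$.

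The number of skeleton paths of length $h$ from $u$ to $w$ is at most $S^{h-1}$, and the number of junction patterns is at most $2^{h+1}$. Hence each $(h,j)$ term is at most
\[
2^{h+1}S^{h-1}p^hN^{-(h+1-j)}=2^{h+1}c^h\,\frac{N^{h}}{S}\,N^{-(h+1-j)}=2^{h+1}c^h N^{\,j-1}/S=2^{h+1}c^hN^{\,j-\mu}.
\]
Now $j\le\mu$ by \Cref{def:mismatch-budget}, since $h+2j\le k$ with $h$ odd and $j\le h+1$ forces $j\le\mu_k$. So each term is at most $(2c)^h\cdot 2$. Summing over the $O(k^2)$ pairs $(h,j)$ with $h\ge3$ and $c=0.01/k$, the total is at most $O(k^2)\cdot(0.02/k)^3\ll 1/200$.

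The main obstacle is to justify that the constraint really is $j\le\mu$ with exactly $h+1-j$ independent $1/N$ factors. Two things need care here. First, we must check that the junction count includes both endpoints (so that the exponent is $h+1-j$, not $h-1-j$). Second, we must check that the matching events along a single path are independent: consecutive junctions share a hidden pair, with $a_t$ used at one end and $b_t$ at the other. The approach is to process the path edge by edge and to note that each hidden pair contributes its two coordinates to two different junctions. Conditioning on the earlier values, each match event at a new junction involves a fresh uniform coordinate, which gives the factor $1/N$. If the endpoint accounting instead yields only $h-1-j$ factors, I would recheck whether the budget in \Cref{def:mismatch-budget} compensates. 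The appearance of $j\le h+1$ there suggests it is designed so that the count above is exactly right.
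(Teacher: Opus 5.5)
Your proposal is correct and follows the paper's proof essentially step for step. Both arguments factor $\lambda=p\cdot\Pr[\dist_{G_r\setminus e}(c_{u,a},c_{w,b})\le k\mid (u,w)\in E_\Sigma]$, union-bound over $k$-valid pairs $(h,j)$, at most $S^{h-1}$ skeleton sequences and $\binom{h+1}{j}\le 2^{h+1}$ junction patterns with the factor $p^hN^{-(h+1-j)}$, and use $j\le\mu_k$ to obtain $2^{h+1}c^h$ per term before summing.

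The two points you flag both resolve in your favor. The $h+1$ junctions do include both endpoints, as in \Cref{obs:shortest-path-candidate}. The junction events are independent because they depend on pairwise disjoint sets of independent uniform coordinates: the pair $(a,b)$ at the two ends, and the two coordinates of each path hidden edge, each used at a different junction. This is exactly the justification the paper leaves implicit when it asserts that each match occurs independently with probability $1/N$.
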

\begin{proof}[Proof of \Cref{lem:upperbound-Z-u-w}]
    Assume there exists a hidden edge $e=(c_{u,a}, c_{w,b})\in E_r$ for some $a,b\in [N]$. Note that it occurs with probability $p$ independently. 
    Fix the nodes $c_{u,a}\in C_u$ and $c_{w,b}\in C_w$. 
    In the following, we compute the probability that there exists a shortest path $P$ in $G'_r=G_r \setminus e$ between $c_{u,a}$ and $c_{w,b}$ with length  $l=|P|\leq k$.

    First, we say an integer pair $(h,j)$ is $k$-valid if
    (1) $h$ is odd and $3\leq h\leq k$ (2) $0\leq j\leq h+1$ and (3) $h+2j \leq k$. By \Cref{def:mismatch-budget}, 
    \[
    \mu_k=\max\left\{
    j:
    \begin{array}{l}
    \exists \text{ odd integer } h \text{ such that} \\[-1mm]
    3\leq h\leq k, 
    h+ 2j\le k, j\le h+1
    \end{array}
    \right\}.
    \]
    every $k$-valid pair $(h,j)$ has $j\leq \mu_k$.
    Consider every possible shortest-path candidate $P$ in the random graph $G'_r$ between $c_{u,a}$ and $c_{w,b}$ with length at most $k$. We use $h$ to denote the number of hidden edges in $P$ and $j$ to denote the number of mismatching events on $P$. By \Cref{obs:shortest-path-candidate}, the pair $(h, j)$ is $k$-valid.
    
    Fix a $k$-valid pair $(h,j)$. Consider every possible candidate $P$ with $h$ hidden edges and $j$ mismatching events. By \Cref{obs:shortest-path-candidate}, let
    \[
    S_P = (v_0=u, v_1,\ldots, v_h=w)
    \]
    be the projected skeleton sequence of $P$ where every skeleton node $v_i$ in $S_P$ is distinct. Since $v_0=u$ and $w_h=w$ are fixed, there are at most $S^{h-1}$ possible sequences where $S=|U|=|W|=N^{\mu-1}$. In addition, among $h+1$ junctions, there are $j$ mismatching events and $h+1-j$ matching events. Then, there are $\binom{h+1}{j}$ possible arrangements.

    Fix one possible sequence and one possible arrangement. For every $(v_i, v_{i+1})$ with $i\in [h]$, the hidden edge across $C_{v_i}$ and $C_{v_{i+1}}$ exists if and only if the corresponding skeleton edge $(v_i, v_{i+1})\in E_\Sigma$ exists. By construction, the skeleton edge exists independently with probability $p$. 
    Among $h+1$ junctions, there are $j$ mismatching events, and each occurs independently with probability $1-1/N$. In addition, there are $h+1-j$ matching events, and each occurs independently with probability $1/N$. 
    Therefore, such a shortest-path candidate exists with probability
    \[
    p^{h} \cdot \left(1-\frac{1}{N}\right)^{j} \cdot \left( \frac{1}{N}\right)^{h+1-j}
    \leq
    p^{h} \cdot \frac{N^{j}}{N^{h+1}}
    \leq
    p^{h} \cdot \frac{N^{\mu}}{N^{h+1}}
    \]
    where $j \leq \mu=\mu_k$ by \Cref{def:mismatch-budget}. 
By the union bound, a shortest-path candidate with $h$ hidden edges and $j$ mismatching events exists with probability at most
    \begin{align*} 
    S^{h-1}\cdot \binom{h+1}{j} \cdot p^{h} \cdot \frac{N^{\mu}}{N^{h+1}} 
    & \leq  
    S^{h-1}\cdot 2^{h+1} \cdot \left(\frac{cN}{S}\right)^{h}\cdot \frac{N^{\mu}}{N^{h+1}} = 2^{h+1} c^h \cdot \frac{N^\mu}{SN} \\
    & = 2^{h+1} c^h.
    \end{align*}
Go through all $k$-valid pairs. By the union bound, given any two nodes $c_{u,a}$ and $c_{w,b}$ such that $(c_{u,a},c_{w,b})\in E_r$, 
    \begin{align*}
    \Pr[ \dist_{G'_r}(c_{u,a}, c_{w,b})\leq k] & =\Pr[ \exists \text{ a shortest-path candidate of length}\leq k ]\\
    & \leq \sum_{\substack{\text{odd }h: \\ 3\leq h\leq k}} \sum_{\substack{j: 0\leq j\leq k+1\\ 2j+h\leq k}} 2^{h+1}c^h \leq k^2 \cdot 2\cdot  (2c)^{3}\\
    & \leq \frac{1}{200}
    \end{align*}
    where $c=0.01k^{-1}$ and $h\geq 3$.
Since every skeleton edge is sampled independently with probability $p$, the overall probability is
    \begin{align*}
    \lambda  = \Pr_{\Sigma , r}[Z_{u,w}=1]
    & = \Pr[\exists a,b\in [N]:(c_{u,a}, c_{w,b})\in E_r] \cdot \Pr[ \dist_{G'_r}(c_{u,a}, c_{w,b})\leq k |(c_{u,a}, c_{w,b})\in E_r]
    \\
    & \leq p \cdot \frac{1}{200}.
    \end{align*}
\end{proof}
\begin{lemma}\label{lem:exist-good-skeleton-graph} 
    For every integer $k\geq 7$ and for all sufficiently large $N$, there exists a deterministic public skeleton graph $\Sigma^*_N=(U,W, E^*_\Sigma)$ with $S=|U|=|W|=N^{\mu_k-1}$ and $B=|E_\Sigma^*|$ such that
    \[
    \frac{cS N}{2}   \leq
    B \leq \frac{3 cS N}{2} \quad \text{ and }\quad     \Pr_r\left[Z \geq \frac{2 cSN}{5}\middle|\Sigma_N^*\right] \leq \frac{1}{40}
    \]
    where $c=0.01k^{-1}$. 
\end{lemma}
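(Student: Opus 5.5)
The plan is a probabilistic-method argument over the random skeleton $\Sigma$. It combines the concentration bound \eqref{eq:chernoff-B} on $B$ with the first-moment bound of \Cref{lem:upperbound-Z-u-w}, and then averages out $r$ by Markov's inequality.

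\textbf{Expectation of $Z$.} By linearity and \Cref{lem:upperbound-Z-u-w},
\[
\mathbb{E}_{\Sigma,r}[Z]=\sum_{u\in U,w\in W}\mathbb{E}[Z_{u,w}]\le S^2\cdot\frac{p}{200}=\frac{cSN}{200}.
\]
Define $f(\Sigma)=\Pr_r[Z\ge 2cSN/5\mid\Sigma]$. By Markov's inequality over the joint randomness,
\[
\mathbb{E}_\Sigma[f(\Sigma)]=\Pr_{\Sigma,r}\left[Z\ge \tfrac{2cSN}{5}\right]\le \frac{cSN/200}{2cSN/5}=\frac{1}{80}.
\]
Applying Markov once more over $\Sigma$ gives $\Pr_\Sigma[f(\Sigma)>1/40]\le \frac{1/80}{1/40}=1/2$.

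\textbf{Selecting $\Sigma^*_N$.} By \eqref{eq:chernoff-B}, the event $|B-cSN|\ge cSN/2$ has probability at most $0.01$ once $N$ is large, since $cSN\to\infty$. By a union bound, with probability at least $1-1/2-0.01>0$ a random $\Sigma$ satisfies both $cSN/2\le B\le 3cSN/2$ and $f(\Sigma)\le 1/40$. Fixing any such outcome gives the deterministic $\Sigma^*_N$, and $S=N^{\mu_k-1}$ holds by construction.

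\textbf{Main obstacle.} The delicate point is the meaning of $\mathbb{E}_{\Sigma,r}[Z_{u,w}]$ in \Cref{lem:upperbound-Z-u-w}. That lemma is proved for the joint distribution and is conditioned on the edge $(u,w)$ being present. It uses independence of the other skeleton edges along a candidate path, so no further conditioning on $\Sigma$ is needed here.

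The only subtlety is that the Markov step must be carried out over the joint distribution before conditioning on a specific skeleton. Bounding $\Pr_r[\cdot\mid\Sigma]$ pointwise in $\Sigma$ is not available, because for a fixed $\Sigma$ the lemma's randomness over the skeleton edges is gone. Averaging the conditional probability over $\Sigma$ first, as done above, sidesteps this.

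The constants are loose: $1/80$ together with the factor $1/2$ and the $0.01$ Chernoff slack leaves ample room.
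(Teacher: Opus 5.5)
Your proposal is correct and uses the same probabilistic-method argument as the paper: the first-moment bound $\mathbb{E}_{\Sigma,r}[Z]\le cSN/200$ from linearity and \Cref{lem:upperbound-Z-u-w}, the Chernoff bound on $B$, and Markov's inequality. The only difference is bookkeeping. The paper conditions on the good-$B$ event, picks $\Sigma^*_N$ with $\mathbb{E}_r[Z\mid\Sigma^*_N]\le cSN/100$, and then applies Markov over $r$; you apply Markov jointly, then over $\Sigma$, and finish with a union bound, which yields the same $1/40$.
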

\begin{proof}[Proof of \Cref{lem:exist-good-skeleton-graph}]
    Let $\mathcal{E}$ be the event that $\frac{cS N}{2} \leq
    B(\Sigma) \leq \frac{3 cS N}{2}$. By \Cref{eq:chernoff-B}, $\Pr_{\Sigma}[\mathcal{E}]\geq 0.99$.
    Given $Z=\sum_{u\in U, w\in W} Z_{u,w}$, then 
    \[
    \mathbb{E}_{\Sigma, r}[Z]\leq S^2 \cdot \frac{p}{200} = \frac{cSN}{200}.
    \]
    For every possible realization $\Sigma$, we define $f(\Sigma)=\mathbb{E}_r[Z|\Sigma]$
    as the expectation of $Z$ conditionally on the skeleton graph $\Sigma$.
    By the law of total expectation,
    \[
    \mathbb{E}_{\Sigma,r}[Z]=\mathbb{E}_{\Sigma}[\mathbb{E}_r[Z|\Sigma]]=\mathbb{E}_{\Sigma}[f(\Sigma)]
    \]
    \[
    \mathbb{E}_{\Sigma}[f(\Sigma)] = \Pr[\mathcal{E}]\cdot 
    \mathbb{E}_{\Sigma}[f(\Sigma)|\mathcal{E}] + \Pr[\overline{\mathcal{E}}]\cdot 
    \mathbb{E}_{\Sigma}[f(\Sigma)|\overline{\mathcal{E}}].
    \]
    Since $Z\geq 0$ is nonnegative, then $\mathbb{E}_{\Sigma}[f(\Sigma)|\mathcal{E}] \leq 
    \frac{\mathbb{E}_{\Sigma}[f(\Sigma)]}{\Pr[\mathcal{E}]} \leq 2 \cdot \mathbb{E}_{\Sigma, r}[Z] \leq \frac{cS N}{100}$.
    Therefore, there exists a deterministic skeleton graph $\Sigma_N^*$ such that 
    \[
    \frac{cS N}{2} \leq
    B(\Sigma_N^*) \leq \frac{3 cS N}{2} \quad \text{ and } \quad f(\Sigma_N^*)=\mathbb{E}_{r}[Z|\Sigma_N^*] \leq \frac{cS N}{100}.
    \]
    By Markov's Inequality, the skeleton graph $\Sigma_N^*$ has $Z\geq cSN/5$ with probability
    \[
    \Pr_r \left[Z \geq \frac{2cSN}{5}\middle|\Sigma_N^*\right] \leq \frac{\mathbb{E}_r [Z|\Sigma_N^*]}{2cSN/5}\leq 1/40.
    \] 
\end{proof}

\subsection{Reduction from BHVF to k-Spanner}\label{subsec:reduction-hbvf-spanner}
Given any integer $k\geq 7$, consider any sufficiently large $N$. 
We regard $k$ and $\mu=\mu_k$ as constants. 
Fix the graph $\Sigma^*_N=(U,W,E_\Sigma^*)$ that satisfies the properties in \Cref{lem:exist-good-skeleton-graph}. Set $B=|E_\Sigma^*|=\Theta(cSN)$.
\paragraph{Graph Construction and Properties} Consider any input $r=(r_t^A, t_t^B)_{t\in[B]}=(a_t,b_t)_{t\in [B]}$ of BHVF problem where every $a_t,b_t \in [N]$. We construct a graph $G_r=(V_r, E_r)$ from the skeleton graph $\Sigma^*_N$ and the string $r$ by the random process in \Cref{subsec:random-graph-construction}. By \Cref{eq:G-r-node-size} and \Cref{eq:G-r-edge-size}
\[
n=|V_r|=\Theta(B)=\Theta(cN^{\mu})
\quad \text{ and } \quad
|E_r|=\Theta(BN)=\Theta(cN^{\mu+1})=\Theta(n^{1+\frac{1}{\mu}}).
\]
Let  $\mathcal{G}_n =\{G_r\}$ be the graph class for every possible $r=(a_t, b_t)_{t\in[B]}$ with $a_t, b_t\in [N]$. 
We claim the following two properties of $G_r$ generated by the fixed skeleton $\Sigma^*_N$ and the random input $r$.

\begin{lemma}\label{lem:Gr-essential-edges-linear}
    With probability at least $39/40$, the graph $G_r=(V_r,E_r)$ contains at least $B/5$ essential hidden edges, which implies that every $k$-spanner of $G_r$ contains at least $\Omega(n)$ edges.
\end{lemma}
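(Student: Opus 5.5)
The plan is to combine \Cref{lem:exist-good-skeleton-graph} with the decomposition $B=Z+\overline{Z}$ of the hidden edges into nonessential and essential ones, and then to invoke \Cref{lem:mandatory-edge-criterion}.

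\textbf{Step 1 (counting essential edges).} Since the skeleton $\Sigma^*_N$ is fixed, every skeleton edge $\sigma_t$ contributes exactly one hidden edge $e_t$. Each unordered pair $(u,w)$ carries at most one skeleton edge, so $Z_{u,w}+\overline{Z}_{u,w}=\mathbbm{1}[(u,w)\in E^*_\Sigma]$. Summing over all pairs gives $\overline{Z}=B-Z$. By \Cref{lem:exist-good-skeleton-graph}, with probability at least $39/40$ over $r$ we have $Z<\frac{2cSN}{5}$. The same lemma gives $B\ge \frac{cSN}{2}$, i.e.\ $cSN\le 2B$, so on this event $Z<\frac{4B}{5}$ and hence $\overline{Z}> B/5$.

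\textbf{Step 2 (forcing edges into every spanner).} An essential hidden edge $e=(c_{u,a},c_{w,b})$ satisfies $\dist_{G_r\setminus e}(c_{u,a},c_{w,b})\ge k+1$ by \Cref{def:bad-event-Z}. By the equivalence (i)$\Leftrightarrow$(ii) of \Cref{lem:mandatory-edge-criterion}, every $k$-spanner $H$ of $G_r$ must contain $e$. Distinct $t$ give distinct hidden edges, so every $k$-spanner has at least $B/5$ edges.

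\textbf{Step 3 (relating $B$ to $n$).} From \eqref{eq:G-r-node-size}, $n=2S+2SN+4B$. Because $B=\Theta(cSN)$ and $c=0.01k^{-1}$ is a constant, we get $n=\Theta(SN)=\Theta(B)$. Hence $B/5=\Omega(n)$.

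The only delicate point is the bookkeeping in Step 1: checking that $Z$ and $\overline{Z}$ really partition the $B$ hidden edges, with one hidden edge per skeleton edge, and that the threshold $\frac{2cSN}{5}$ sits below $\frac{4}{5}B$ using $B\ge cSN/2$. Everything else is a direct citation of earlier results.
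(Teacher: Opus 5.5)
Your proposal is correct and follows the paper's proof. Both use the event $Z<\frac{2cSN}{5}$ from Lemma~\ref{lem:exist-good-skeleton-graph}, which holds with probability at least $39/40$, together with $B\ge cSN/2$ and $Z+\overline{Z}=B$, to conclude $\overline{Z}\ge B/5$. Your Steps 2 and 3 apply Observation~\ref{lem:mandatory-edge-criterion} and $n=2S+2SN+4B=\Theta(B)$, which makes explicit the \emph{every $k$-spanner has $\Omega(n)$ edges} consequence that the paper leaves implicit.
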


\begin{lemma} 
\label{obs:k-spanner-Gr}
    For every graph $G_r=(V_r, E_r)\in \mathcal{G}_n$, it is bipartite with girth $4$. In addition, there always exists a $3$-spanner $H_r=(V_r, E'_r)$ of $G_r$ with size $|E'_r|=O(n)$.
\end{lemma}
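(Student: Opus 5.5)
The plan is to verify both claims of \Cref{obs:k-spanner-Gr} directly from the construction of $G_r$ in \Cref{subsec:random-graph-construction}. The claims depend only on the public skeleton and the gadget structure, not on the hidden values $r$ beyond the facts that each cluster has $N\ge 3$ coordinates and $|E^*_\Sigma|\ge 1$.

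\textbf{Bipartiteness and girth.} I will exhibit an explicit $2$-coloring. Put in class $\mathcal{A}$ the coordinate nodes $C_u$ for $u\in U$, the leaders $l_w$ for $w\in W$, the gadgets $g_{t,w}$, and the dummies $d_{t,u}$. Put in class $\mathcal{B}$ the coordinate nodes $C_w$ for $w\in W$, the leaders $l_u$ for $u\in U$, the gadgets $g_{t,u}$, and the dummies $d_{t,w}$. I will then check each edge type of the construction:
\begin{itemize}
\item leader edges $(l_v,c_{v,x})$;
\item hidden edges $(c_{u,a_t},c_{w,b_t})$;
\item gadget edges $(c_{u,x},g_{t,u})$ and $(c_{w,y},g_{t,w})$;
\item dummy edges $(g_{t,u},d_{t,u})$ and $(g_{t,w},d_{t,w})$.
\end{itemize}
Each of these joins $\mathcal{A}$ to $\mathcal{B}$. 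Hence $G_r$ is bipartite, there are no odd cycles, and the girth is at least $4$. For a $4$-cycle, fix any skeleton edge $\sigma_t=(u,w)$ and pick two distinct $x,y\in[N]\setminus\{a_t\}$, which exist since $N\ge 3$. Then $l_u,c_{u,x},g_{t,u},c_{u,y}$ is a $4$-cycle, so the girth is exactly $4$.

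\textbf{The sparse $3$-spanner.} Let $H_r$ contain the following edges:
\begin{itemize}
\item all leader edges, which number $2SN$;
\item all hidden edges, which number $B$;
\item all dummy edges, which number $2B$;
\item for each gadget $g_{t,v}$, exactly one edge $(c_{v,x_0},g_{t,v})$ for some fixed coordinate $x_0\neq a_t$ (respectively $x_0\neq b_t$), which number $2B$.
\end{itemize}
This gives $|E'_r|=2SN+5B=O(n)$, because by \eqref{eq:G-r-node-size} we have $n=2S+2SN+4B$.

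To show $H_r$ is a $3$-spanner, I will use the standard reduction: it suffices that every edge $(x,y)\in E_r$ satisfies $\dist_{H_r}(x,y)\le 3$. Indeed, given a shortest $G_r$-path between two nodes, replacing each of its edges by an $H_r$-path of length at most $3$ gives $\dist_{H_r}\le 3\dist_{G_r}$. The inequality $\dist_{H_r}\ge\dist_{G_r}$ is automatic since $H_r$ is a subgraph of $G_r$. Every edge of $E_r$ not already kept in $H_r$ is a gadget edge $(c_{v,x},g_{t,v})$ with $x\neq x_0$. For such an edge, the path $c_{v,x}\to l_v\to c_{v,x_0}\to g_{t,v}$ lies in $H_r$ and has length $3$.

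There is no real obstacle in this argument. The only points needing care are that $x_0$ is a genuine neighbor of the gadget, which uses $N\ge 2$, and that the edge count is linear in $n$.
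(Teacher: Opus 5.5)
Your proposal is correct and follows the paper's proof. You keep all leader, hidden and dummy edges plus one gadget edge per gadget, which gives $2SN+5B=O(n)$ edges, and you repair each deleted gadget edge $(c_{v,x},g_{t,v})$ with the length-$3$ path through $l_v$ and $c_{v,x_0}$. The paper leaves two points implicit that you spell out: the explicit bipartition together with the $4$-cycle $l_u,c_{u,x},g_{t,u},c_{u,y}$ for the girth claim, and the standard reduction from per-edge stretch $3$ to pairwise stretch $3$.
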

\begin{proof}[Proof of \Cref{lem:Gr-essential-edges-linear}]
    Recall that $Z$ and $\overline{Z}$ denote the number of nonessential hidden edges and essential hidden edges in $G_r$, respectively. By \Cref{lem:exist-good-skeleton-graph}, 
    \[
    \Pr_r\left[Z \geq \frac{2 cSN}{5}\middle|\Sigma_N^*\right] \leq \frac{1}{40}.
    \]
    Given the skeleton $\Sigma_N^*$, the total number of hidden edges is $Z+\overline{Z}=B\geq cSN/2$.
    Conditionally on $Z<2cSN/5$, the number of essential hidden edges is
    \[
    \overline{Z}=B \cdot \left(1-\frac{Z}{B} \right)\geq B \cdot \left(1-\frac{2cSN/5}{cSN/2} \right)= \frac{1}{5}B.
    \]
\end{proof}

\begin{proof}[Proof of \Cref{obs:k-spanner-Gr}]
    It is easy to see that every $G_r=(V_r, E_r)\in \mathcal{G}_n$ is bipartite with girth $4$. In the following, we construct a $3$-spanner $H_r=(V_r, E'_r)$ of $G_r$ by deleting edges in $E_r$.
    
    Go through every gadget node $g_{v, t}$ for $v \in U \cup W$ and $t\in [B]$. Note that it is adjacent to $N-1$ coordinate nodes $c_{v,x}$ for some $x\in [N]$. Keep an arbitrary edge $(g_{v,t}, c_{v,x_v})$ among these $N-1$ edges and remove all the remaining $N-2$ edges. Let $E^{\text{rm}}_r$ be the set of all removed edges. Set $E'_r=E_r \setminus E_r^{\text{rm}}$. Note that $|E_r^{\text{rm}}|=2B(N-2)$ and
    \[
    |E_r'|=|E_r|-|E_r^{\text{rm}}|=2SN+B+2BN-(2BN-4B) =2SN+5B=O(n).
    \]
    For every edge $(g_{v,t}, c_{v,x})\in E_r^{\text{rm}}$, let $c_{v,x_v}\in C_v$ be the coordinate node such that $(g_{v,t}, c_{v,v_x})\in E_r'$. Then, 
    there exists a path $(g_{v,t}, c_{v,v_x}, l_v, g_{v, x})$ between $g_{v,t}$ and $c_{v,x}$ in $H_r=(V_r, E_r')$ with length $3$. Therefore $H_r=(V_r, E_r')$ is a $3$-spanner of $G_r$ with $O(n)$ edges.
\end{proof}

\paragraph{Simulation with BHVF Oracles} Now we prove the following main result.
\begin{theorem}\label{thm:Reduction-BHVF-to-spanner}
    If there exists an algorithm $\mathcal{B}$ (classical or quantum) that solves $(n, k)$-$\spanner$ over $\mathcal{G}_n$ using $Q$ graph queries with probability at least $1-\delta$, there exists an algorithm $\mathcal{A} $ (respectively classical or quantum) that solves $(B=\Theta(n),N=\Theta(n^{1/\mu_k}),d=\frac{1}{5})$-$\bhvf$ using $O(Q)$ $\bhvf$ oracle queries with probability at least $39/40-\delta$.
\end{theorem}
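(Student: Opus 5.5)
The plan follows the template of \Cref{thm:Reduction-RMIF-to-T-Listing}, using \Cref{fact:clean-coherent-graph-reduction}. The algorithm $\mathcal{A}$ receives $r=(a_t,b_t)_{t\in[B]}$ through the $\bhvf$ oracle and runs $\mathcal{B}$ on the implicit graph $G_r$ built from the fixed public skeleton $\Sigma^*_N$. Every graph query is answered with $O(1)$ calls to $O_r$. When $\mathcal{B}$ outputs an edge set $E'$, $\mathcal{A}$ scans $E'$ for edges between coordinate nodes of two different clusters $C_u,C_w$. Such an edge $(c_{u,x},c_{w,y})$ with $\sigma_t=(u,w)$ must be the hidden edge $e_t$, so $\mathcal{A}$ reports $(a_t,b_t)=(x,y)$. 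This post-processing needs no oracle calls, since the only coordinate-to-coordinate edges across clusters are hidden edges.

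For correctness, suppose $\mathcal{B}$ succeeds and $G_r$ has at least $B/5$ essential hidden edges. By \Cref{lem:mandatory-edge-criterion}, every essential hidden edge lies in $E'$, so $\mathcal{A}$ recovers at least $\lceil B/5\rceil$ distinct blocks. By \Cref{lem:Gr-essential-edges-linear}, the essential-edge condition fails with probability at most $1/40$ over $r$. A union bound then gives success probability at least $39/40-\delta$. The parameters follow from $n=\Theta(B)=\Theta(N^{\mu})$, which gives $N=\Theta(n^{1/\mu_k})$.

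The main work is the oracle simulation. Every degree is independent of $r$, so degree queries are free:
\begin{itemize}[label={}]
\item a leader has degree $N$;
\item a coordinate node $c_{v,x}$ has degree $1+\deg_{\Sigma}(v)$, since each incident skeleton edge contributes either the hidden edge or the gadget edge;
\item a gadget has degree $(N-1)+1$;
\item a dummy has degree $1$.
\end{itemize}

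Adjacency queries are handled by cases.
\begin{itemize}[label={}]
\item For $(c_{u,x},c_{w,y})$ with $\sigma_t=(u,w)$, query $O_r(t,A,x)$ and answer $1$ iff it equals $y+1$.
\item For $(c_{u,x},g_{t,u})$, answer $1$ iff $O_r(t,A,x)=0$, and symmetrically use direction $B$ for the $w$ side.
\item All other pairs (leader edges, dummy edges, non-edges) are public.
\end{itemize}

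Neighbor queries use a public ordering.
\begin{itemize}[label={}]
\item For $c_{v,x}$, list $l_v$ first, then one slot per incident skeleton edge $\sigma_t$ in public order. Slot $t$ returns $c_{w,O_r(t,\cdot,x)-1}$ if the query returns a nonzero value and $g_{t,v}$ otherwise, querying direction $A$ or $B$ according to whether $v\in U$ or $v\in W$. This is exactly where bidirectionality is needed: the hidden edge must be discoverable from both endpoints, and the returned value names the partner coordinate.
\item For the gadget $g_{t,u}$, the list has $N-1$ coordinate slots plus the dummy. The natural choice is slot $i\in[N]$: return $c_{u,i}$ if $O_r(t,A,i)=0$ and a fixed substitute otherwise. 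The problem is that one of the $N$ slots is missing, and without locating $a_t$ (which costs $\sqrt N$ queries) it is unclear which slot to fill.
\end{itemize}

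The gadget lists are therefore the main obstacle. I would fix this as in the triangle reduction, where $w'_i$ used dummy substitutes: let the gadget list have length $N$, with the missing slot $i=a_t$ returning the dummy $d_{t,u}$. The dummy then sits at a hidden position rather than a fixed one, but the list is still a consistent ordering of the same neighbor set and has the same public length. The reverse neighbor query from $d_{t,u}$ returns $g_{t,u}$ publicly, so it needs no knowledge of $a_t$.

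With these cases each query costs at most $2$ oracle calls, implemented coherently as in \Cref{fact:clean-coherent-graph-reduction}. This gives $O(Q)$ $\bhvf$ queries.
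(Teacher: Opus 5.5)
Your proposal is correct and follows the paper's proof essentially step for step. It uses the same case-by-case simulation, including the length-$N$ gadget list whose hidden slot $a_t$ returns the dummy $d_{t,u}$, and coordinate-node slots that read the partner coordinate off the bidirectional oracle value. It extracts hidden values from the essential hidden edges via \Cref{lem:mandatory-edge-criterion}, and it applies the same union bound with \Cref{lem:Gr-essential-edges-linear} to get $39/40-\delta$; the only cosmetic difference is that you place the leader $l_v$ first in a coordinate node's neighbor list, whereas the paper places it last.
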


\begin{proof}[Proof of \Cref{thm:Reduction-BHVF-to-spanner}]
Let $\mathcal{B}$ be the given algorithm that solves $(n,k)$-$\spanner$ over $\mathcal{G}_n$ using $Q$ queries. Firstly, note that the algorithm $\mathcal{B}$ can access to adjacency queries, neighborhood queries and degree queries to the input graph. Let all the degree queries be free. In the following, we prove that every adjacency query and every neighborhood query over $\mathcal{G}_n$ can be simulated with $O(1)$ calls to the $\bhvf$ oracle.
\begin{description}
    \item[Adjacency Queries] Given any two input nodes $p,q\in V_r$, we first categorize $p,q$ in the categories of the leader nodes, the coordinate nodes, the gadget nodes, and the dummy nodes. 
    Now we proceed via the following cases:
    \begin{enumerate}
        \item Assume (w.l.o.g.) $p=l_v\in V_l$ for some $v\in U \cup W$. Return $\mathcal{A}(p,q)=1$ if and only if $q=c_{v, i}$ for $i\in [N]$.
        \item Assume (w.l.o.g.) $p=d_{t, u}\in V_d$ for some $t\in [B]$ and $u\in U\cup W$. Return $\mathcal{A}(p,q)=1$ if and only if $q=g_{t, u}$. 
        \item Assume (w.l.o.g.) $p=g_{t, u}\in V_g$ for some $t\in [B]$ and $u\in U\cup W$. In addition, we assume $q=c_{u,x}\in C_u$ for some $x\in [N]$, since otherwise we can return $\mathcal{A}(p,q)=0$. If $u\in U$, set $\tau=A$. Otherwise $u\in W$ and set $\tau=B$.
        Query $O_r(\tau, t, x)$ and return $\mathcal{A}(p,q)=1$ if and only if $O_r(\tau, t, x)=0$. 
        \item In the last case, both $p,q$ are coordinate nodes. W.l.o.g, we assume $p=C_{u,x}$ for some $u\in U, x\in [N]$ and $q=C_{w,y}$ for some $w\in W, y\in [N]$. Otherwise, we can return $\mathcal{A}(p,q)=0$.
        If $(u,w)\notin E_\Sigma^*$, then $\mathcal{A}(p,q)=0$. Otherwise, let $t \in [B]$ be the index such that $\sigma_t=(u,w)\in E_\Sigma^*$. Query $O_r(A,t,x)$ and return $\mathcal{A}(p,q)=1$ if and only if $O_r(A,t,x)=y+1$.
    \end{enumerate}
    \item[Neighborhood Queries] Given an input node $p\in V_r$ and an integer $x \in \mathbb{N}_{\geq 0}$, we assume $l \in [\deg_{G_r}(p)]$ since $\deg_{G_r}(p)$ is given for free.
    \begin{itemize}
        \item For the adjacency list of a leader node $p=l_v\in V_l$, it has length $N$. Given any $x \in [N]$, return $\mathcal{N}(p,x) = c_{v,x}$.
        \item For the adjacency list of a dummy node $p=d_{t,v}\in V_d$, it has length $1$ and return $\mathcal{N}(p,0) = g_{t,v}$.
        \item For the adjacency list of a gadget node $p=g_{t,v}\in V_g$ with some $v\in U \cup W$, it has length $N$. If $v\in U$, set $\tau=A$. Otherwise $v\in W$ and set $\tau=B$.
        Given any $x \in [N]$, query $O_r(\tau, t, x)$ and return 
         \[
        \mathcal{N}(p,x)=\begin{cases}
            d_{t,v} \quad & \text{if } O_r(\tau, t,x)\neq 0,  \\
            c_{v,x} & \text{if } O_r(\tau,t,x)=0.
        \end{cases}
        \]
        \item For the adjacency list of a coordinate node $p=c_{v,y}\in C_v$ with some $v\in U \cup W$, it has length $1+\deg_{\Sigma^*_N}(v)$. 
        Given $x=\deg_{\Sigma^*_N}(v)$, return $\mathcal{N}(p,x)=l_v$.
        
        If $v\in U$, set $\tau=A$. Otherwise $v\in W$ and set $\tau=B$.
        Given any $x \in [\deg_{\Sigma^*_N}(v)]$, let $v_x\in U\cup W$ be the $x$-th neighbor of $v$ in $\Sigma^*_N$ and let $t_x\in [B]$ be the index such that $\sigma_{t_x}=(v, v_x)\in E_\Sigma^*$. Query $O_r(\tau, t_x, y)$ and return 
         \[
        \mathcal{N}(p,x)=\begin{cases}
            c_{v_x,z-1} \quad & \text{if } O_r(\tau, t_x, y)= z \text{ for some } z\in [1,N+1],  \\
            g_{v,t_x} & \text{if } O_r(\tau, t_x, y)=0.
        \end{cases}
        \]
    \end{itemize}
    \end{description}
    Therefore, every adjacency query and every neighborhood query over $\mathcal{G}_n$ can be simulated with $O(1)$ calls to the $\bhvf$ oracle.

    Given a random input $r=(r_t)_{t\in [B]}$, we construct the corresponding graph $G_r$ from the fixed skeleton $\Sigma_N^*$. Let $\overline{Z}$ be the number of essential hidden edges. Set up $\mathcal{F}$ to be the event that $\overline{Z}\geq B/5$. By \Cref{lem:Gr-essential-edges-linear}, 
    $\Pr_r[\mathcal{F}]\geq 39/40$.
    We can simulate the algorithm $\mathcal{B}$ that solves $(n,k)$-$\spanner$ with probability $1-\delta$ with quantum query complexity $O(Q)$. Let $H$ be the output of $\mathcal{B}$.
    Assume the event $\mathcal{F}$ happens and the output $H$ is a $k$-spanner. It occurs with probability
    \begin{align*}
        \Pr_{r, \mathcal{B}}[\mathcal{F} \text{ and } H \text{ is a }k\text{-spanner}] 
    & \geq 
    1-\Pr_{r, \mathcal{B}}[\overline{\mathcal{F}}]-\Pr_{r, \mathcal{B}}[ H \text{ is a not }k\text{-spanner}]\\
    & \geq  1-\frac{1}{40}-\delta =  \frac{39}{40}-\delta.
    \end{align*}
    Since every $k$-spanner contains all the essential hidden edges, the output $H$ contains all the essential hidden edges. For every hidden edge $e_t=(c_{u,a}, c_{w,b})$ in the spanner $H$ , recover the hidden value of block $t$ by setting $(r_t^A, r_t^B)=(a, b)$.
    
    Note that there are at least $B/5$ essential hidden edges in $G_r$. By utilizing the output of algorithm $\mathcal{B}$, we can recover at least $B/5$ hidden values.
    The overall success probability over both the inputs and the algorithm is at least $39/40-\delta$. 

\end{proof}
\paragraph{Proof of \Cref{thm:generic-lower-bound}} Now we complete the proof of \Cref{thm:generic-lower-bound}.
Given any integer $k\geq 7$, set $\mu=\mu_k\geq 2$.
For every sufficiently large $N$, set up the skeleton graph $\Sigma_N^*=(U,W,E_\Sigma^*)$. Set $B=|E_\Sigma^*|=\Theta(N^{\mu})$ and $n=2N^{\mu-1}+2N^{\mu}+4B=\Theta(N^{\mu})$. Set up the graph class $\mathcal{G}_n =\{G_r\}$. Consider any algorithm $\mathcal{B}$ that solves $(n,k)$-$\spanner$ over $\mathcal{G}_n$ with $Q$ quantum graph queries. By \Cref{thm:Reduction-BHVF-to-spanner}, the task $(B, N, d=\frac{1}{5})$-$\bhvf$ can be solved with $Q'=O(Q)$ quantum queries to the $\bhvf$ oracle. By \Cref{thm:bhvf}, $Q'=\Omega(B\sqrt{N})$. Therefore,
\[
Q=\Omega\left(B \sqrt{N}\right)
=\Omega\left(N^{\mu+{\frac{1}{2}}}\right)
=\Omega\left(n^{1+{\frac{1}{2\mu}}}\right).
\]

\section{Lower bounds for the BHVF Problem}\label{sec:BHVF-lb-proof}
In this section, we give the proof of \Cref{thm:bhvf}.
\subsection{The Standard Query Model}

\paragraph{The Input Space} For each position $j\in [N]=\{0,1,\ldots, N-1\}$, we use $\ket{j}$ to denote the quantum state corresponding to the column vector of size $N$ with value $1$ at position $j$ and $0$ everywhere else. For any $i,j\in [N]$, note that $\brakett{i}{j}=\mathbbm{1}[i=j]$. We use $\mathcal H_{\inp}$ to denote the input space over all $B$ blocks such that
\[
\mathcal H_{\inp}=\left(\mathcal H_{\inp,\loc} \mathcal\otimes H_{\inp,\loc} \right)^{\otimes B} \qquad \text{where }     \mathcal H_{\inp,\loc}=\spanop\{\ket{0}, \ket{1},\ldots, \ket{{N-1}}\}.
\]
A computational basis state of \(\mathcal H_{\inp}\) is written as
\[
    \bigotimes_{t\in [B]}\ket{i_t,j_t}
    \qquad \text{for any }
    i_t, j_t\in [N]=\{0,\ldots,{N-1}\}.
\]
\paragraph{The Algorithm Space} The algorithm space $\mathcal{H}_\Alg$ is over three registers $Q,P,W$, where the query register $Q$ holds a block index $t\in [B]$, a type index $\tau \in \{A,B\}$ and a position index $x\in [N]$, the phase register $P$ holds $u\in \mathbb{Z}_{N+1}$
and the working register $W$ holds some value $w$. The algorithm space \(\mathcal H_{\Alg}\) is spanned by basis states
\[
    \ket{t,\tau , x}_Q\ket{u}_P\ket{w}_W.
\]
We may drop the subscript $QPW$ when it is clear from the context. The workspace contains a designated classical output substring, which is measured only at the end.
\paragraph{The Query Oracle} Given an input $r=(r_t)_{t\in [B]}=(r^A_t, r^B_t)_{t\in [B]}$ with $r^A_t, r^B_t \in [N]$, the query oracle $\mathcal{O}_r$ is defined by
\[
\mathcal{O}_r \ket{t,\tau,x,u,w} =  \omega^{u \cdot f_t (\tau, x)} \ket{t,\tau,x,u,w}
\qquad \text{where } \omega=e^{2\pi i/(N+1)}.
\]
Recall that $f_t(A, x)= (r_t^B +1) \cdot \mathbb{1}[x=r_t^A]$ and $f_t(B, x)= (r_t^A +1) \cdot \mathbb{1}[x=r_t^B]$.
\paragraph{Quantum Query Algorithms in the Standard Model} A $T$-query algorithm is specified by a sequence of $U_0, U_1, \ldots, U_T$ of unitary operators acting on the algorithm space $\mathcal{H}_\Alg$. Given an input $r=(r_t)_{t\in [B]}$, the state $\ket{\psi_t^r}$ of the algorithm after $t\leq T$ queries to some input $r$ is 
\[
    \ket{\psi^r_t}
    =
    U_t\mathcal{O}_rU_{t-1}\cdots U_1\mathcal{O}_rU_0
    \ket0_{\Alg}.
\]
Next, we consider the quantum state over the algorithm space $\mathcal{H}_{\Alg}$ and the input space $\mathcal{H}_\inp$. The query operator $\mathcal{O}$ is a unitary transformation acting on the joint space $\mathcal{H}_{\Alg}\otimes \mathcal{H}_\inp$ such that
\[
\mathcal{O}\ket{t,\tau, x,u,w}\ket{r} =(\mathcal{O}_r\ket{t,\tau, x,u,w})\otimes\ket{r}.
\]
Note that the input $r=(r^A_t, r^B_t)_{t\in [B]}$ is sampled from $\mathcal{D}_{B,N}=\bigotimes_{t\in [B]}\text{Uniform}([N]\times [N])$.
The joint state $\ket{\psi_t}$ over the joint space $\mathcal{H}_{\Alg}\otimes \mathcal{H}_\inp$ after $t\leq T$ queries to some input $r$ is 
\[
    \ket{\psi_t}
    =
    \left(U_t\otimes \mathbb{I}_{\inp}\right) 
    \mathcal{O} 
    \left(U_{t-1}\otimes \mathbb{I}_{\inp}\right) 
    \ldots 
    \mathcal{O}\left(U_0\otimes \mathbb{I}_{\inp}\right) 
    \ket{0}_{\Alg}  \ket{\mathcal{D}}_{\inp}.
    \]
where the initial input state is $\ket{\mathcal{D}}_\inp= [\ket{\tilde{u}}\otimes \ket{\tilde{u} }]^{\otimes B}$ with $ \ket{\tilde{u}}=\frac1{\sqrt N}\sum_{i\in [N]}\ket{i}$.
\subsection{The Recording Query Model}
\paragraph{The Recording Space} First, we introduce an additional state $\ket{\perp}$ that is orthogonal to 
\\$\spanop\{\ket{0}, \ket{1},\ldots, \ket{{N-1}}\}$. We use $\mathcal H_{\rec}$ to denote the recording space over all $B$ blocks such that
\[
\mathcal H_{\rec}=\left(\mathcal H_{\rec,\loc}\otimes \mathcal H_{\rec,\loc} \right)^{\otimes B} \qquad \text{where }     \mathcal H_{\rec,\loc}=\spanop\{\ket{\perp}, \ket{0}, \ket{1},\ldots, \ket{{N-1}}\}.
\]
A computational basis state of \(\mathcal H_{\rec}\) is written as $ \bigotimes_{t\in [B]}\ket{i_t, j_t}$ for any $i_t, j_t\in\{\perp\}\cup [N]$. In this Hilbert space, we extend the Query Oracle as identity whenever $r_t$ has a $\perp$ coordinate.
\begin{definition}[Local Swap]\label{def:local-swap-S}
Set up the unitary operator \(S:\mathcal H_{\rec, \loc}\to\mathcal H_{\rec, \loc}\) such that 
\[
S:
\begin{cases}
    \ket\perp \rightarrow \ket{\tilde{u}}, \\
    \ket{\tilde{u}} \rightarrow \ket\perp,\\
    \ket\phi \rightarrow \ket\phi, \qquad \text{if } \ket{\phi} \in \spanop\{\ket{\perp}, \ket{\tilde{u}}\}^{\perp}
\end{cases}
\]
where $\ket{\tilde{u}}=\frac{1}{\sqrt{N}}\sum_{i\in [N]}\ket{i}$. Specifically, for each state $\ket{i}$ with $i\in [N]$, 
\[
S: 
\ket{i} \rightarrow
\ket{i}+
\frac{1}{\sqrt{N}}\ket{\perp} -\frac{1}{\sqrt{N}}\ket{\tilde{u}}.
\]
\end{definition}

\begin{lemma}[Matrix Representation of $S$] \label{lem:S-matrix}
Consider the basis of $\mathcal{H}_{\rec, \loc}$ in the order $\bigl(\ket\perp,\ket{0},\ket{1},\ldots,\ket{{N-1}}\bigr)$. Then, 
\begin{equation}
\label{eq:S-matrix}
    S
    =
    \begin{pmatrix}
        0 & \frac1{\sqrt N}\mathbf 1_N^{\mathsf T} \\
        \frac1{\sqrt N}\mathbf 1_N
        &
        \Id_N-\frac1N\mathbf 1_N\mathbf 1_N^{T}
    \end{pmatrix}.
\end{equation}
where \(\mathbf 1_N\in\mathbb C^N\) is the all-ones column vector and \(\Id_N\) is the identity matrix of size \(N\times N\).
\end{lemma}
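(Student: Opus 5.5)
The plan is to read off the matrix of $S$ column by column from its action on the ordered basis $(\ket\perp,\ket0,\ldots,\ket{N-1})$. Since $S$ is defined by what it does on $\spanop\{\ket\perp,\ket{\tilde u}\}$ and its orthogonal complement, I will first write $S$ as an operator identity, $S=\Id-\proj{\perp}-\proj{\tilde u}+\ket{\tilde u}\bra{\perp}+\ket{\perp}\bra{\tilde u}$. This identity follows from checking three cases. On $\ket\perp$ the right-hand side gives $\ket\perp-\ket\perp+\ket{\tilde u}=\ket{\tilde u}$. On $\ket{\tilde u}$ it gives $\ket{\tilde u}-\ket{\tilde u}+\ket\perp=\ket\perp$. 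On any $\ket\phi$ orthogonal to both vectors every rank-one term vanishes, so $\ket\phi$ is fixed. These cases use $\brakett{\perp}{\tilde u}=0$, which holds because $\ket\perp$ is orthogonal to $\spanop\{\ket0,\ldots,\ket{N-1}\}$. Since the three subspaces span $\mathcal H_{\rec,\loc}$, the identity holds.

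Next I apply the identity to basis vectors. The first column is $S\ket\perp=\ket{\tilde u}=\frac1{\sqrt N}\sum_i\ket i$. This gives a top-left entry $0$ and a lower block $\frac1{\sqrt N}\mathbf 1_N$. For $\ket i$ with $i\in[N]$, the terms involving $\bra\perp$ vanish, and I use $\brakett{\tilde u}{i}=1/\sqrt N$. This yields $S\ket i=\ket i-\frac1{\sqrt N}\ket{\tilde u}+\frac1{\sqrt N}\ket\perp$, which also confirms the explicit formula in \Cref{def:local-swap-S}. Expanding $\frac1{\sqrt N}\ket{\tilde u}=\frac1N\sum_j\ket j$ gives column $i$. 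Its $\ket\perp$-entry is $\frac1{\sqrt N}$, which makes the first row $\frac1{\sqrt N}\mathbf 1_N^{\mathsf T}$. Its $[N]$-part is $e_i-\frac1N\mathbf 1_N$, which is the $i$-th column of $\Id_N-\frac1N\mathbf 1_N\mathbf 1_N^{\mathsf T}$. Assembling these columns gives \eqref{eq:S-matrix}.

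There is no real obstacle here. The only points needing care are that the decomposition into the span and its complement is exhaustive, which justifies the operator identity, and keeping the normalization $1/\sqrt N$ versus $1/N$ straight. As an optional sanity check, the resulting matrix is real symmetric and squares to the identity, consistent with $S$ being a swap.
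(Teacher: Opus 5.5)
Your proof is correct and is essentially the paper's argument. Both reduce to checking the action on $\ket{\perp}$, on $\ket{\tilde u}$, and on $\spanop\{\ket{\perp},\ket{\tilde u}\}^{\perp}$. The paper checks the candidate block matrix directly against these three conditions, whereas you first encode them in the outer-product formula $S=\Id-\proj{\perp}-\proj{\tilde u}+\ket{\tilde u}\bra{\perp}+\ket{\perp}\bra{\tilde u}$ and then read off the columns, which also proves the formula for $S\ket{i}$ asserted in \Cref{def:local-swap-S} and makes $S=S^{\dagger}$ immediate.
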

\begin{proof}
    See \Cref{Proof:S-Matrix}
\end{proof}
\begin{corollary}
    The unitary operator $S$ is Hermitian and $S=S^\dagger=S^{-1}$.
\end{corollary}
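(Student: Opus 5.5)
We need to prove the corollary that $S$ is Hermitian and $S=S^\dagger=S^{-1}$. The plan is to read both properties directly off the matrix representation in \Cref{lem:S-matrix} and the defining action of $S$ in \Cref{def:local-swap-S}.

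\emph{Hermiticity.} In the ordered basis $(\ket\perp,\ket0,\ldots,\ket{N-1})$, the matrix in \eqref{eq:S-matrix} has only real entries. Its off-diagonal blocks are $\frac1{\sqrt N}\mathbf 1_N^{\mathsf T}$ and $\frac1{\sqrt N}\mathbf 1_N$, which are transposes of one another. Its diagonal blocks, $0$ and $\Id_N-\frac1N\mathbf 1_N\mathbf 1_N^{\mathsf T}$, are real symmetric. Hence the matrix equals its conjugate transpose, so $S=S^\dagger$.

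\emph{Involution.} Next we show $S^2=\Id$, which we check on a convenient decomposition rather than by multiplying block matrices. The space $\mathcal H_{\rec,\loc}$ splits as $V\oplus V^\perp$ with $V=\spanop\{\ket\perp,\ket{\tilde u}\}$. Note that $\ket\perp$ and $\ket{\tilde u}$ are orthonormal, since $\ket\perp$ is orthogonal to every $\ket i$ with $i\in[N]$. By \Cref{def:local-swap-S}, $S$ acts as the identity on $V^\perp$. On $V$ it swaps the two orthonormal vectors, so in that basis it is the matrix $\begin{pmatrix}0&1\\1&0\end{pmatrix}$. This matrix squares to the identity. Therefore $S^2=\Id$ on all of $\mathcal H_{\rec,\loc}$, i.e.\ $S^{-1}=S$.

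Combining the two facts gives $S=S^\dagger=S^{-1}$. As a by-product, $S^\dagger S=S^2=\Id$, which confirms that $S$ is unitary.

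If one prefers a purely matrix-based check of the involution, one can instead compute $S^2$ blockwise from \eqref{eq:S-matrix}. Write $J=\frac1N\mathbf 1_N\mathbf 1_N^{\mathsf T}$ and use $J^2=J$ and $\mathbf 1_N^{\mathsf T}(\Id_N-J)=0$. The top-left block is $\frac1N\mathbf 1_N^{\mathsf T}\mathbf 1_N=1$. The off-diagonal blocks are $\frac1{\sqrt N}\mathbf 1_N^{\mathsf T}(\Id_N-J)=0$ and its transpose, also $0$. The bottom-right block is $J+(\Id_N-J)^2=J+\Id_N-J=\Id_N$. So $S^2=\Id_{N+1}$.

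There is no real obstacle here. The only point needing care is to confirm that $\ket\perp$ and $\ket{\tilde u}$ are orthonormal, so that the swap description on $V$ is a genuine orthogonal decomposition.
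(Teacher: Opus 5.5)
Your proof is correct and follows essentially the paper's route. The paper gives no separate argument for this corollary: it reads $S=S^\dagger$ off the real symmetric matrix in \Cref{lem:S-matrix} and combines it with the unitarity of $S$ built into \Cref{def:local-swap-S}. The only minor difference is that you verify $S^2=\Id$ directly, via the swap on $\spanop\{\ket\perp,\ket{\tilde u}\}$ or blockwise, so unitarity becomes a consequence rather than an assumption.
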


\paragraph{The Recording Oracle} We define the global recording transform $\mathcal{T}$ and recording query operator $\mathcal{R}$ over $\mathcal{H}_{\Alg} \otimes \mathcal{H}_\rec$ such that
\[
\mathcal{T}= \mathbb{I}_{QPW} \otimes \left((S\otimes S)^{\otimes B}\right), 
\qquad 
\mathcal{R}=\mathcal{T}^\dagger \mathcal{O}\mathcal{T}=\mathcal{T} \mathcal{O}\mathcal{T}.
\]

\begin{theorem}[Theorem 3.3, \cite{HamoudiMagniez2023}, adapted]\label{thm:relation-phi-psi}
    Let $r=(r_t)_{t\in [B]}$ be the input sampled from the distribution $\bigotimes_{t\in [B]}\text{Uniform}([N]\times [N])$ and $(U_0, U_1, \ldots, U_T)$ be a $T$-query quantum algorithm. Set up $\ket{\psi_T}$ and $\ket{\phi_T}$ to be the joint state in the standard query model and the recording query model, respectively, where
    \[
    \ket{\psi_T}
    =
    \left(U_T\otimes \mathbb{I}_{\inp}\right) 
    \mathcal{O}
    \left(U_{T-1}\otimes \mathbb{I}_{\inp}\right) 
    \ldots 
    \mathcal{O}\left(U_0\otimes \mathbb{I}_{\inp}\right) 
    \ket{0}_{\Alg}  \ket{\mathcal{D}}_{\inp}, 
    \]
    \[
    \ket{\phi_T}
    =
    \left(U_T\otimes \mathbb{I}_{\rec}\right) 
    \mathcal{R} 
    \left(U_{T-1}\otimes \mathbb{I}_{\rec}\right) 
    \ldots 
    \mathcal{R}
    \left(U_0\otimes \mathbb{I}_{\rec}\right) 
    \ket{0}_{\Alg}  \left(\ket{\perp\perp}^{\otimes  B}\right)_{\inp}.
    \]
    Then they satisfy that
    \[
    \ket{\psi_T} = \mathcal{T} \ket{\phi_T}.
    \]
\end{theorem}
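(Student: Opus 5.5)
The proof is a telescoping argument. It rests on three facts: $\mathcal T$ is an involution, $\mathcal T$ commutes with every algorithm unitary, and $\mathcal T$ sends the vacuum record to the purified input. Throughout, $\mathcal O$ is extended to $\mathcal H_{\Alg}\otimes\mathcal H_{\rec}$ by letting it act as the identity on any basis state in which some queried coordinate equals $\perp$. Any unitary extension would work, as long as the same $\mathcal O$ appears in both $\mathcal R=\mathcal T\mathcal O\mathcal T$ and the standard evolution. The standard state $\ket{\psi_T}$ lives in $\mathcal H_{\Alg}\otimes\mathcal H_{\inp}$, which we regard as a subspace of $\mathcal H_{\Alg}\otimes\mathcal H_{\rec}$. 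On that subspace the extended $\mathcal O$ agrees with the original one, so the standard evolution is unchanged by the embedding.

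We record three elementary facts.

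\textbf{Involution.} By the Corollary to \Cref{lem:S-matrix}, $S=S^\dagger=S^{-1}$. Hence $\mathcal T=\Id_{QPW}\otimes(S\otimes S)^{\otimes B}$ satisfies $\mathcal T^2=\Id$ and $\mathcal T^\dagger=\mathcal T$. In particular $\mathcal R=\mathcal T\mathcal O\mathcal T$ is unitary.

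\textbf{Commutation.} Each $U_i\otimes\Id_{\rec}$ acts only on $QPW$, while $\mathcal T$ acts as the identity on $QPW$. Therefore $\mathcal T(U_i\otimes\Id)=(U_i\otimes\Id)\mathcal T$.

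\textbf{Initial state.} By \Cref{def:local-swap-S}, $S\ket\perp=\ket{\tilde u}$. Hence $(S\otimes S)^{\otimes B}\ket{\perp\perp}^{\otimes B}=(\ket{\tilde u}\ket{\tilde u})^{\otimes B}=\ket{\mathcal D}_{\inp}$, and so $\mathcal T\ket0_{\Alg}\ket{\perp\perp}^{\otimes B}=\ket0_{\Alg}\ket{\mathcal D}_{\inp}$.

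Next we apply $\mathcal T$ to $\ket{\phi_T}$ and push it rightward. Commutation moves $\mathcal T$ past $U_T$. When $\mathcal T$ meets a factor $\mathcal R$, we compute
\[
\mathcal T\mathcal R=\mathcal T\mathcal T\mathcal O\mathcal T=\mathcal O\mathcal T,
\]
so $\mathcal R$ becomes $\mathcal O$ and $\mathcal T$ continues to the right. After induction on the number of queries (or a single telescoping of the product), we obtain
\[
\mathcal T\ket{\phi_T}=(U_T\otimes\Id)\,\mathcal O\cdots\mathcal O\,(U_0\otimes\Id)\,\mathcal T\ket0\ket{\perp\perp}^{\otimes B}.
\]
By the initial-state fact, the right-hand side equals $\ket{\psi_T}$.

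The only step needing care is conceptual rather than computational. The paper defines $\mathcal R$ with the global transform $\mathcal T$ rather than the query-controlled $S_{\mathcal D}$ of \eqref{eq:SD}. With the global choice the identity is exact and purely algebraic, and it needs no property of $\mathcal O$ beyond unitarity. The recording interpretation is not needed here: the fact that $\mathcal R$ touches only the queried block's two coordinates, where $\mathcal T$ and $S_{\mathcal D}$ differ on the unqueried blocks, is used in the later progress bounds and can be checked separately.
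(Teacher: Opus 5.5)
Your proof is correct and follows the paper's argument essentially verbatim. The paper also telescopes $\mathcal T\ket{\phi_T}$ into $\ket{\psi_T}$ using $\mathcal T\mathcal R=\mathcal T\mathcal T^{\dagger}\mathcal O\mathcal T=\mathcal O\mathcal T$, the commutation of $\mathcal T$ with each $U_i\otimes\mathbb{I}$, and $\mathcal T\ket{0}\ket{\perp\perp}^{\otimes B}=\ket{0}\ket{\mathcal D}$; your added remarks on embedding $\mathcal H_{\inp}$ into $\mathcal H_{\rec}$ and on extending $\mathcal O$ only make explicit what the paper leaves implicit.
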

\begin{proof}
    For sake of completeness, the proof is included in \Cref{append:Proof-of-psi-phi}.
\end{proof}
\subsection{Analysis of the Recording Progress}
Recall the recording space $\mathcal{H}_{\rec}=(\mathcal{H}_{\rec, \loc}\otimes \mathcal{H}_{\rec, \loc})^{\otimes B}$ where $\mathcal{H}_{\rec, \loc}$ is spanned by $\{\ket{\perp} \} \cup [N]$. Given a computational basis  $\ket{r}=\bigotimes_{t\in [B]} \ket{r_t}$ of the recording space with $r_t \in \{\perp\perp \}\cup [N]\times [N] \cup \{\perp\}\times [N]\cup [N]\times \{\perp\}$, we say the block $t\in [B]$ is vacuum if its local state $\ket{r_t}=\ket{\perp\perp}$. Otherwise, we say it is nonvacuum. 
Set up its weight
\[
    \wt(r)=\abs{\{t\in [B]:\ket{r_t}\ne \ket{\perp\perp}}
\]
as the number of nonvacuum blocks in $\ket{r}$. We define the following projectors to measure the recording progress.
\begin{definition}
    For every nonnegative integer $k\geq 0$, we define the projectors by giving the basis states on which they project:
    \begin{itemize}
        \item $\Pi_{=k}$, $\Pi_{\leq k}$ and $\Pi_{\geq k}$: all basis states $\ket{t, \tau, x,u,w}\ket{r}$ such that its weight $\wt(r)=k$, $\wt(r)\leq k$, and $\wt(r)\geq k$ respectively.
        \item $\Pi_{=k, \perp}:$ all basis states $\ket{t, \tau, x,u,w}\ket{r}$ such that $r$ has weights $\wt(r)=k$ and $\ket{r_t}=\ket{\perp\perp}$.
    \end{itemize}
\end{definition}
\begin{definition}[Progress Measurement]
We define the measure of progress $\Delta_{t, k}$ for $t$ queries and $k$ nonvacuum blocks as
    \[
    \Delta_{t,k}=\norm{\Pi_{\ge k}\ket{\phi_t}},
    \]
where $\ket{\phi_t}$ is the joint state after $r$ queries in the recording query model. 
\end{definition}
In this section, we first prove the recurrence formula of the progress $\Delta_{t,k}$ stated in \Cref{lem:progress-recurrence'} and then the upper bound of the progress $\Delta_{t,k}$ stated in \Cref{lem:binomial-progress'}.

\begin{lemma}
\label{lem:progress-recurrence'}
For all \(t,k\ge0\), it has $\Delta_{t+1,k+1}
    \le
    \Delta_{t,k+1}
    +
    \frac2{\sqrt N}\Delta_{t,k}.$
\end{lemma}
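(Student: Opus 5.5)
We follow the standard Hamoudi--Magniez progress argument, adapted to the two-coordinate block record. Write $\ket{\phi_{t+1}}=U_{t+1}\mathcal{R}\ket{\phi_t}$. Every $U_{t+1}$ acts only on $\mathcal{H}_{\Alg}$, so it commutes with the weight projectors. Hence $\Delta_{t+1,k+1}=\norm{\Pi_{\ge k+1}\mathcal{R}\ket{\phi_t}}$. We split $\ket{\phi_t}=\Pi_{\ge k+1}\ket{\phi_t}+\Pi_{\le k-1}\ket{\phi_t}+\Pi_{=k}\ket{\phi_t}$ and refine the last term as $\Pi_{=k}=\Pi_{=k,\perp}+(\Pi_{=k}-\Pi_{=k,\perp})$, where the queried block $t$ is read from the $Q$ register. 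The triangle inequality then bounds $\Delta_{t+1,k+1}$ by four contributions.

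The structural fact we need is the analogue of \Cref{fact:support}. On a basis state whose query register names block $t$, $\mathcal{R}=\mathcal{T}\mathcal{O}\mathcal{T}$ acts nontrivially only on the two local recording registers of block $t$. This holds because $\mathcal{O}$ is diagonal and acts only on block $t$'s input, and $S\otimes S$ on the other blocks cancels with itself since $S^2=\Id$. Consequently $\mathcal{R}$ changes the weight by at most one, and it changes the weight only if the status of block $t$ changes.

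This gives three of the four contributions. The term $\Pi_{\le k-1}\ket{\phi_t}$ is mapped to weight $\le k$, so it contributes nothing. On the term $(\Pi_{=k}-\Pi_{=k,\perp})\ket{\phi_t}$ the queried block is already nonvacuum; the other blocks are untouched, so the weight stays $\le k$ and the contribution is again zero. For the remaining high-weight piece we use orthogonality rather than the triangle inequality. Since $\mathcal{R}$ is unitary, $\norm{\Pi_{\ge k+1}\mathcal{R}\Pi_{\ge k+1}\ket{\phi_t}}\le\Delta_{t,k+1}$. This yields the term $\Delta_{t,k+1}$.

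The main step is to bound the vacuum-queried piece by $\frac{2}{\sqrt N}\norm{\Pi_{=k,\perp}\ket{\phi_t}}\le\frac{2}{\sqrt N}\Delta_{t,k}$. Because $\mathcal{R}$ preserves the $Q$ and $P$ registers and acts only on block $t$, it suffices to prove a local bound. For every fixed $(\tau,x,u)$, let $R_{\tau,x,u}=(S\otimes S)O_{\tau,x,u}(S\otimes S)$ be the local operator. We must show $\norm{(\Id-\proj{\perp\perp})R_{\tau,x,u}\ket{\perp\perp}}\le 2/\sqrt N$. The bound then lifts to the whole $\Pi_{=k,\perp}$ piece, since distinct values of $(t,\tau,x,u)$ and distinct configurations of the other blocks give orthogonal images.

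For the local bound we compute $R\ket{\perp\perp}=(S\otimes S)O\ket{\tilde u\tilde u}$. For $\tau=A$ we have $O\ket{\tilde u\tilde u}=\ket{\tilde u\tilde u}+\ket{\delta}$, where
\[
\ket{\delta}=\frac1N\sum_{b}(\omega^{u(b+1)}-1)\ket{x,b}.
\]
Then $\norm{\ket\delta}\le\frac{2}{N}\sqrt N=\frac{2}{\sqrt N}$. Since $(S\otimes S)\ket{\tilde u\tilde u}=\ket{\perp\perp}$, we get $(\Id-\proj{\perp\perp})R\ket{\perp\perp}=(\Id-\proj{\perp\perp})(S\otimes S)\ket\delta$. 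Its norm is at most $\norm{\ket\delta}\le 2/\sqrt N$, and the case $\tau=B$ is symmetric. The only delicate points are the orthogonality bookkeeping in the lifting step and the exact cancellation of $S$ on the unqueried blocks; once those are checked, summing the contributions gives $\Delta_{t+1,k+1}\le\Delta_{t,k+1}+\frac{2}{\sqrt N}\Delta_{t,k}$.
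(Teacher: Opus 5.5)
Your proof is correct and follows the paper's argument. You commute $U_{t+1}$ past the weight projector, keep only the high-weight piece and the weight-$k$ piece with a vacuum queried block, bound the first by unitarity, and reduce the second to a single-block estimate using orthogonality of the images of distinct basis states. The one variation is the local step: you bound the escape from $\ket{\perp\perp}$ by $\norm{\ket\delta}\le 2/\sqrt N$, whereas the paper computes $\beta_u^\perp=1-\frac1N-\frac1{N^2}$ exactly via the root-of-unity sum and uses $1-\abs{\beta_u^\perp}^2$. Both routes give the same $2/\sqrt N$.
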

\begin{lemma}
\label{lem:binomial-progress'}
For all integers \(0\le k\le t\), the progress $\Delta_{t,k}
    \le
    \binom tk
    \left(\frac2{\sqrt N}\right)^k$.
\end{lemma}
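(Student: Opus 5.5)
The plan is to prove \Cref{lem:binomial-progress'} by induction on $t$, using the recurrence of \Cref{lem:progress-recurrence'} as the only nontrivial input; no further recording-specific estimates are needed. We extend the claimed bound to all $k\ge 0$ by the convention $\binom{t}{k}=0$ for $k>t$, and prove
\[
\Delta_{t,k}\le \binom{t}{k}\left(\frac{2}{\sqrt N}\right)^k \quad\text{for all } t,k\ge 0 .
\]

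\emph{Base cases.} For $k=0$ we have $\Delta_{t,0}=\norm{\Pi_{\ge 0}\ket{\phi_t}}=\norm{\ket{\phi_t}}=1=\binom{t}{0}$, since every basis state has weight at least $0$ and $\ket{\phi_t}$ is a unit vector (all $U_i$ and $\mathcal{R}=\mathcal{T}\mathcal{O}\mathcal{T}$ are unitary). For $k>t$ we need $\Delta_{t,k}=0$. The analogue of \Cref{fact:support} for our block-level recording gives this: each application of $\mathcal{R}$ acts only on the block register selected by the query index. Concretely, $\mathcal{T}\mathcal{O}\mathcal{T}$ restricted to query block $t'$ equals $(S\otimes S)_{t'}\,\mathcal{O}\,(S\otimes S)_{t'}$, because the $S$ factors on all other blocks cancel, $\mathcal{O}$ being diagonal in those blocks' input registers on the relevant branch. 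Hence $\mathcal{R}$ changes the vacuum status of at most one block, and $\wt$ increases by at most one per query. Starting from $\ket{\perp\perp}^{\otimes B}$ of weight $0$, the state $\ket{\phi_t}$ is therefore supported on weight at most $t$, so $\Pi_{\ge k}\ket{\phi_t}=0$ for $k>t$. In particular $\Delta_{0,k}=0$ for all $k\ge 1$.

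\emph{Inductive step.} Assume the bound holds at step $t$ for all $k$. For $k+1\ge 1$, \Cref{lem:progress-recurrence'} gives
\[
\Delta_{t+1,k+1}\le \binom{t}{k+1}\left(\frac{2}{\sqrt N}\right)^{k+1}+\frac{2}{\sqrt N}\binom{t}{k}\left(\frac{2}{\sqrt N}\right)^{k}=\binom{t+1}{k+1}\left(\frac{2}{\sqrt N}\right)^{k+1}
\]
by Pascal's rule. Together with the $k=0$ case at step $t+1$, this closes the induction.

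The only point requiring care is the support claim used for $k>t$. Strictly speaking, the induction via Pascal's rule already yields $\Delta_{t,k}\le 0$ for $k>t$ once the base values $\Delta_{0,k}=0$ for $k\ge1$ are in place, since $\ket{\phi_0}=U_0\ket0\ket{\perp\perp}^{\otimes B}$ has weight $0$. So the argument is fully self-contained given \Cref{lem:progress-recurrence'}, and I expect no real obstacle: the substantive work lies in the recurrence itself, which the statement allows us to assume.
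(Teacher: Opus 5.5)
Your proof is correct and follows the paper's argument: induction on $t$ using \Cref{lem:progress-recurrence'} and Pascal's identity, with base values $\Delta_{0,0}=1$ and $\Delta_{0,k}=0$ for $k\ge 1$. Two differences are minor. The convention $\binom{t}{k}=0$ for $k>t$ folds the paper's separate $t=k$ case into the general step, and it makes $\Delta_{t,k}=0$ for $k>t$ follow from the induction itself. You also treat the case $k=0$, $t\ge 1$ explicitly, which the paper leaves implicit.
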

\subsubsection{Proof of \Cref{lem:progress-recurrence'}}
By definition, 
    \begin{align*}
    \Delta_{t+1,k+1}=\norm{\Pi_{\geq k+ 1} \ket{\phi_{t+1}}}=\norm{\Pi_{\geq k+ 1} \mathcal{T}\ket{\psi_{t+1}}}
    &=\norm{\Pi_{\geq k+ 1} \mathcal{T}(U_{t+1}\otimes \mathbb{I})\mathcal{O}\ket{\psi_{t}}}\\
    &=
    \norm{\Pi_{\ge k+1}\left(U_{t+1}\otimes \mathbb{I}\right) \mathcal{R}\ket{\phi_t}}\\
    &=
    \norm{\left(U_{t+1}\otimes \mathbb{I}\right)\Pi_{\ge k+1}\mathcal{R}\ket{\phi_t}}\\
    &=
    \norm{\Pi_{\ge k+1}\mathcal{R}\ket{\phi_t}}.
\end{align*}
where the equations hold since $\left(U_{t+1}\otimes \mathbb{I}\right)$ is a unitrary acting over the algorithm register $QPW$ only and $\Pi_{\geq k+1}$ is a projector acting over the oracle register $F$ only.

For any basis state $\ket{t,\tau, x,u,w}\ket{r}$ in $\ket{\phi_t}$, it may contribute to $\Delta_{t+1, k+1}$ only when it has $\wt(r)\geq k+1$ or it has $\wt(r)=k$, $u \neq 0$ and $\ket{r_t}= \ket{\perp\perp}$. Therefore, by the triangle inequality,
\begin{align*}
    \Delta_{t+1,k+1}
    & \leq
    \norm{\Pi_{\ge k+1}\mathcal{R}\Pi_{\ge k+1}\ket{\phi_t}}
    +
    \norm{\Pi_{\ge k+1}\mathcal{R}\Pi_{=k,\perp}\ket{\phi_t}}\\
    & \leq \norm{\Pi_{\ge k+1}\mathcal{R}}\cdot \norm{\Pi_{\ge k+1}\ket{\phi_t}} +     \norm{\Pi_{\ge k+1}\mathcal{R}\Pi_{=k,\perp}}\cdot \norm{\Pi_{=k,\perp}\ket{\phi_t}}
    \\
    & \leq 1 \cdot \Delta_{t,k+1} +
    \norm{\Pi_{\ge k+1}\mathcal{R}\Pi_{=k,\perp}}\cdot \Delta_{t,k}.
\end{align*}
since $\norm{\Pi_{=k,\perp}\ket{\phi_t}}\leq \norm{\Pi_{\geq k}\ket{\phi_t}} =\Delta_{t, k}$.

In the following, we prove the following bound 
\begin{equation}\label{eq: bound 2-sqrt-N}
    \norm{\Pi_{\ge k+1}\mathcal{R}\Pi_{=k,\perp}} \leq \frac{2}{\sqrt{N}}.
\end{equation}

First, we note that
\[
\norm{\Pi_{\ge k+1}\mathcal{R}\Pi_{=k,\perp}} = \max_{\norm{\ket{\psi}}=1}\norm{\Pi_{\geq k+1} \mathcal{R}\Pi_{=k, \perp} \ket{\psi}} 
\]
Consider any basis state
\(
|z\rangle=|t,\tau,x,u,w\rangle|r\rangle
\)
such that $\wt(r)=k$ and $|r_t\rangle=|\perp\perp\rangle$. Clearly, it belongs
to the support of $\Pi_{=k,\perp}$.

We first justify that, in the present setting, it suffices to maximize over such
computational basis states. Let $|\psi\rangle$ be an arbitrary normalized state
in the support of $\Pi_{=k,\perp}$, and expand it in the computational basis as
\(
|\psi\rangle=\sum_z \alpha_z |z\rangle,
\qquad
\sum_z|\alpha_z|^2=1,
\)
where every $|z\rangle$ in the sum satisfies $\wt(r)=k$ and
$|r_t\rangle=|\perp\perp\rangle$.

Recall that $\mathcal R=\mathcal T\mathcal O\mathcal T$.
The operator $\mathcal T$ acts only on the input register, while
$\mathcal O$ leaves the algorithm basis state
$|t,\tau,x,u,w\rangle$ unchanged and acts only on the input block indexed by
$t$. Hence $\mathcal R$ also leaves the algorithm basis state unchanged and
acts only on block $t$ of the input register.

Consequently, for any two distinct basis states $|z\rangle$ and
$|z'\rangle$ occurring in the above expansion, the states
\[
\Pi_{\ge k+1}\mathcal R|z\rangle
\qquad\text{and}\qquad
\Pi_{\ge k+1}\mathcal R|z'\rangle
\]
are orthogonal. Indeed, if their algorithm-register parts are different, this
is immediate. Otherwise, they have the same queried block $t$, and since
$|r_t\rangle=|r'_t\rangle=|\perp\perp\rangle$, the two input basis states must
differ on some block other than $t$; such blocks are left unchanged by
$\mathcal R$.

Therefore,
\begin{align*}
\left\|\Pi_{\ge k+1}\mathcal R|\psi\rangle\right\|^2
=
\sum_z |\alpha_z|^2
\left\|\Pi_{\ge k+1}\mathcal R|z\rangle\right\|^2
\le
\max_z
\left\|\Pi_{\ge k+1}\mathcal R|z\rangle\right\|^2.
\end{align*}
Taking the supremum over all normalized $|\psi\rangle$ in the support of
$\Pi_{=k,\perp}$, and observing that every such basis state $|z\rangle$ is
itself an allowed normalized input, we obtain
\[
\left\|\Pi_{\ge k+1}\mathcal R\Pi_{=k,\perp}\right\|
=
\max_z
\left\|\Pi_{\ge k+1}\mathcal R|z\rangle\right\|.
\]
Set up an oracle $\mathcal{O}^\tau_{x,u}$ for block $t$ such that given any $i,j \in [N]$, 
\begin{align*}
    \mathcal{O}^\tau_{x,u}\ket{i,j}=\omega^{u\cdot f_t(\tau, x)} \ket{i, j}
\end{align*}
where $\omega=e^{2\pi i/N+1}$, 
$f_t(A, x)= (j +1) \cdot \mathbb{1}[x=i]$ and $f_t(B, x)= (i +1) \cdot \mathbb{1}[x=j]$.
Set up the value 
\[
\beta_u^{\perp} =\braket{\perpp\perpp|\left(S\otimes S\right) \mathcal{O}^\tau_{x,u} \left(S\otimes S\right) |\perpp\perpp}.
\]
We claim the followings two properties of $\beta_u^{\perp}$.
\begin{claim}\label{clm: beta meaning}
    $\norm{\Pi_{\geq k+1} \mathcal{R}\ket{z}}^2 = 1-\abs{\beta_u^\perp}^2$.
\end{claim}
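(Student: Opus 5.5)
We work directly with the basis state $\ket{z}=\ket{t,\tau,x,u,w}\ket{r}$ where $\wt(r)=k$ and $\ket{r_t}=\ket{\perp\perp}$, and we track where $\mathcal{R}$ sends it.

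\textbf{Step 1: localize $\mathcal{R}$ to block $t$.} Since $S$ is an involution, $\mathcal{R}=\mathcal{T}\mathcal{O}\mathcal{T}$ acts on $\ket{z}$ as follows. It leaves the algorithm part $\ket{t,\tau,x,u,w}$ unchanged. On every block $t'\neq t$ it acts as $(S\otimes S)\Id(S\otimes S)=\Id$, because $\mathcal{O}$ is trivial there (the extension of the oracle acts as the identity on coordinates carrying $\perp$ as well). On block $t$ it acts as the local operator $V:=(S\otimes S)\mathcal{O}^\tau_{x,u}(S\otimes S)$. Here $\mathcal{O}^\tau_{x,u}$ is extended to the full local space $\mathcal{H}_{\rec,\loc}^{\otimes 2}$ as the identity on any pair containing $\perp$. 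Hence
\[
\mathcal{R}\ket{z}=\ket{t,\tau,x,u,w}\otimes\bigotimes_{t'\neq t}\ket{r_{t'}}\otimes V\ket{\perp\perp}.
\]
$V$ is unitary, being a conjugate of a diagonal phase unitary, so $\norm{V\ket{\perp\perp}}=1$.

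\textbf{Step 2: read off the weight.} The other blocks contribute exactly $k$ nonvacuum blocks, since block $t$ was vacuum and $\wt(r)=k$. So a basis component of $\mathcal{R}\ket{z}$ has weight $\ge k+1$ exactly when its block-$t$ part is some $\ket{i,j}\neq\ket{\perp\perp}$, and weight exactly $k$ otherwise. Therefore $\Pi_{\ge k+1}\mathcal{R}\ket{z}$ equals the algorithm part tensored with the other blocks, tensored with $(\Id-\proj{\perp\perp})V\ket{\perp\perp}$. This gives
\[
\norm{\Pi_{\ge k+1}\mathcal{R}\ket{z}}^2=\norm{V\ket{\perp\perp}}^2-\abs{\bra{\perp\perp}V\ket{\perp\perp}}^2=1-\abs{\beta_u^\perp}^2,
\]
by Pythagoras and the definition of $\beta_u^\perp$.

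\textbf{Main obstacle.} The argument is routine once $\mathcal{R}$ is known to act trivially on the blocks $t'\neq t$. The delicate point is making sure $\mathcal{O}$ is well defined on the recording space, namely as the identity on $\perp$-containing pairs. This matters both so that $S\otimes S$ cancels on untouched blocks, which uses $S^2=\Id$ from the corollary to \Cref{lem:S-matrix}, and so that $V$ is genuinely unitary on $\mathcal{H}_{\rec,\loc}^{\otimes2}$. I would state this extension explicitly before running Steps 1 and 2.
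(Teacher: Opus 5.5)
Your proof is correct and is essentially the paper's argument. In both, $\mathcal{R}\ket{z}$ differs from $\ket{z}$ only in block $t$, which becomes $(S\otimes S)\mathcal{O}^\tau_{x,u}(S\otimes S)\ket{\perp\perp}$. Every component then has weight $k$ or $k+1$, with weight $k$ exactly when block $t$ is vacuum, and Pythagoras gives $1-\abs{\beta_u^\perp}^2$.

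Your absolute value on the overlap and your remark on extending $\mathcal{O}$ to $\perp$-containing pairs are minor clarifications of the same proof. The extension is not actually needed for the norm computation here, since $(S\otimes S)\ket{\perp\perp}=\ket{\tilde u,\tilde u}$ has no $\perp$ component.
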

\begin{claim}\label{clm: beta}
If $u\neq 0$, then $\beta_u^\perp =1-\frac{1}{N}-\frac{1}{N^2}$. If $u= 0$, then $\beta_u^\perp =1$.
\end{claim}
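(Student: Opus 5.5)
The plan is to compute $\beta_u^\perp$ directly from the definition. The key observation is that the local swap of \Cref{def:local-swap-S} sends the vacuum to the uniform state.

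\textbf{Step 1: reduce to an average of phases.} Since $S=S^\dagger$ and $S\ket\perp=\ket{\tilde u}$, we have $(S\otimes S)\ket{\perp\perp}=\ket{\tilde u}\ket{\tilde u}$, and likewise $\bra{\perp\perp}(S\otimes S)=\bra{\tilde u}\bra{\tilde u}$. Hence
\[
\beta_u^\perp=\bra{\tilde u\,\tilde u}\,\mathcal O^\tau_{x,u}\,\ket{\tilde u\,\tilde u}=\frac1{N^2}\sum_{i,j\in[N]}\omega^{u\cdot f_t(\tau,x)},
\]
where $f_t$ is evaluated on the pair $(i,j)$ as in the definition of $\mathcal O^\tau_{x,u}$. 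This uses only that $\mathcal O^\tau_{x,u}$ is diagonal in the basis $\{\ket{i,j}\}$.

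\textbf{Step 2: the trivial case.} If $u=0$, every phase equals $1$, so $\beta_u^\perp=1$.

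\textbf{Step 3: the case $u\neq 0$, type $A$.} Here $f_t(A,x)=(j+1)\mathbbm 1[x=i]$. I split the double sum according to whether $i=x$.
\begin{itemize}
\item The $N(N-1)$ terms with $i\ne x$ each contribute $1$.
\item The terms with $i=x$ contribute $\sum_{j\in[N]}\omega^{u(j+1)}=\sum_{s=1}^{N}\omega^{us}$.
\end{itemize}
Since $\omega$ is a primitive $(N+1)$-th root of unity and $u\in\mathbb Z_{N+1}\setminus\{0\}$, the element $\omega^u$ is a nontrivial $(N+1)$-th root of unity. Therefore the full geometric sum $\sum_{s=0}^{N}\omega^{us}$ vanishes, and the $i=x$ terms contribute $-1$. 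This gives
\[
\beta_u^\perp=\frac{N(N-1)-1}{N^2}=1-\frac1N-\frac1{N^2}.
\]

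\textbf{Step 4: type $B$.} This case is identical after exchanging the roles of $i$ and $j$, since $f_t(B,x)=(i+1)\mathbbm 1[x=j]$.

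The only delicate point is the geometric-sum step. It is exactly where the ``$+1$'' shift in the BHVF oracle matters: the exponents run over $\{1,\dots,N\}$, all residues except $0$ modulo $N+1$, so the missing term is precisely the $1$ that gets subtracted. I would also check that $u$ ranges over $\mathbb Z_{N+1}$, so that $u\neq0$ indeed makes $\omega^u\neq1$. Everything else is bookkeeping.
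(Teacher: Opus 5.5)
Your proposal is correct and matches the paper's proof: rewrite $\beta_u^\perp$ as $\bra{\tilde u\,\tilde u}\mathcal O^\tau_{x,u}\ket{\tilde u\,\tilde u}$ using $S=S^\dagger$ and $S\ket\perp=\ket{\tilde u}$, and split the phase average into the $N(N-1)$ terms with $i\neq x$ and the $i=x$ row. The $i=x$ row contributes $-1$ via the vanishing $(N+1)$-th root-of-unity sum, which gives $1-\frac1N-\frac1{N^2}$ for $u\neq 0$ and $1$ for $u=0$, with type $B$ handled symmetrically.
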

By \Cref{clm: beta meaning} and \Cref{clm: beta}, for any basis state $\ket{z}=\ket{t,\tau, x,u,w}\ket{r}$ with $\wt(r)=k$ and $\ket{r_t}=\ket{\perp\perp}$, 
\begin{align*}
    \norm{\Pi_{\geq k+1} \mathcal{R}\ket{z}}^2 = 1-|\beta_u^\perp|^2 \leq 1-\left(1-\frac{2}{N}\right)^2 = \frac{4}{N} - \frac{4}{N^2} \leq \frac{4}{N}.
\end{align*}
It implies that $\norm{\Pi_{\ge k+1}\mathcal{R}\Pi_{=k,\perp}} = \max_z \norm{\Pi_{\geq k+1} \mathcal{R}\ket{z}}\leq \frac{2}{\sqrt{N}}$. Overall, 
\begin{align*}
    \Delta_{t+1,k+1}
    & \leq 1 \cdot \Delta_{t,k+1} +
    \norm{\Pi_{\ge k+1}\mathcal{R}\Pi_{=k,\perp}}\cdot \Delta_{t,k}
    \leq \Delta_{t,k+1} +
    \frac{2}{\sqrt{N}} \Delta_{t,k}.
\end{align*}
This concludes the proof of \Cref{lem:progress-recurrence'}.

\begin{proof}[Proof of \Cref{clm: beta meaning}]
    Given that $\mathcal{R}=\mathcal{T}\mathcal{O}\mathcal{T}$ and $T=\mathbb{I}_{PQW}\otimes  \left(\left(S\otimes S\right)^{\otimes B}\right)$, consider any basis state $\ket{z}=\ket{t, \tau, x,u,w}\ket{r}$ such that $\wt(r)=k$ and $\ket{r_t}=\ket{\perp\perp}$.
    Then,
    \[
    \mathcal{R}\ket{z} = \ket{t,\tau, x, u,w}\ot \left(\ket{z_t} \otimes \left(\bigotimes_{t'\in [B]: t' \neq t} \ket{r_{t'}}\right)\right)
    \]
    where $ \ket{z_t}=\left(S\otimes S\right) \mathcal{O}^\tau_{x,u} \left(S\otimes S\right) \ket{\perp\perp}$. Note that for any basis state in $\mathcal{R}\ket{z}$, it either has weight exactly $k+1$ or it has weight exactly $k$. Therefore, we claim that 
    \[
    \norm{\Pi_{\geq k+1} \mathcal{R}\ket{z}}^2 + \norm{\Pi_{= k} \mathcal{R}\ket{z}}^2 = 1.
    \]
    In addition, for any basis state in $\mathcal{R}\ket{z}$ with weight exactly $k$, the block $t$ must remain the vacuum state. Therefore,
    \[
    \norm{\Pi_{= k} \mathcal{R}\ket{z}} = \braket{\perpp\perpp|z_t} = \beta_u^\perp, \qquad \norm{\Pi_{\geq k+1} \mathcal{R}\ket{z}}^2 = 1-|\beta_u^\perp|^2.
    \]
\end{proof}

\begin{proof}[Proof of \Cref{clm: beta}]
Recall that $\ket{\tilde{u}}=S\ket{\perp}=\frac{1}{\sqrt{N}}\sum_{i\in [N]}\ket{i}$. W.l.o.g., assume $\tau=A$. 
\begin{align*}
    \beta_u^{\perp}  & = \braket{\perpp\perpp|\left(S\otimes S\right) \mathcal{O}^\tau_{x,u} \left(S\otimes S\right)|\perpp\perpp} = \braket{\tilde{u}, \tilde{u}| \mathcal{O}^\tau_{x,u} |\tilde{u}, \tilde{u}} \\
    & = \frac{1}{N^2} \sum_{k,l\in [N]}\sum_{i,j\in [N]}\braket{k,l|\mathcal{O}^\tau_{x,u}|i,j}
    = 
    \frac{1}{N^2} \sum_{k,l\in [N]} \sum_{i,j\in [N]}\braket{k,l|\omega^{u\cdot f_t(\tau, x)}|i,j}\\
    & 
    = 
    \frac{1}{N^2} \sum_{i,j\in [N]} \omega^{u\cdot (j+1)\mathbb{1}[x=i]}
    = \frac{1}{N^2} \left(\sum_{j \in [N]} \omega^{u \cdot (j+1)} + \sum_{i \in [N]: i\neq x}\sum_{j \in [N]} \omega^{0} \right)
    \\
    & =\frac{1}{N^2} 
    \left(
    \sum_{k=1}^{N}\omega^{uk}+ N(N-1)
    \right).
\end{align*}
Note that $\omega=e^{\frac{2\pi i}{N+1}}$ is a primitive $(N+1)$-th root of unity. If $u \in \mathbb{Z}_{N+1}$ and $u \neq 0$, then 
\[
\sum_{k=0}^{N} \omega^{uk} = 0 \quad \text{ and } \quad \sum_{k=1}^{N} \omega^{uk} = 0 - \omega^{0}=-1. 
\]
Therefore, if $u=0$, then $\beta^\perp_u=1$. If $u \neq 0$, then
\[
\beta_u^\perp = \frac{1}{N^2} \left(-1 + N^2 -N\right)= 1-\frac{1}{N}-\frac{1}{N^2}.
\]
\end{proof}

\subsubsection{Proof of \Cref{lem:binomial-progress'}}
\begin{proof}[Proof of \Cref{lem:binomial-progress'}]
    Consider the base case where $\Delta_{0,0}=1$ and $\Delta_{0,k}=0$ for all $k\geq 1$. In addition, $\Delta_{t, k}=0$ for every $k\geq t+1$. 
    Consider the general case with $t\geq k\geq 1$. By \Cref{lem:progress-recurrence'},
    \[
    \Delta_{t,k}
    \leq 
    \Delta_{t-1,k} + \frac{2}{\sqrt{N}} \Delta_{t-1,k-1}.
    \]
    If $t=k$, by induction hypothesis, $\Delta_{t,k}
    \leq 
     \frac{2}{\sqrt{N}} \Delta_{t-1,k-1} \leq \binom{t-1}{k-1}\left(\frac{2}{\sqrt{N}}\right)^{k}
     \leq
     \binom{t}{k}\left(\frac{2}{\sqrt{N}}\right)^{k}$.
    If $t\geq k+1$, by \Cref{lem:progress-recurrence'} and induction hypothesis, 
    \begin{align*}
        \Delta_{t,k}
    \leq 
    \Delta_{t-1,k} + \frac{2}{\sqrt{N}} \Delta_{t-1,k-1} 
    & \leq 
    \binom{t-1}{k} \left(\frac{2}{\sqrt{N}}\right)^k + \left(\frac{2}{\sqrt{N}}\right) \cdot\binom{t-1}{k-1} \left(\frac{2}{\sqrt{N}}\right)^{k-1} = \binom{t}{k} \left(\frac{2}{\sqrt{N}}\right)^k.
    \end{align*}
where the last equation holds by Pascal's identity.
\end{proof}
\subsection{Soundness of Recording}
In this section, we prove the main soundness \Cref{lem:technical lemma 1} that gives low success probability for low-weight records.

Recall that, for each block $t\in [B]$, its local space is denoted by $\mathcal{H}_t=\mathcal{H}_{\rec,\loc} \otimes \mathcal{H}_{\rec,\loc}$ where $\mathcal{H}_{\rec,\loc}$ is spanned by $\{\ket{\perp}, \ket{0}, \ldots, \ket{N-1}\}$.
Set up the vacuum projector $P_{t,\perp}$ and the non-vacuum projector $P_{t,\neq\perp}$ such that
\begin{equation}\label{eq:Projection perp}
        P_{t,\perp}=\proj{\perpp\perpp},
    \qquad
    P_{t,\neq\perp}=\Id-P_{t,\perp}.
\end{equation}
Given any integer $k \in \{0,1,\ldots, B\}$, set up 
\begin{equation}\label{eq:Lambda-k}
    \Lambda_{\le k}
    =  \sum_{\substack{L\subseteq[B]:\\|L|\le k}} \Lambda_L
    \qquad
    \text{where } \Lambda_L= \left(\bigotimes_{t\in L}P_{t,\neq \perp}\right)
    \otimes
    \left(\bigotimes_{t\notin L}P_{t,\perp}\right).
\end{equation}
Note that  $\Pi_{\leq k} = \mathbb{I}_{PQW} \otimes \Lambda_{\leq k}$.
Given a collection of nonzero orthogonal projectors $(Q_t)_{t\in D}$ with some nonempty $D \subseteq [B]$, for each $t\in D$, set
\[
p_t = \norm{Q_t \ket{\perpp\perpp}}^2 = \braket{\perpp\perpp|Q_t|\perpp\perpp}.
\]
Set up the projector 
\[
    Q_D=
    \left(\bigotimes_{t\in D}Q_t\right)
    \otimes
    \left(\bigotimes_{t\in [B]\setminus D}\Id_{\mathcal H_t}\right).
\]
In this section, prove the following lemma.
\begin{lemma}\label{lem:technical lemma 1}
For every $t\in [B]$, let $\mathcal{H}_t= \mathcal{H}_{\rec,\loc}\otimes \mathcal{H}_{\rec,\loc}$ be the local space and $\ket{\perpp\perpp}$ be the vacuum state.
Let $\{Q_t\}_{t\in D\subseteq [B]}$ be a given collection of nonzero orthogonal projector with $p_t = \norm{Q_t \ket{\perpp\perpp}}^2$. Set up \(\{X_t\}_{t\in D}\) to be mutually independent Bernoulli random variables with $\Prb[X_t=0]=p_t$ and $\Prb[X_t=1]=1-p_t$. Given any integer $k\in \{0,1,\ldots, B\}$, it has
    \[
    \norm{Q_D\Lambda_{\le k}}^2
    = 
    \Prb\left[\sum_{t\in D}X_t\le k\right].
\]
\end{lemma}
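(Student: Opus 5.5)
We compute the operator norm through the positive operator $\Lambda_{\le k}Q_D\Lambda_{\le k}$, using $\norm{Q_D\Lambda_{\le k}}^2=\norm{\Lambda_{\le k}Q_D\Lambda_{\le k}}$. Everything factorizes over blocks, so the first step is a per-block decomposition. Fix $t\in D$ and write $\ket{v}=\ket{\perpp\perpp}$. If $p_t\in(0,1)$, set $\ket{q_t}=Q_t\ket v/\sqrt{p_t}$ and let $\ket{w_t}$ be the normalized component of $\ket{q_t}$ orthogonal to $\ket v$. By construction, $\ket{q_t}=\sqrt{p_t}\ket v+\sqrt{1-p_t}\ket{w_t}$ and $\ket{w_t}\in\Ran(P_{t,\neq\perp})$. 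Inside $\Ran(P_{t,\neq\perp})$ the operator $P_{t,\neq\perp}Q_tP_{t,\neq\perp}$ acts nontrivially, so we cannot simply treat the nonvacuum part as "free". Instead we use the standard two-projector (Jordan) decomposition of the pair $(P_{t,\perp},Q_t)$. Since $P_{t,\perp}$ has rank one, the only nontrivial Jordan block is two-dimensional, namely $\spanop\{\ket v,\ket{w_t}\}$. On the remaining part, which lies inside $\Ran(P_{t,\neq\perp})$, $Q_t$ commutes with $P_{t,\perp}$ and acts as some projector $R_t$. The edge cases $p_t\in\{0,1\}$ are handled the same way, and they are simpler.

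The key step is to reduce $\Lambda_{\le k}Q_D\Lambda_{\le k}$ to a direct sum of simple blocks. Choose for each block $t\in D$ an orthonormal basis of $\Ran(P_{t,\neq\perp})$ that contains $\ket{w_t}$ and diagonalizes $R_t$ on the rest. Blocks $t\notin D$ carry the identity and contribute only $P_\perp$ versus $P_{\neq\perp}$, so they only consume weight budget. To maximize, we put them in the vacuum; this costs nothing and makes the right-hand side depend only on $D$. After these choices, $\Lambda_{\le k}Q_D\Lambda_{\le k}$ decomposes along the product basis as a direct sum over "types". In a type, each block $t\in D$ sits in one of three kinds of subspace: an $R_t$-eigenvector inside the nonvacuum part (eigenvalue $0$ or $1$, weight $1$), or the pair $\{\ket v,\ket{w_t}\}$ (where $Q_t$ acts as $\proj{q_t}$). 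The best types put every block in $\{\ket v,\ket{w_t}\}$, or in an eigenvalue-$1$ vector, which gives the same value as choosing $\ket{w_t}$ with an overlap at most $1$.

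On the product subspace $\bigotimes_{t\in D}\spanop\{\ket v,\ket{w_t}\}$, the operator $Q_D$ restricts to the rank-one projector onto $\ket{q}=\bigotimes_t\ket{q_t}$. Then $\Lambda_{\le k}Q_D\Lambda_{\le k}$ restricts to $\Lambda_{\le k}\proj{q}\Lambda_{\le k}$, whose norm is $\norm{\Lambda_{\le k}\ket q}^2$. Expanding $\ket q$ over subsets $L$ of blocks in $\ket{w_t}$ gives
\[
\norm{\Lambda_{\le k}\ket q}^2=\sum_{|L|\le k}\prod_{t\in L}(1-p_t)\prod_{t\in D\setminus L}p_t=\Prb\Big[\sum_{t\in D}X_t\le k\Big],
\]
which supplies the lower bound. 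For the upper bound we must show that no other type, and no superposition across types, does better. The projector $\Lambda_{\le k}$ is diagonal in the chosen product basis, where a basis vector has weight equal to the number of blocks not in $\ket v$. Moreover, $Q_D$ is block diagonal with respect to the type decomposition, because each local $Q_t$ preserves $\spanop\{\ket v,\ket{w_t}\}$ and each $R_t$-eigenspace. Hence the norm is the maximum over types.

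In a type where a set $M$ of blocks is pinned to nonvacuum eigenvectors, those blocks contribute a factor of at most $1$ and use up $|M|$ of the budget. The value becomes $\Prb\big[\sum_{t\in D\setminus M}X_t\le k-|M|\big]$, and this is at most $\Prb\big[\sum_{t\in D}X_t\le k\big]$ by monotone coupling, since we can set $X_t\le 1$ for $t\in M$.

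The main obstacle is making the block-diagonal reduction rigorous. Specifically, we need that $\Lambda_{\le k}$ and $Q_D$ jointly respect the tensor product of the per-block Jordan decompositions. This holds because $\Lambda_{\le k}$ is a function only of the per-block $P_{t,\perp}$, each of which is diagonal in every Jordan piece.
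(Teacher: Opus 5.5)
Your argument is correct and is essentially the paper's proof seen from the other side of the sandwich. The paper works with $Q_D\Lambda_{\le k}Q_D$ on $\Ran(Q_D)$ and splits $\Ran(Q_t)=\spanop\{\ket{q_t}\}\oplus(\Ran(Q_t)\cap\mathcal H_t^{\neq\perp})$, where its $\ket{w_t}$ is your $\ket{q_t}$ and the second summand is your $\Ran(R_t)$. On each tensor piece this operator is the scalar $\Prb[\sum_{t\in D\setminus J}X_t\le k-|J|-|S|]$, which is exactly your per-type value $\norm{\Lambda_{\le k}\ket{q}}^2$, and the same $X_t\le 1$ monotonicity then maximizes it.

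The one point to make explicit is $p_t=0$. There your two-dimensional pair degenerates to $\spanop\{\ket{\perpp\perpp}\}$ with $Q_t=0$ on it, and $\ket{q_t}$ is undefined. The optimal type must therefore place such blocks on an eigenvalue-$1$ vector of $R_t=Q_t$; this is where $Q_t\neq 0$ is needed, and it matches the paper's choice $J=Z$.
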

\subsubsection{Proof of \Cref{lem:technical lemma 1}}
Consider the operator $Q_D\Lambda_{\leq k}$. Since $Q_D$ and $\Lambda_{\leq k}$ are orthogonal projectors over $\mathcal{H}_{\rec}$, then
\[
\norm{Q_D\Lambda_{\leq k}}^2 = \norm{(Q_D\Lambda_{\leq k})(Q_D\Lambda_{\leq k})^\dagger} = 
\norm{
Q_D \Lambda_{\leq k} Q_D}.
\]
Since $Q_D \Lambda_{\leq k} Q_D:\Ran(Q_D) \rightarrow \Ran(Q_D)$ is positive semidefinite, then
\[
\norm{Q_D \Lambda_{\leq k} Q_D} = \lambda_{\max}(Q_D \Lambda_{\leq k} Q_D)
\]
where $\lambda_{\max}(Q_D \Lambda_{\leq k}  Q_D) = \max \{\lambda: Q_D \Lambda_{\leq k}  Q_D  \ket{\phi}=\lambda\ket{\phi}, \text{for some }\ket{\phi}\in \Ran(Q_D) \text{ with }\ket{\phi}\neq 0\}$. 

\paragraph{Orthogonal Decomposition}
To compute $\lambda_{\max}(Q_D \Lambda_{\leq k}  Q_D)$, we first consider an orthogonal decomposition of $\Ran(Q_D)$. For each block $t\in [B]$, let $\mathcal{H}_t^{\neq \perp}= \{\ket{\phi}\in \mathcal{H}_t: \braket{\perpp\perpp|\phi}=0\}$ be the subspace orthogonal to the vacuum state $\ket{\perpp\perpp}$.
For each block $t\in D$ with $p_t >0$, set up the unit vector $\ket{w_t}=\frac{1}{\sqrt{p_t}}Q_t\ket{\perpp\perpp}$. We claim the following lemma.
\begin{lemma}\label{lem:Ran(Q) orthogonal decomp}
For each block $t\in D$ with a given projector $Q_t$ such that $p_t >0$, it has the orthogonal decomposition that \[
\Ran(Q_t)=\spanop\{\ket{w_t}\} \oplus R_t, \quad \text{where } R_t=\Ran(Q_t)\cap \mathcal{H}^{\neq \perp}_t.
\]
\end{lemma}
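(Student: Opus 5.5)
The claim is a two-line linear-algebra fact about a single block, and I would prove it directly from the definition of $\ket{w_t}$ as the normalized image of the vacuum under $Q_t$. Write $P=P_{t,\perp}=\proj{\perpp\perpp}$. The plan has three steps: check that both summands lie in $\Ran(Q_t)$, check that they are orthogonal, and then show that together they exhaust $\Ran(Q_t)$.

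\emph{Containment and orthogonality.} Since $Q_t$ is a projector, $Q_t\ket{w_t}=\frac{1}{\sqrt{p_t}}Q_t^2\ket{\perpp\perpp}=\ket{w_t}$, so $\ket{w_t}\in\Ran(Q_t)$. By definition $R_t\subseteq\Ran(Q_t)$. For orthogonality, take any $\ket{\phi}\in R_t$. Then $Q_t\ket{\phi}=\ket{\phi}$ and $Q_t=Q_t^\dagger$, so
\[
\braket{w_t|\phi}=\tfrac{1}{\sqrt{p_t}}\bra{\perpp\perpp}Q_t\ket{\phi}=\tfrac{1}{\sqrt{p_t}}\braket{\perpp\perpp|\phi}=0,
\]
where the last equality holds because $\ket{\phi}\in\mathcal H_t^{\neq\perp}$. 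Hence $\spanop\{\ket{w_t}\}\perp R_t$, and their orthogonal sum is contained in $\Ran(Q_t)$.

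\emph{Spanning.} Take any $\ket{\phi}\in\Ran(Q_t)$ and define
\[
\ket{\phi'}=\ket{\phi}-\braket{w_t|\phi}\ket{w_t}.
\]
The vector $\ket{\phi'}$ lies in $\Ran(Q_t)$. It remains to show that $\ket{\phi'}$ is orthogonal to the vacuum, which places it in $R_t$.

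\begin{itemize}
\item Since $\ket{\phi}\in\Ran(Q_t)$, we have $\braket{\perpp\perpp|\phi}=\bra{\perpp\perpp}Q_t\ket{\phi}=\sqrt{p_t}\,\braket{w_t|\phi}$.
\item We also have $\braket{\perpp\perpp|w_t}=\frac{1}{\sqrt{p_t}}\bra{\perpp\perpp}Q_t\ket{\perpp\perpp}=\sqrt{p_t}$.
\end{itemize}

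Combining these,
\[
\braket{\perpp\perpp|\phi'}=\sqrt{p_t}\,\braket{w_t|\phi}-\braket{w_t|\phi}\sqrt{p_t}=0,
\]
so $\ket{\phi'}\in R_t$. Therefore $\ket{\phi}=\braket{w_t|\phi}\ket{w_t}+\ket{\phi'}\in\spanop\{\ket{w_t}\}\oplus R_t$, which completes the decomposition.

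The only point needing care is the spanning step, specifically the identity $\braket{\perpp\perpp|\phi}=\sqrt{p_t}\braket{w_t|\phi}$ for $\ket{\phi}\in\Ran(Q_t)$. It relies on $Q_t$ being self-adjoint, i.e. an orthogonal rather than oblique projector, which the hypothesis of \Cref{lem:technical lemma 1} guarantees. The assumption $p_t>0$ is used only to make $\ket{w_t}$ well defined.
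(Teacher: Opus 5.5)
Your proof is correct and follows the paper's argument essentially step for step. Orthogonality comes from $\braket{w_t|\phi}=p_t^{-1/2}\braket{\perpp\perpp|Q_t|\phi}=0$ for $\ket{\phi}\in R_t$, and spanning comes from subtracting the $\ket{w_t}$-component of $\ket{\phi}\in\Ran(Q_t)$ and using self-adjointness of $Q_t$ to check that the remainder is orthogonal to $\ket{\perpp\perpp}$.
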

Before proving \Cref{lem:Ran(Q) orthogonal decomp}, we first see its implications. 
Recall $Q_D = \left(\bigotimes_{t\in D} Q_t \right)\otimes \mathbb{I}_{[B]\setminus D}$. Let $Z=\{t\in D: p_t=0\}$. For each block $t\in Z$, 
\[
\Ran(Q_t) = R_t = \Ran(Q_t) \cap \mathcal{H}_t^{\neq \perp}.
\]
For each block $t\in D\setminus Z$, by \Cref{lem:Ran(Q) orthogonal decomp}, it has $\Ran(Q_t)=\spanop\{\ket{w_t}\} \oplus R_t$. For each remaining block $t\in [B]\setminus D$, the support trivially decomposes as $\spanop\{\ket{\perpp\perpp}\} \oplus \mathcal{H}_t^{\neq \perp}$. Therefore,
\begin{equation}\label{eq:Ran(Q_D)}
    \Ran(Q_D)=
\left(\bigotimes_{t\in Z} R_t\right) 
\otimes 
\left(\bigotimes_{t\in D\setminus Z} \left(\spanop\{\ket{w_t}\} \oplus R_t\right)\right) 
\otimes
\left(\bigotimes_{t\in [B]\setminus D}\left(\spanop\{\ket{\perpp\perpp}\} \oplus \mathcal{H}_t^{\neq \perp}\right) \right) 
\end{equation}
By rearranging \Cref{eq:Ran(Q_D)}, we get the following conclusion.
\begin{lemma}\label{lem:orthogonal decomp Ran(Q_D)}
    The range of the projector $Q_D$ has an orthogonal decomposition,
    \[
     \operatorname{Ran}(Q_D)
    =\bigoplus_{\substack{Z\subseteq J\subseteq D\\S\subseteq[B]\setminus D}}
    \mathcal H_{J,S}
    \]
where $\mathcal H_{J,S}
    =
    \left(\bigotimes_{t\in J}R_t\right)
    \otimes
    \left(\bigotimes_{t\in D\setminus J}
          \operatorname{span}\{\ket{w_t}\}\right)
    \otimes
    \left(\bigotimes_{t\in S}\mathcal H_t^{\neq \perp}\right)
    \otimes \left(\bigotimes_{t\in[B]\setminus(D\cup S)}
          \operatorname{span}\{\ket{\perpp\perpp}\}\right)$.
\end{lemma}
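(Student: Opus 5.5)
The plan is to obtain the decomposition of \Cref{lem:orthogonal decomp Ran(Q_D)} directly from \Cref{eq:Ran(Q_D)} by distributing the tensor product over the blockwise direct sums. Three ingredients are needed. First, the range of a tensor product of projectors is the tensor product of their ranges. Second, we need the blockwise orthogonal decompositions. Third, the tensor product of orthogonal direct sums is the orthogonal direct sum, over all choices of one summand per block, of the corresponding tensor products.

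\emph{Step 1: per-block decompositions.} Since $Q_D=(\bigotimes_{t\in D}Q_t)\otimes\Id$ is a tensor product of orthogonal projectors, $\Ran(Q_D)=\bigotimes_{t\in D}\Ran(Q_t)\otimes\bigotimes_{t\notin D}\mathcal H_t$. This holds because the spectral decomposition of each factor gives a product basis of the $1$-eigenspace. Each factor then decomposes as follows.
\begin{itemize}
\item For $t\in Z$ we have $p_t=0$, so $Q_t\ket{\perpp\perpp}=0$. Hence every vector in $\Ran(Q_t)$ is orthogonal to the vacuum, and $\Ran(Q_t)=R_t$.
\item For $t\in D\setminus Z$, \Cref{lem:Ran(Q) orthogonal decomp} gives the orthogonal sum $\spanop\{\ket{w_t}\}\oplus R_t$.
\item For $t\notin D$ we use the trivial split $\mathcal H_t=\spanop\{\ket{\perpp\perpp}\}\oplus\mathcal H_t^{\neq\perp}$.
\end{itemize}
This reproduces \Cref{eq:Ran(Q_D)}.

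\emph{Step 2: distribute.} Write each factor as $\mathcal V_t^{(0)}\oplus\mathcal V_t^{(1)}$, where $\mathcal V_t^{(0)}=\{0\}$ for $t\in Z$. Multilinearity of $\otimes$ gives
\[
\bigotimes_t\bigl(\mathcal V_t^{(0)}\oplus\mathcal V_t^{(1)}\bigr)=\sum_{c\in\{0,1\}^{B}}\bigotimes_t\mathcal V_t^{(c_t)}.
\]
Any product vector $\bigotimes_t v_t$ with $v_t=v_t^{(0)}+v_t^{(1)}$ expands into $2^B$ terms, one in each summand. Product vectors span the left-hand side, so the sum spans it. To match the index set of the lemma, set $J=\{t\in D: c_t=1\}$ and $S=\{t\notin D: c_t=1\}$. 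Choices with $c_t=0$ for some $t\in Z$ give the zero space and can be dropped, so $J\supseteq Z$ is forced. The surviving summand is exactly $\mathcal H_{J,S}$, and $(J,S)$ ranges over all $Z\subseteq J\subseteq D$ and $S\subseteq[B]\setminus D$.

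\emph{Step 3: orthogonality.} Take two distinct index pairs $(J,S)\neq(J',S')$. They differ at some block $t$, where one summand uses $\mathcal V_t^{(0)}$ and the other uses $\mathcal V_t^{(1)}$. These are orthogonal: $\ket{w_t}\perp R_t$ by \Cref{lem:Ran(Q) orthogonal decomp}, and $\ket{\perpp\perpp}\perp\mathcal H_t^{\neq\perp}$ by definition. The inner product of two product vectors factorizes over blocks, so it vanishes whenever one factor does. Orthogonality of the spanning product vectors then extends by sesquilinearity to the whole subspaces, so the sum in Step 2 is an orthogonal direct sum.

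The only delicate point is bookkeeping rather than a genuine obstacle. One has to treat the blocks in $Z$ correctly: there $\ket{w_t}$ is undefined, which is exactly why the index set requires $J\supseteq Z$. One also has to confirm that the orthogonality $\ket{w_t}\perp R_t$ is supplied by \Cref{lem:Ran(Q) orthogonal decomp}.
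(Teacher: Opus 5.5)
Your proposal is correct and takes the same route as the paper. The paper obtains the lemma ``by rearranging'' \Cref{eq:Ran(Q_D)}, that is, by distributing the tensor product over the blockwise splittings: \Cref{lem:Ran(Q) orthogonal decomp} for $t\in D\setminus Z$, $\Ran(Q_t)=R_t$ for $t\in Z$, and the vacuum split for $t\notin D$. You simply make explicit the spanning and orthogonality bookkeeping that the paper leaves implicit, including why $J\supseteq Z$ is forced.
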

\begin{proof}[Proof of \Cref{lem:Ran(Q) orthogonal decomp}]
First, we prove the two subspaces $\spanop\{w_t\}$ and $R_t$ are orthogonal. Recall that $\ket{w_t}=\frac{1}{\sqrt{p_t}}Q_t\ket{\perpp\perpp}$, then $Q_t \ket{w_t} = \ket{w_t}$. For any $\ket{\phi}\in R_t$, \[
\braket{w_t|\phi}=\frac{1}{\sqrt{p_t}}\braket{\perpp\perpp|Q_t|\phi}=\frac{1}{\sqrt{p_t}}\braket{\perpp\perpp|\phi}=0.
\]
Since $\ket{w_t}\in \Ran(Q_t)$ and $R_t\subseteq \Ran(Q_t)$, we have that $\spanop\{\ket{w_t}\} \oplus R_t \subseteq \Ran(Q_t)$.
\\~\\
On the other hand, for any state $\ket{\phi}\in \Ran(Q_t)$, set $\alpha= \braket{w_t|\phi}$ and $\ket{\phi'}=\ket{\phi}-\alpha \ket{w_t}$. Note that $\ket{\phi'}\in R_t$ since $Q_t \ket{\phi'}=\ket{\phi'}$ and $\braket{\perpp\perpp|\phi'}=\braket{\perpp\perpp|Q_t|\phi}-\alpha\braket{\perpp\perpp|Q_t|w_t}=\sqrt{p_t}\braket{w_t|\phi}-\alpha\sqrt{p_t}\braket{w_t|w_t}=0$. Therefore, $\spanop\{\ket{w_t}\} \oplus R_t = \Ran(Q_t)$.
\end{proof}

\paragraph{Eigenvalue Computation} In this paragraph, we go through each subspace $\mathcal{H}_{J,S}$ and compute the eigenvalue of $Q_D \Lambda_{\leq k} Q_D$. We prove the following main result.

\begin{lemma}\label{lem:technical-lambda-JS}
For every $Z\subseteq J\subseteq D$ and $S\subseteq[B]\setminus D$, the operator $Q_D\Lambda_{\le k}Q_D$ acts on $\mathcal H_{J,S}$ as multiplication by
\[
\lambda_{J,S}
=
\sum_{\substack{L'\subseteq D\setminus J\\|L'|+|J|+|S|\le k}}
\left(\prod_{t\in L'}(1-p_t)\right)
\left(\prod_{t\in D\setminus(J\cup L')}p_t\right).
\]
\end{lemma}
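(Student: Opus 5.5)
Since every operator in sight is a tensor product of single-block operators, the plan is to reduce everything to single-block actions. I will verify the multiplication claim on product vectors spanning $\mathcal H_{J,S}$ and extend by linearity. Fix a product vector
\[
\ket{\phi}=\bigotimes_{t\in J}\ket{\rho_t}\otimes\bigotimes_{t\in D\setminus J}\ket{w_t}\otimes\bigotimes_{t\in S}\ket{\eta_t}\otimes\bigotimes_{t\in[B]\setminus(D\cup S)}\ket{\perpp\perpp},
\]
with $\ket{\rho_t}\in R_t$ and $\ket{\eta_t}\in\mathcal H_t^{\neq\perp}$. Such vectors span $\mathcal H_{J,S}$ by \Cref{lem:orthogonal decomp Ran(Q_D)}. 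Since $\ket{\phi}\in\Ran(Q_D)$, we have $Q_D\ket\phi=\ket\phi$. It therefore suffices to compute $Q_D\Lambda_{\le k}\ket{\phi}=\sum_{|L|\le k}Q_D\Lambda_L\ket\phi$ term by term. Each $Q_D\Lambda_L$ is a tensor product of local operators, so each term factorizes over blocks.

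The local computations go as follows.
\begin{itemize}
\item For $t\in J$ we have $\ket{\rho_t}\perp\ket{\perpp\perpp}$. Thus $P_{t,\perp}\ket{\rho_t}=0$ and $Q_tP_{t,\neq\perp}\ket{\rho_t}=\ket{\rho_t}$, which forces $t\in L$.
\item For $t\in S$ the same holds with $\Id$ in place of $Q_t$, forcing $t\in L$.
\item For $t\in[B]\setminus(D\cup S)$ the vector is the vacuum, which forces $t\notin L$ and leaves the factor unchanged.
\item For $t\in D\setminus J$ the key identity is $Q_tP_{t,\perp}\ket{w_t}=\ket{w_t}\braket{\perpp\perpp|w_t}/\sqrt{p_t}\cdot\sqrt{p_t}$. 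Concretely, $P_{t,\perp}\ket{w_t}=\sqrt{p_t}\ket{\perpp\perpp}$, because $\braket{\perpp\perpp|w_t}=\braket{\perpp\perpp|Q_t|\perpp\perpp}/\sqrt{p_t}=\sqrt{p_t}$. Applying $Q_t$ gives $\sqrt{p_t}\cdot\sqrt{p_t}\ket{w_t}=p_t\ket{w_t}$. Hence $Q_tP_{t,\neq\perp}\ket{w_t}=(1-p_t)\ket{w_t}$.
\end{itemize}

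Combining, the only surviving terms are those with $L=J\cup S\cup L'$ for some $L'\subseteq D\setminus J$ and $|L|\le k$. Each such term returns $\ket\phi$ multiplied by $\prod_{t\in L'}(1-p_t)\prod_{t\in D\setminus(J\cup L')}p_t$. Summing over $L'$ gives exactly $\lambda_{J,S}$, and linearity extends this to all of $\mathcal H_{J,S}$.

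The only delicate point is the $D\setminus J$ factor, where we must check that $Q_tP_{t,\perp}$ maps $\ket{w_t}$ back into its own span rather than leaking into $R_t$. The computation above confirms this, since $Q_t\ket{\perpp\perpp}=\sqrt{p_t}\ket{w_t}$ by definition. Blocks with $p_t=0$ lie in $Z\subseteq J$, so $\ket{w_t}$ is only needed where it is defined. A minor bookkeeping point is that, because $Q_D$ acts as the identity outside $D$, $S$ ranges freely over subsets of $[B]\setminus D$, as in the decomposition.
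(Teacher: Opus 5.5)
Your proposal is correct and follows essentially the same route as the paper. Both arguments factor $Q_D\Lambda_L Q_D$ (equivalently $Q_D\Lambda_L$ on $\Ran(Q_D)$) block by block, use the local actions on $R_t$, $\ket{w_t}$, $\mathcal H_t^{\neq\perp}$ and the vacuum to force $L=J\sqcup S\sqcup L'$, and sum the resulting products of $p_t$ and $1-p_t$.

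One slip to fix: your first formula for $Q_tP_{t,\perp}\ket{w_t}$, read literally, gives $\sqrt{p_t}\ket{w_t}$. The ``concretely'' computation that follows gives the correct value $p_t\ket{w_t}$, so just delete the first expression.
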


\begin{proof}
We start by noting that for $t\in D$ with $p_t>0$,
\[
Q_tP_{t,\perp}Q_t=p_t\proj{w_t},
\qquad
Q_tP_{t,\neq\perp}Q_t=Q_t-p_t\proj{w_t}.
\]
and with $p_t=0$,
\[
Q_tP_{t,\perp}Q_t=0, \qquad Q_tP_{t,\neq\perp}Q_t=Q_t
\]
Thus, for $\ket{r_t}\in R_t$,
\begin{align}
Q_tP_{t,\perp}Q_t\ket{r_t}&=0,
&
Q_tP_{t,\neq\perp}Q_t\ket{r_t}&=\ket{r_t},
\label{eq:r_t}
\end{align}
and
\begin{align}
Q_tP_{t,\perp}Q_t\ket{w_t}&=p_t\ket{w_t},
&
Q_tP_{t,\neq\perp}Q_t\ket{w_t}&=(1-p_t)\ket{w_t}.
\label{eq: w_t}
\end{align}
Now, for $L\subseteq[B]$, set
$T_L=Q_D\Lambda_LQ_D=\bigotimes_{t\in[B]}T_{L,t}$, where
\begin{equation}\label{eq:T-L-t cases}
T_{L,t}
=
\begin{cases}
Q_tP_{t,\neq\perp}Q_t,&t\in D\cap L,\\
Q_tP_{t,\perp}Q_t,&t\in D\setminus L,\\
P_{t,\neq\perp},&t\in L\setminus D,\\
P_{t,\perp},&t\in[B]\setminus(D\cup L).
\end{cases}
\end{equation}
It suffices to check the action on elementary tensors spanning $\mathcal H_{J,S}$. For such a tensor, $T_L$ acts as 0 unless 
\[
J\subseteq L
\qquad\text{and}\qquad
L\cap([B]\setminus D)=S.
\]
Writing $L=J\sqcup S\sqcup L'$ with $L'\subseteq D\setminus J$, the resulting scalar is
\[
\left(\prod_{t\in J}1\right)
\left(\prod_{t\in L'}(1-p_t)\right)
\left(\prod_{t\in D\setminus(J\cup L')}p_t\right).
\]
Summing over all $L$ with $|L|\le k$ proves the claim, and linearity extends the conclusion from elementary tensors to every vector in $\mathcal H_{J,S}$.
\end{proof}
\paragraph{Remaining Proof of \Cref{lem:technical lemma 1}} By definition, 
\[
\norm{Q_D\Lambda_{\leq k}}^2  = 
\lambda_{\max}(Q_D \Lambda_{\leq k}  Q_D)
\]
By \Cref{lem:orthogonal decomp Ran(Q_D)}, we have the orthogonal decomposition of $\Ran(Q_D)$ such that 
\[
    \operatorname{Ran}(Q_D)
    =\bigoplus_{\substack{Z\subseteq J\subseteq D\\S\subseteq[B]\setminus D}}
    \mathcal H_{J,S}
\]
and by \Cref{lem:technical-lambda-JS}, every subspace $\mathcal H_{J,S}$ has the eigenvalue $\lambda_{J,S}$. Set up independent Bernoulli variables $\{X_t\}_{t\in X}$ where $\Prb[X_t=1]=1-p_t$. Then,
\[
\lambda_{J,S}
=
\sum_{\substack{L'\subseteq D\setminus J: \\ |L'|+|J|+|S|\leq k}} \left(\prod_{t\in L'}(1-p_t)\right)\left(\prod_{t\in D\setminus (L'\cup J)}p_t\right) 
=
\Prb\left[\sum_{t\in D\setminus J}X_t \leq k-|J|-|S|\right].
\]
Given that $\sum_{t\in D} X_t 
\leq 
|J|+\sum_{t\in D\setminus J} X_t 
\leq
|S|+|J|+\sum_{t\in D\setminus J} X_t 
$, then every eigenvalue 
\[
\lambda_{J,S}=\Prb\left[\sum_{t\in D\setminus J}X_t+|J|+|S| \leq k\right]
\leq 
\Prb\left[\sum_{t\in D}X_t \leq k\right].
\]
In addition, set $J=Z= \{t\in D:p_t=0\}$ and $S=\emptyset$. Consider the subspace $\mathcal{H}_{Z,\emptyset}$. Note that every $t\in Z$ has $p_t=0$ and $X_t=1$. Then, we have
\[
\lambda_{Z,\emptyset}= \Prb\left[|Z|+\sum_{t\in D\setminus Z} X_t\leq k\right] = \Prb\left[\sum_{t\in D} X_t\leq k\right].
\]
Therefore, $\norm{Q_D\Lambda_{\leq k}}^2 
= 
\lambda_{\max}(Q_D \Lambda_{\leq k} Q_D)
= \lambda_{Z, \emptyset} = \Prb\left[\sum_{t\in D} X_t\leq k\right]$.


\subsection{From the Recording Progress to the Success Probability}

Let $r=(r_t^A, r_t^B)_{t\in [B]}$ be an input sampled from $\bigotimes_{t\in [B]} \text{Uniform}([N]\times [N])$. The task $(B,N,d)$-$\bhvf$ is to output the hidden values $(r_t^A,r_t^B)$ for at least $\lceil dB \rceil$ distinct blocks. 
For every computational basis $\ket{\eta}= \ket{t,\tau, x,u,w}_{QPW}$ in the algorithm space $\mathcal{H}_\Alg$, the working register $W$ holds some value $w$ that presents a classical output of the algorithm. Specifically, each basis $\ket{\eta}\in \mathcal{H}_\Alg$ is said to be well formed if it names
\begin{itemize}
    \item a block subset $D_\eta \subseteq [B]$ with $|D_{\eta}|\geq \lceil dB \rceil$ and
    \item a guessing value $(t,g^A_t, g^B_t)_{t\in D_\eta}$ for each block $t \in D_\eta$.
\end{itemize}
Given an input $r=(r_t^A, r^B_t)_{t\in [B]}$, 
we say a computational basis output $\ket{\eta}\in \mathcal{H}_\Alg$ is successful on it if $g^A_t=r^A_t$ and $g^B_t=r^B_t$ for every $t\in D_\eta$.
For each well-formed basis $\ket{\eta}$, we define the projector $Q_\eta$ such that
\[
    Q_\eta
    =
    \left(
    \bigotimes_{t\in D_\eta}
    \proj{g^A_t, g_t^B}
    \right)
    \otimes 
    \left(
    \bigotimes_{t\in [B]\setminus D_\eta}
    \Id_{\mathcal{H}_t}
    \right).
\]
For each ill-formed basis $\ket{\eta}$ (violating the output syntax or naming less than $\lceil dB \rceil$ blocks), we define $Q_\eta = 0$. The overall success projector is
\begin{equation}\label{eq: pi-success}
        \Pi_{succ}^{BHVF}
    =
    \sum_{\eta}
    \proj{\eta}_{\Alg}\otimes Q_\eta.
\end{equation}
We claim the following lemma and use it to prove \Cref{thm:bhvf}.
\begin{lemma}\label{lem:main-mbeh helper'}
For every basis $\ket{\eta}\in \mathcal{H}_\Alg$, given any integer $k \leq dB/2$, 
\[
    \norm{Q_\eta\mathcal{T}_B\Lambda_{\le k}}^2
    \le
    e^{-dB/24}
\]
where $\mathcal{T}_B=(S\otimes S)^{\otimes B}$ and $\Lambda_{\le k}
    =  \sum_{\substack{L\subseteq[B]:\\|L|\le k}} \Lambda_L$ is defined by \Cref{eq:Projection perp} and \Cref{eq:Lambda-k}.
\end{lemma}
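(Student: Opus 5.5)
The plan is to reduce the statement to the soundness identity \Cref{lem:technical lemma 1} and then apply a Chernoff bound. If $\ket{\eta}$ is ill-formed, $Q_\eta=0$ and there is nothing to prove. So fix a well-formed $\ket{\eta}$ with block set $D=D_\eta$, $|D|\ge\lceil dB\rceil$, and guesses $(g^A_t,g^B_t)_{t\in D}$.

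\textbf{Step 1: conjugate by $\mathcal{T}_B$.} Since $\mathcal{T}_B$ is unitary and Hermitian ($S=S^\dagger$), we have
\[
\norm{Q_\eta\mathcal{T}_B\Lambda_{\le k}}=\norm{\mathcal{T}_B Q_\eta\mathcal{T}_B\Lambda_{\le k}}.
\]
The operator $\mathcal{T}_BQ_\eta\mathcal{T}_B$ is again a tensor-product projector. On each block $t\in D$ it acts as $Q_t:=(S\otimes S)\proj{g^A_t,g^B_t}(S\otimes S)$, a nonzero rank-one projector, and on each block $t\notin D$ it acts as the identity. It therefore has exactly the form $Q_D$ required by \Cref{lem:technical lemma 1}, which gives
\[
\norm{Q_\eta\mathcal{T}_B\Lambda_{\le k}}^2=\Prb\Big[\sum_{t\in D}X_t\le k\Big],
\]
where the $X_t$ are independent with $\Prb[X_t=0]=p_t$.

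\textbf{Step 2: compute $p_t$.} By definition,
\[
p_t=\norm{Q_t\ket{\perpp\perpp}}^2=|\braket{g^A_t,g^B_t|(S\otimes S)|\perpp\perpp}|^2=|\braket{g^A_t,g^B_t|\tilde u,\tilde u}|^2=\frac{1}{N^2}.
\]
Hence every $X_t$ is Bernoulli with mean $1-1/N^2\ge 1-1/16$, using $N\ge 4$.

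\textbf{Step 3: lower tail.} Put $X=\sum_{t\in D}X_t$, so $\mu:=\E[X]=|D|(1-1/N^2)\ge \tfrac{15}{16}dB$. For $k\le dB/2$ we have $k\le \mu(1-\delta)$ with $\delta\ge 1-\tfrac{8}{15}=\tfrac{7}{15}$. The multiplicative Chernoff bound then gives
\[
\Prb[X\le k]\le e^{-\delta^2\mu/2}\le e^{-(49/225)(15/32)dB}\le e^{-dB/24}.
\]
The constant works out because $\tfrac{49\cdot 15}{225\cdot 32}\approx 0.102>1/24$.

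No step is a real obstacle once \Cref{lem:technical lemma 1} is available. The only point needing care is Step 1: one must check that conjugating by $\mathcal{T}_B$ preserves the block-product structure and does not disturb $\Lambda_{\le k}$. This holds because the conjugation is applied only on the $Q_\eta$ side, and $\Lambda_{\le k}$ enters the soundness lemma unchanged.
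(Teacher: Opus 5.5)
Your proposal is correct and follows the paper's proof step for step: conjugate $Q_\eta$ by the self-inverse unitary $\mathcal{T}_B$ to get a product of rank-one projectors $\proj{z_t}$ on the blocks in $D_\eta$, compute $p_t = 1/N^2$, invoke \Cref{lem:technical lemma 1}, and finish with a lower-tail Chernoff bound. The only difference is in the constants: you use $N\ge 4$ to get $\Pr[X_t=1]\ge 15/16$ and $\delta = 7/15$, while the paper uses the cruder $\Pr[X_t=1]\ge 3/4$ (from $N\ge 2$) and $\delta=1/3$, which gives exactly $e^{-dB/24}$.
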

\subsubsection{Proof of \Cref{lem:main-mbeh helper'}}
If \(\ket{\eta}\) is ill-formed, then \(Q_\eta=0\) and the claim is correct trivially. 
Assume \(\ket{\eta}\) is well-formed. Let $m=|D_\eta|$.
For each selected block \(t\in D_\eta\), define $\ket{z_t}=(S\otimes S)\ket{g_t^A, g_t^B}$ where $g^A_t, g^B_t \in [N]$. Then, 
\[
    \mathcal{T}_B Q_\eta\mathcal{T}_B
    = 
    (S\otimes S)^{\otimes B}Q_\eta  (S\otimes S)^{\otimes B}
    =
    \left(\bigotimes_{t\in D_\eta}\proj{z_t}
    \right)\otimes \Id_{[B]\setminus D_\eta}.
\]
For each $t\in D_\eta$, set up the projector $Q_t=\proj{z_t}$ and $p_t = \norm{Q_t \ket{\perpp\perpp}}^2$ where
\[
p_t 
= \bra{\perpp\perpp}Q_t\ket{\perpp\perpp}
=\abs{\braket{\perpp\perpp|z_t}}^2 =\abs{\braket{\perpp\perpp|S \otimes S|g_t}}^2 =\abs{\braket{\tilde{u}, \tilde{u}|g^A_t, g^B_t}}^2=\abs{\frac{1}{\sqrt{N}} \cdot \frac{1}{\sqrt{N}}}^2=\frac{1}{N^2}
\]
since $\ket{\tilde{u}}=\frac{1}{\sqrt{N}}\sum_{i\in [N]}\ket{i}$. Set up \(\{X_t\}_{t\in D'_\eta}\) to be mutually independent Bernoulli random variables with $\Prb[X_t=0]=p_t$ and $\Prb[X_t=1]=1-p_t$. 
Given any integer $k\in \{0,1,\ldots, B\}$, by \Cref{lem:technical lemma 1}, 
\[
    \norm{Q_\eta\mathcal{T}_B \Lambda_{\le k}}^2
    =
    \norm{\mathcal{T}_B Q_\eta\mathcal{T}_B \Lambda_{\le k}}^2
    =
    \Prb\left[\sum_{t\in D_\eta}X_t\le k\right].
\]
Since $N\geq 2$, every random variable $X_t$ with $t\in D_\eta$ has $\Prb[X_t=1]=1-\frac{1}{N^2}\geq 3/4$. Set $\mu =\mathbb{E}[\sum_{t\in D_\eta} X_t]\geq 3m/4$ where $m=|D_\eta|\geq dB$.

For any $k\leq dB/2$, it has $k \leq \frac{1}{2}dB \leq \frac{2}{3}\mu$. Note that $\mu \geq 3dB/4$.
By Chernoff lower-tail bounds, 
\[
\Pr\left[
\sum_{t\in D_\eta} X_t \le k
\right]
\le
\Pr\left[
\sum_{t\in D_\eta} X_t
\le
\left(1-\frac{1}{3}\right)\mu
\right]
\le
\exp\left(
-\frac12\left(\frac13\right)^2\mu
\right)
\le e^{-dB/24}.
\]
Therefore $\norm{Q_\eta\mathcal T\Pi_{\le k}}^2
    \le
    e^{-dB/24}$.

\subsubsection{Proof of \Cref{thm:bhvf}}
It suffices to prove that, for every fixed constant \(d\in(0,1]\), there exist some constants \(c_d,b_d>0\) such that, for every \(B\ge b_d\) and \(N\ge2\), any quantum algorithm that solves $(B,N,d)$-$\bhvf$ with success probability at least \(1/3\) requires at least $c_d B\sqrt N$ quantum queries. Note that the success probability is taken over the uniform input distribution and the algorithm. Given an arbitrary constant $d \in (0,1]$, set up 
\begin{equation}\label{eq:parameter}
        k=\lceil dB/2\rceil,
    \qquad
    c_d=d/8e
\end{equation}
Consider any algorithm with $T< c_d B \sqrt{N}$ quantum queries. In the following, we prove that it succeeds with probability $< 1/3$. By the triangle inequality, 
\begin{align*}
\Prb[\text{success}]^{1/2} 
& = \norm{\Pi_{succ}^{BHVF}\mathcal T\ket{\phi_T}} 
\leq
\norm{\Pi_{succ}^{BHVF}\mathcal T\Pi_{\leq k-1}\ket{\phi_T}}+ \norm{\Pi_{succ}^{BHVF}\mathcal T\Pi_{\geq k}\ket{\phi_T}} 
\end{align*}
By \Cref{lem:binomial-progress'} and \Cref{eq:parameter}, for $T\geq k$
\begin{equation}\label{eq: ub 2-k}
    \norm{\Pi_{\ge k}\ket{\phi_T}}=
\Delta_{T, k}\leq \binom{T}{k} \left(\frac{2}{\sqrt{N}}\right)^k 
\leq 
\left(\frac{eT}{k}\cdot\frac{2}{\sqrt{N}}\right)^k 
\leq 
\left(\frac{e \cdot c_d B \sqrt{N} }{dB/2}\cdot\frac{2}{\sqrt{N}}\right)^k 
= \left(\frac{ 4e  c_d}{d}\right)^k 
= 2^{-k}.
\end{equation}
Recall the definition, $\mathcal{T}=\mathbb{I}_{QPW}\otimes \mathcal{T}_B$, $\Pi_{\leq k} = \mathbb{I}_{QPW} \otimes \Lambda_{\leq k}$ and $\Pi_{succ}^{BHVF} = \sum_{\eta} \proj{\eta}_{\Alg}\otimes Q_\eta$. 
With $k-1\leq dB/2 $,
\begin{equation}\label{eq:ub db-24}
    \norm{\Pi_{succ}^{BHVF}\mathcal T\Pi_{\leq k-1}\ket{\phi_T}}
    \leq    \norm{\Pi_{succ}^{BHVF}\mathcal T\Pi_{\leq k-1}}
    =
    \norm{Q_\eta \mathcal{T}_B \Lambda_{\leq k-1}}
    \leq 
    e^{-dB/48}.
\end{equation}
Combining \Cref{eq: ub 2-k} and \Cref{eq:ub db-24}, for any $B\geq b_d =25 \ln  3/d$, \begin{align*}
\Prb[\text{success}]^{1/2} 
& \leq
\norm{\Pi_{succ}^{BHVF}\mathcal T\Pi_{\leq k-1}\ket{\phi_T}}+ \norm{\Pi_{succ}^{BHVF}\mathcal T\Pi_{\geq k}\ket{\phi_T}} \\
& \leq \max_{\eta} \norm{Q_\eta \mathcal{T}_B\Lambda_{\leq k-1}} + \norm{\Pi_{succ}^{BHVF}\mathcal T} \cdot  \norm{\Pi_{\geq k}\ket{\phi_T}} \\
& \leq e^{-dB/48} + 2^{-dB/2} \\
& <\frac{1}{\sqrt{3}}.
\end{align*}
Therefore, any quantum algorithm that succeeds with probability at least $1/3$ requires at least $c_d B\sqrt N$ queries. 


\bibliographystyle{alpha}
\bibliography{ref}

\appendix

\section{Lower Bounds of the RMIF Problem}\label{sec:appnd:RMIF-lb}
\paragraph{The Standard Query Model}
We use $\mathcal H_{\inp}$ to denote the input space over all $B$ blocks such that
\[
\mathcal H_{\inp}=\left(\mathcal H_{\inp,\loc}\right)^{\otimes B} \qquad \text{where }     \mathcal H_{\inp,\loc}=\spanop\{\ket{0}, \ket{1},\ldots, \ket{{N-1}}\}.
\]
The algorithm space $\mathcal{H}_\Alg$ is over three registers $Q,P,W$, where the query register $Q$ holds a block index $t\in [B]$ and a position index $x\in [N]$, the phase register $P$ holds $u\in \mathbb{Z}_{2}$
and the working register $W$ holds some value $w$. The algorithm space \(\mathcal H_{\Alg}\) is spanned by basis states $\ket{t,x}_Q\ket{u}_P\ket{w}_W$.
Given an input $r=(r_t)_{t\in [B]}$ with $r_t\in [N]$, the query oracle $O_r$ is defined by
\[
\mathcal{O}_r \ket{t,x,u,w} = 
    \omega^{u \cdot f_t(x)} \ket{t,x,u,w} 
\quad \text{where } \omega=e^{\pi i}=-1.
\]
Recall that $f_t(x)=\mathbbm{1}[x=r_t]$. 
\paragraph{Quantum Query Algorithms in the Standard Model}
Consider any $T$-query algorithm specified by a sequence of $U_0, U_1, \ldots, U_T$ of unitary operators. 
Note that the input $r=(r_t)_{t\in [B]}$ is sampled from $\mathcal{D}_{B,N}=\bigotimes_{t\in [B]}\text{Uniform}([N])$.
The joint state $\ket{\psi_t}$ over the joint space $\mathcal{H}_{\Alg}\otimes \mathcal{H}_\inp$ after $t\leq T$ queries to some input $r$ is 
\begin{equation}\label{eq:joint-state-standard}
    \ket{\psi_t}
    =
    \left(U_t\otimes \mathbb{I}_{\inp}\right) 
    \mathcal{O} 
    \left(U_{t-1}\otimes \mathbb{I}_{\inp}\right) 
    \ldots 
    \mathcal{O}\left(U_0\otimes \mathbb{I}_{\inp}\right) 
    \ket{0}_{\Alg}  \ket{\mathcal{D}}_{\inp}
\end{equation}
where the initial input state is $\ket{\mathcal{D}}_\inp= \ket{\tilde{u}}^{\otimes B}$ with $ \ket{\tilde{u}}=\frac1{\sqrt N}\sum_{i\in [N]}\ket{i}$ and $\mathcal{O}=\sum_r \mathcal{O}_r \otimes \proj{r}$ is a unitary over the joint space $\mathcal{H}_{\Alg}\otimes \mathcal{H}_\inp$ .

\paragraph{The Recording Query Model} Let $\ket{\perp}$ be the additional state that is orthogonal to 
\\
$\spanop\{\ket{0}, \ket{1},\ldots, \ket{{N-1}}\}$. We use $\mathcal H_{\rec}$ to denote the recording space over all $B$ blocks such that
\[
\mathcal H_{\rec}=\left(\mathcal H_{\rec,\loc}\right)^{\otimes B} \qquad \text{where }     \mathcal H_{\rec,\loc}=\spanop\{\ket{\perp}, \ket{0}, \ket{1},\ldots, \ket{{N-1}}\}.
\]
We define the global recording transform $\mathcal{T}$ and recording query operator $\mathcal{R}$ over $\mathcal{H}_{\Alg} \otimes \mathcal{H}_\rec$ such that
\[
\mathcal{T}= \mathbb{I}_{QPW} \otimes \left(S^{\otimes B}\right), 
\qquad 
\mathcal{R}=\mathcal{T}^\dagger \mathcal{O}\mathcal{T}.
\]
where $S$ is a unitary operator defined by \Cref{def:local-swap-S}. Given the input $r=(r_t)_{t\in [B]}$ sampled from the distribution $\bigotimes_{t\in [B]}\text{Uniform}([N])$, by an analogous argument of \Cref{thm:relation-phi-psi}
    \[
    \ket{\psi_T} = \mathcal{T} \ket{\phi_T} 
    \]
where the joint state $\ket{\psi_T}$ in the standard query model is defined by \Cref{eq:joint-state-standard} and the joint state $\ket{\phi_T}$ in the recording query model is defined as
\[
    \ket{\phi_T}
    =
    \left(U_T\otimes \mathbb{I}_{\rec}\right) 
    \mathcal{R} 
    \left(U_{T-1}\otimes \mathbb{I}_{\rec}\right) 
    \ldots 
    \mathcal{R} 
    \left(U_0\otimes \mathbb{I}_{\rec}\right) 
    \ket{0}_{\Alg}  \left(\ket{\perp}^{\otimes  B}\right)_{\inp}.
    \]
\subsection{Analysis of the Recording Progress}
Given a computational basis  $\ket{r}=\bigotimes_{t\in [B]} \ket{r_t}$ of the recording space with $r_t \in \{\perp \}\cup [N]$, we say a block $t\in [B]$ is vacuum if $\ket{r_t}=\ket{\perp}$ and nonvacuum otherwise.
We define the weight
\[
    \wt(r)=\abs{\{t\in [B]:\ket{r_t}\ne \ket{\perp}}
\]
as the number of nonvacuum blocks in $\ket{r}$. We define the following projectors to measure the recording progress.
\begin{definition}
    For every nonnegative integer $k\geq 0$, we define the projectors by giving the basis states on which they project:
    \begin{itemize}
        \item $\Pi_{=k}$, $\Pi_{\leq k}$ and $\Pi_{\geq k}$: all basis states $\ket{t, x,u,w}\ket{r}$ such that its weight $\wt(r)=k$, $\wt(r)\leq k$, and $\wt(r)\geq k$ respectively.
        \item $\Pi_{=k, \perp}:$ all basis states $\ket{t, x,u,w}\ket{r}$ such that $r$ has weights $\wt(r)=k$ and $\ket{r_t}=\ket{\perp}$.
    \end{itemize}
\end{definition}
\begin{definition}[Progress Measurement]
We define the measure of progress $\Delta_{t, k}$ for $t$ queries and $k$ nonvacuum blocks as
    \[
    \Delta_{t,k}=\norm{\Pi_{\ge k}\ket{\phi_t}},
    \]
where $\ket{\phi_t}$ is the joint state after $r$ queries in the recording query model. 
\end{definition}
In this section, we first prove the analogous recurrence formula of the progress $\Delta_{t,k}$ stated in \Cref{lem:progress-recurrence'} and then the analogous upper bound of the progress $\Delta_{t,k}$ stated in \Cref{lem:binomial-progress'}.

\begin{lemma}
\label{lem:binomial-progress''}
For all integers \(0\le k\le t\), the progress $\Delta_{t,k}
    \le
    \binom tk
    \left(\frac2{\sqrt N}\right)^k$.
\end{lemma}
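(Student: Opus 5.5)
The plan is to mirror the BHVF argument of \Cref{lem:progress-recurrence'} and \Cref{lem:binomial-progress'}, now with a single recorded coordinate per block and the phase register $u\in\mathbb{Z}_2$. First I prove the one-step recurrence
\[
\Delta_{t+1,k+1}\le \Delta_{t,k+1}+\frac{2}{\sqrt N}\,\Delta_{t,k}.
\]
Once the recurrence holds, the binomial bound follows by exactly the same induction on $t$ as in the proof of \Cref{lem:binomial-progress'}. The base cases are $\Delta_{0,0}=1$, $\Delta_{0,k}=0$ for $k\ge1$, and $\Delta_{t,k}=0$ for $k>t$ (the recording analogue of \Cref{fact:support}). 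The inductive step then uses Pascal's identity.

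For the recurrence, I first use that $U_{t+1}\otimes\Id$ commutes with $\Pi_{\ge k+1}$, which gives $\Delta_{t+1,k+1}=\|\Pi_{\ge k+1}\mathcal R\ket{\phi_t}\|$. Next I observe that $\mathcal R=\mathcal T\mathcal O\mathcal T$ acts only on the queried block $t$, so it changes the weight by at most one. Hence only components in $\Pi_{\ge k+1}$ or in $\Pi_{=k,\perp}$ can land in $\Pi_{\ge k+1}$. The triangle inequality then gives
\[
\Delta_{t+1,k+1}\le \Delta_{t,k+1}+\|\Pi_{\ge k+1}\mathcal R\Pi_{=k,\perp}\|\,\Delta_{t,k}.
\]
To bound the operator norm, I reuse the orthogonality argument from the BHVF section: distinct basis states in the support of $\Pi_{=k,\perp}$ are mapped to orthogonal vectors. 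It therefore suffices to bound $\|\Pi_{\ge k+1}\mathcal R\ket z\|^2=1-|\beta_u^\perp|^2$ for a single basis state, where
\[
\beta_u^\perp=\bra{\perp}S\,\mathcal O_{x,u}\,S\ket{\perp}=\bra{\tilde u}\mathcal O_{x,u}\ket{\tilde u},
\]
exactly as in \Cref{clm: beta meaning}.

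The only genuinely new computation, and the step where care is needed, is this overlap. For $u=0$ we get $\beta_0^\perp=1$. For $u=1$ the oracle applies the phase $(-1)^{\ind[i=x]}$, so
\[
\beta_1^\perp=\frac1N\Bigl((N-1)-1\Bigr)=1-\frac2N.
\]
Therefore
\[
1-|\beta_u^\perp|^2\le 1-\Bigl(1-\frac2N\Bigr)^2=\frac4N-\frac4{N^2}\le\frac4N,
\]
using $N\ge4$ so that $1-2/N\ge0$. This gives $\|\Pi_{\ge k+1}\mathcal R\Pi_{=k,\perp}\|\le 2/\sqrt N$, the same constant as in the BHVF case, and the recurrence and the stated bound follow.
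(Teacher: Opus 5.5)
Your proposal is correct and follows essentially the same route as the paper: the one-step recurrence obtained from the triangle inequality, the basis-state reduction bounding $\|\Pi_{\ge k+1}\mathcal R\Pi_{=k,\perp}\|$, the overlap computation $\beta_1^\perp=1-2/N$, and the Pascal-identity induction carried over from the BHVF case. You also state explicitly that $N\ge 4$ gives $1-2/N\ge 0$, which is needed for $1-|\beta_u^\perp|^2\le 1-(1-2/N)^2$; the paper leaves this implicit.
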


\begin{proof}[Proof of \Cref{lem:binomial-progress''}]
It suffices to prove that
$\Delta_{t+1,k+1} \le \Delta_{t,k+1} + \frac2{\sqrt N}\Delta_{t,k}$ for all \(t,k\geq 0\).  Recall that 
\[
    \Delta_{t+1,k+1}
     \leq \Delta_{t,k+1} +
    \norm{\Pi_{\ge k+1}\mathcal{R}\Pi_{=k,\perp}}\cdot \Delta_{t,k}.
\]
\[
\norm{\Pi_{\ge k+1}\mathcal{R}\Pi_{=k,\perp}} = \max_z\norm{\Pi_{\geq k+1} \mathcal{R}\Pi_{=k, \perp} \ket{z}} = \max_z \norm{\Pi_{\geq k+1} \mathcal{R}\ket{z}}
\]
among every basis state $\ket{z}=\ket{t, x,u,w}\ket{r}$ such that $\wt(r)=k$ and $\ket{r_t}=\ket{\perp}$ using the block diagonality of $\mathcal{R}$ similar to the proof of \Cref{lem:progress-recurrence'}.
For every block $t \in [B]$, set up the function $\mathcal{O}_{r,x,u}$ and such that for any $r_t\in [N]$, 
\begin{align*}
    \mathcal{O}_{r,x,u}\ket{r_t}=\omega^{u\cdot \mathbb{1}[x=r_t]} \ket{r_t}=\begin{cases}
        - \ket{r_t},  \quad & u=1 \text{ and } x=r_t\\
        \ket{r_t}, & \text{otherwise }
    \end{cases}
\end{align*}
where $\omega=e^{2\pi i/2}=-1$. Set up the value $\beta_u^{\perp} =\braket{\perpp|S \mathcal{O}_{r,x,u} S |\perpp}$. By using the same argument in the proof of \Cref{clm: beta meaning},  
\[
\norm{\Pi_{\geq k+1} \mathcal{R}\ket{z}}^2 = 1-\abs{\beta_u^\perp}^2.
\]
In addition,
\begin{align*}
    \beta_u^{\perp}  & = \braket{\perpp|S \mathcal{O}_{r,x,u}S|\perpp} = \braket{\tilde{u}| \mathcal{O}_{r,x,u} |\tilde{u}} \\
    & = \frac{1}{N} \sum_{i,j\in [N]}\braket{i|\mathcal{O}_{r,x,u}|j}
    = \frac{1}{N} \sum_{i,j\in [N]}\braket{i|\omega^{u\cdot \mathbb{1}[x=j]}|j}\\
    & =\frac{1}{N} \sum_{j\in [N]}\omega^{u\cdot \mathbb{1}[x=j]} =\frac{1}{N} 
    \left(
    \omega^{u}+ \sum_{j\in [N]: j\neq x} \omega^0
    \right) \\
    & = \frac{1}{N}\left((-1)^{u} + N-1\right) = \begin{cases}
        1-\frac{2}{N} \quad & \text{if } u\neq 0\\
        1 & \text{otherwise} 
    \end{cases}.
\end{align*}
Therefore, for any basis state $\ket{z}=\ket{t, x,u,w}\ket{r}$ with $\wt(r)=k$ and $\ket{r_t}=\ket{\perp}$, 
\begin{align*}
    \norm{\Pi_{\geq k+1} \mathcal{R}\ket{z}}^2 \leq 1-|\beta_u^\perp|^2 \leq 1-\left(1-\frac{2}{N}\right)^2 = \frac{4}{N} - \frac{4}{N^2} \leq \frac{4}{N}.
\end{align*}
It implies that $\norm{\Pi_{\ge k+1}\mathcal{R}\Pi_{=k,\perp}} = \max_z \norm{\Pi_{\geq k+1} \mathcal{R}\ket{z}}\leq \frac{2}{\sqrt{N}}$. Overall, 
\begin{align*}
    \Delta_{t+1,k+1}
    & \leq 1 \cdot \Delta_{t,k+1} +
    \norm{\Pi_{\ge k+1}\mathcal{R}\Pi_{=k,\perp}}\cdot \Delta_{t,k}
    \leq \Delta_{t,k+1} +
    \frac{2}{\sqrt{N}} \Delta_{t,k}
\end{align*}
and it implies that $\Delta_{t,k}
    \le
    \binom tk
    \left(\frac2{\sqrt N}\right)^k$.
\end{proof}

\subsection{From the Recording Progress to the Success Probability}

Let $r=(r_t)_{t\in [B]}$ be an input sampled from $\bigotimes_{t\in [B]} \text{Uniform}([N])$. The task $(B,N, d)$-$\rmif$ is to output the marked item $r_t$ for at least $\lceil dB \rceil$ distinct blocks. 
For every computational basis $\ket{\eta}= \ket{t,x,u,w}_{QPW}$ in the algorithm space $\mathcal{H}_\Alg$, the working register $W$ holds some value $w$ that presents a classical output of the algorithm. Specifically, each basis $\ket{\eta}\in \mathcal{H}_\Alg$ is said to be well formed if it names
\begin{itemize}
    \item a block subset $D_\eta \subseteq [B]$ with $|D_{\eta}|\geq \lceil dB \rceil$ and
    \item a guessing value $(t,g_t)_{t\in D_\eta}$ for each block $t \in D_\eta$.
\end{itemize}
Given an input $r=(r_t)_{t\in [B]}$, we say the basis $\ket{\eta}\in \mathcal{H}_\Alg$ is successful on it if  $g_t=r_t$ for every $t\in D_\eta$.
For each well-formed basis $\ket{\eta}$, we define the projector $Q_\eta$ such that
\[
    Q_\eta
    =
    \left(
    \bigotimes_{t\in D_\eta}
    \proj{g_t}
    \right)
    \otimes 
    \left(
    \bigotimes_{t\in [B]\setminus D_\eta}
    \Id_{\mathcal{H}_t}
    \right).
\]
For each ill-formed basis $\ket{\eta}$ (which violates the output syntax or names less than $\lceil dB \rceil$ blocks), we define $Q_\eta = 0$. The overall success projector is
\begin{equation}\label{eq: pi-success}
        \Pi_{succ}^{RMIF}
    =
    \sum_{\eta}
    \proj{\eta}_{\Alg}\otimes Q_\eta.
\end{equation}
Similarly as in \Cref{lem:main-mbeh helper'} for the BHVF problem, we claim the following lemma and use it to prove \Cref{thm:rmif}.
\begin{lemma}\label{lem:main-mbeh helper}
For every basis $\ket{\eta}\in \mathcal{H}_\Alg$, given any integer $k \leq dB/2$, 
\[
    \norm{Q_\eta\mathcal{T}_B\Lambda_{\le k}}^2
    \le
    e^{-dB/24}
\]
where $\mathcal{T}_B=S^{\otimes B}$ and $\Lambda_{\le k}
    =  \sum_{\substack{L\subseteq[B]:\\|L|\le k}} \Lambda_L$ is defined by \Cref{eq:Projection perp} and \Cref{eq:Lambda-k}.
\end{lemma}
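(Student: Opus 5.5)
The plan follows the proof of \Cref{lem:main-mbeh helper'} almost verbatim, now with the single-coordinate local space $\mathcal{H}_t=\mathcal{H}_{\rec,\loc}$. The vacuum state is $\ket{\perp}$, so $P_{t,\perp}=\proj{\perp}$ in \Cref{eq:Projection perp}. Throughout I will use that \Cref{lem:technical lemma 1} never used the two-coordinate structure of $\mathcal{H}_t$: its proof only uses a unit vacuum vector and the orthogonal decomposition of \Cref{lem:Ran(Q) orthogonal decomp}. So I will invoke it with $\mathcal{H}_t=\mathcal{H}_{\rec,\loc}$ and vacuum $\ket{\perp}$.

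If $\ket{\eta}$ is ill-formed, then $Q_\eta=0$ and the claim is trivial. So assume $\ket{\eta}$ is well-formed, and let $m=|D_\eta|\ge\lceil dB\rceil$.

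\begin{enumerate}
\item \textbf{Conjugation by $\mathcal{T}_B$.} Since $S$ is Hermitian and unitary, $\mathcal{T}_B$ is unitary and self-inverse. Hence
\[
\norm{Q_\eta\mathcal{T}_B\Lambda_{\le k}}=\norm{\mathcal{T}_BQ_\eta\mathcal{T}_B\Lambda_{\le k}}.
\]
The conjugated projector is $\mathcal{T}_BQ_\eta\mathcal{T}_B=\left(\bigotimes_{t\in D_\eta}\proj{z_t}\right)\otimes\Id$, where $\ket{z_t}=S\ket{g_t}$.

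\item \textbf{Vacuum overlaps.} Each $Q_t=\proj{z_t}$ is a nonzero rank-one projector. Its vacuum overlap is
\[
p_t=\abs{\braket{\perp|S|g_t}}^2=\abs{\braket{\tilde u|g_t}}^2=\frac1N,
\]
using $S\ket{\perp}=\ket{\tilde u}$ and the Hermiticity of $S$.

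\item \textbf{Apply the soundness lemma.} \Cref{lem:technical lemma 1} gives
\[
\norm{Q_\eta\mathcal{T}_B\Lambda_{\le k}}^2=\Prb\left[\sum_{t\in D_\eta}X_t\le k\right],
\]
where the $X_t$ are independent Bernoulli variables with $\Prb[X_t=1]=1-1/N\ge 3/4$, since $N\ge4$.

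\item \textbf{Chernoff bound.} Let $\mu=\E\left[\sum_{t\in D_\eta}X_t\right]$. Then $\mu\ge 3m/4\ge 3dB/4$, so $k\le dB/2\le \tfrac23\mu$. The multiplicative Chernoff lower tail with deviation $1/3$ then gives
\[
\Prb\left[\sum_{t\in D_\eta}X_t\le k\right]\le \Prb\left[\sum_{t\in D_\eta}X_t\le\left(1-\tfrac13\right)\mu\right]\le \exp\left(-\mu/18\right)\le e^{-dB/24}.
\]
\end{enumerate}

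The only step needing care is the first remark: \Cref{lem:technical lemma 1} is stated for $\mathcal{H}_t=\mathcal{H}_{\rec,\loc}\otimes\mathcal{H}_{\rec,\loc}$, and I must confirm it transfers to the single-register space. The check is that \Cref{lem:Ran(Q) orthogonal decomp}, \Cref{lem:orthogonal decomp Ran(Q_D)} and \Cref{lem:technical-lambda-JS} use only a unit vacuum vector $\ket{v}$, the projector $P_{t,\perp}=\proj{v}$, and $\mathcal{H}_t^{\neq\perp}=\ket{v}^\perp$. Replacing $\ket{\perp\perp}$ by $\ket{\perp}$ therefore goes through line by line. Everything else is routine and mirrors the BHVF case; it is in fact slightly worse only in $p_t$ ($1/N$ instead of $1/N^2$), which does not affect the constant.
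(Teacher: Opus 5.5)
Your proof is correct and follows the paper's argument step for step. You conjugate by the self-inverse unitary $\mathcal{T}_B$, compute the vacuum overlaps $p_t=1/N$, apply the exact soundness identity, and finish with the multiplicative Chernoff lower tail at deviation $1/3$. The only cosmetic difference is packaging: the paper states the single-register soundness lemma separately as \Cref{lem:technical lemma 2}, asserted by analogy, while you invoke \Cref{lem:technical lemma 1} directly after checking that its proof uses only a unit vacuum vector, which is exactly the justification that analogy relies on.
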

In addition, by analogy with \Cref{lem:technical lemma 1}, we can get the following result by setting $\ket{\perp}$ as the vacuum state.

\begin{lemma}\label{lem:technical lemma 2}
For every $t\in [B]$, let $\mathcal{H}_t= \mathcal{H}_{\rec,\loc}$ be the local space and $\ket{\perpp}$ be the vacuum state.
Let $\{Q_t\}_{t\in D}$ be a given collection of nonzero orthogonal projector with $p_t = \norm{Q_t \ket{\perpp}}^2$. Set up \(\{X_t\}_{t\in D}\) to be mutually independent Bernoulli random variables with $\Prb[X_t=0]=p_t$ and $\Prb[X_t=1]=1-p_t$. Given any integer $k\in \{0,1,\ldots, B\}$, it has
    \[
    \norm{Q_D\Lambda_{\le k}}^2
    = 
    \Prb\left[\sum_{t\in D}X_t\le k\right].
\]
\end{lemma}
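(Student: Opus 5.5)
The plan is to repeat the proof of \Cref{lem:technical lemma 1} almost verbatim. The only change is that each local space is now $\mathcal{H}_t=\mathcal{H}_{\rec,\loc}$ and the vacuum is $\ket{\perpp}$ instead of $\ket{\perpp\perpp}$. That earlier argument never used the tensor structure $\mathcal{H}_{\rec,\loc}\otimes\mathcal{H}_{\rec,\loc}$ of a block. It only used three facts: each block carries a distinguished unit vector (the vacuum), $P_{t,\perp}$ is the rank-one projector onto it, and $P_{t,\neq\perp}$ is its complement. So I will restate each step with $P_{t,\perp}=\proj{\perpp}$ and check that nothing else changes.

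\textbf{Reduction to an eigenvalue problem.} Since $Q_D$ and $\Lambda_{\le k}$ are orthogonal projectors, $\norm{Q_D\Lambda_{\le k}}^2=\norm{Q_D\Lambda_{\le k}Q_D}=\lambda_{\max}(Q_D\Lambda_{\le k}Q_D)$, computed on $\Ran(Q_D)$.

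\textbf{Local decomposition.} For each $t\in D$ with $p_t>0$, set $\ket{w_t}=p_t^{-1/2}Q_t\ket{\perpp}$. Exactly as in \Cref{lem:Ran(Q) orthogonal decomp}, I will show $\Ran(Q_t)=\spanop\{\ket{w_t}\}\oplus R_t$ with $R_t=\Ran(Q_t)\cap\{\ket{\perpp}\}^{\perp}$. The proof projects any $\ket{\phi}\in\Ran(Q_t)$ onto $\ket{w_t}$ and checks that the remainder is orthogonal to $\ket{\perpp}$, using $\braket{\perpp|Q_t|\phi}=\sqrt{p_t}\braket{w_t|\phi}$. Blocks with $p_t=0$ (the set $Z$) satisfy $\Ran(Q_t)=R_t$. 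Blocks outside $D$ decompose as $\spanop\{\ket{\perpp}\}\oplus\mathcal{H}_t^{\neq\perp}$.

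\textbf{Global decomposition.} Taking tensor products gives the analogue of \Cref{lem:orthogonal decomp Ran(Q_D)}: $\Ran(Q_D)=\bigoplus_{Z\subseteq J\subseteq D,\,S\subseteq[B]\setminus D}\mathcal{H}_{J,S}$.

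\textbf{Eigenvalue on each summand.} The analogue of \Cref{lem:technical-lambda-JS} rests on the rank-one identities $Q_tP_{t,\perp}Q_t=p_t\proj{w_t}$ and $Q_tP_{t,\neq\perp}Q_t=Q_t-p_t\proj{w_t}$. These say that $\ket{w_t}$ is an eigenvector with eigenvalues $p_t$ and $1-p_t$ respectively, while $R_t$ is killed by the first operator and fixed by the second. Expanding $Q_D\Lambda_LQ_D$ as a tensor product of these local operators, it is nonzero on $\mathcal{H}_{J,S}$ only if $J\subseteq L$ and $L\setminus D=S$. Summing over $L$ with $|L|\le k$ then gives
\[
\lambda_{J,S}=\Prb\Big[\textstyle\sum_{t\in D\setminus J}X_t\le k-|J|-|S|\Big].
\]

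\textbf{Identifying the maximum.} Since $X_t\le 1$, we have $\sum_{t\in D}X_t\le |J|+\sum_{t\in D\setminus J}X_t$, so every $\lambda_{J,S}\le\Prb[\sum_{t\in D}X_t\le k]$. Equality is attained at $J=Z$, $S=\emptyset$, because $X_t=1$ almost surely for $t\in Z$.

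\textbf{Main obstacle.} The only point needing care is that $\mathcal{H}_{Z,\emptyset}$ is nonzero, so that the maximizing eigenvalue is actually attained. This holds because each $Q_t$ is a nonzero projector: if $p_t=0$ then $R_t=\Ran(Q_t)\neq 0$, and if $p_t>0$ then $\ket{w_t}$ exists. Everything else transfers mechanically from the bidirectional case.
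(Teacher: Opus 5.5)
Your proposal is correct and follows the paper's route exactly. The paper obtains this lemma by rerunning the proof of \Cref{lem:technical lemma 1} with $\ket{\perpp}$ as the vacuum. That proof uses the same local splitting $\Ran(Q_t)=\spanop\{\ket{w_t}\}\oplus R_t$, the same decomposition of $\Ran(Q_D)$ into the summands $\mathcal{H}_{J,S}$, the same scalar eigenvalues $\lambda_{J,S}$, and the same maximizer $J=Z$, $S=\emptyset$. Your explicit check that $\mathcal{H}_{Z,\emptyset}\neq 0$, using that each $Q_t$ is nonzero, fills in a point the paper leaves implicit.
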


\begin{proof}[Proof of \Cref{lem:main-mbeh helper}]
If \(\ket{\eta}\) is ill-formed, then \(Q_\eta=0\) and the claim is correct trivially. 
Assume \(\ket{\eta}\) is well-formed. Let $m=|D_\eta|$.
For each selected block \(t\in D_\eta\), define $\ket{z_t}=S \ket{g_t}$ where $g_t \in [N]$. Then,
\[
    \mathcal{T}_B Q_\eta\mathcal{T}_B
    = 
    S^{\otimes B}Q_\eta S^{\otimes B}
    =
    \left(\bigotimes_{t\in D_\eta}\proj{z_t}
    \right)\otimes \Id_{[B]\setminus D_\eta}.
\]
For each $t\in D_\eta$, set up the projector $Q_t=\proj{z_t}=S\proj{g_t}S$ and $p_t = \norm{Q_t \ket{\perp}}^2$ where
\[
p_t 
= \bra{\perp}Q_t\ket{\perp}
=\abs{\braket{\perpp|z_t}}^2 =\abs{\braket{\perpp|S|g_t}}^2 =\abs{\braket{\tilde{u}|g_t}}^2=\abs{ \frac{1}{\sqrt{N}}}^2=\frac{1}{N}
\]
since $\ket{\tilde{u}}=\frac{1}{\sqrt{N}}\sum_{i\in [N]}\ket{i}$. Set up \(\{X_t\}_{t\in D'_\eta}\) to be mutually independent Bernoulli random variables with $\Prb[X_t=0]=p_t$ and $\Prb[X_t=1]=1-p_t$. 
Given any integer $k\in \{0,1,\ldots, B\}$, by \Cref{lem:technical lemma 2}, 
\[
    \norm{Q_\eta\mathcal{T}_B \Lambda_{\le k}}^2
    =
    \norm{\mathcal{T}_B Q_\eta\mathcal{T}_B \Lambda_{\le k}}^2
    =
    \Prb\left[\sum_{t\in D_\eta}X_t\le k\right].
\]
Since $N\geq 4$, every random variable $X_t$ with $t\in D_\eta$ has $\Prb[X_t=1]=1-\frac{1}{N}\geq 3/4$. Set $\mu =\mathbb{E}[\sum_{t\in D_\eta} X_t]\geq 3m/4$ where $m=|D_\eta|\geq dB$.

For any $k\leq dB/2$, it has $k \leq \frac{1}{2}dB \leq \frac{2}{3}\mu$. Note that $\mu \geq 3dB/4$.
By Chernoff lower-tail bounds, 
\[
\Prb\left[\sum_{t\in D_\eta} X_t \leq k\right]
\leq 
\Prb\left[\sum_{t\in D_\eta} X_t \leq \left(1-\frac{1}{3}\right)\mu \right]\leq 
\exp\left(-\frac{1}{2}\cdot\left(\frac{1}{3}\right)^2\mu \right)
    \le
    e^{-dB/24}.
\]
Therefore $\norm{Q_\eta\mathcal T\Pi_{\le k}}^2
    \le
    e^{-dB/24}$.
\end{proof}
\paragraph{Proof of \Cref{thm:rmif}}
By \Cref{lem:binomial-progress''} and \Cref{lem:main-mbeh helper}, use the same argument as \Cref{thm:bhvf} and we complete the proof.

\section{Deferred Proofs from \Cref{sec:BHVF-lb-proof}}
\subsection{Proof of \Cref{lem:S-matrix}}\label{Proof:S-Matrix}
\begin{proof}
Let
\[
    M
    :=
    \begin{pmatrix}
        0 & \frac1{\sqrt N}\mathbf 1_N^{\mathsf T} \\
        \frac1{\sqrt N}\mathbf 1_N
        &
        \Id_N-\frac1N\mathbf 1_N\mathbf 1_N^{\mathsf T}
    \end{pmatrix}.
\]
It suffices to prove that \(M\) has the same action as \(S\) on the orthogonal decomposition
\[
    \mathcal H_{\loc}
    =
    \spanop\{\ket\perp,\ket{\tilde u}\}
    \oplus
    \spanop\{\ket\perp,\ket{\tilde u}\}^{\perp}.
\]

First, the coordinate vector of \(\ket\perp\) in the ordered basis is \((1\:\:\mathbf{0})^T\)
where the lower block has length \(N\).  Hence
\[
    M\ket\perp
    =
    \begin{pmatrix}
        0\\
        \frac1{\sqrt N}\mathbf 1_N
    \end{pmatrix}
    =
    \frac1{\sqrt N}\sum_{j=0}^{N-1}\ket{e_j}
    =
    \ket{\tilde u}.
\]
Second, the coordinate vector of \(\ket{\tilde u}\) is \((0\:\:\frac1{\sqrt N}\mathbf 1_N)^T\).
Therefore
\[
    M\ket{\tilde u}
    =
    \begin{pmatrix}
        \frac1{\sqrt N}\mathbf 1_N^{\mathsf T}
        \left(\frac1{\sqrt N}\mathbf 1_N\right)\\
        \left(\Id_N-\frac1N\mathbf 1_N\mathbf 1_N^{\mathsf T}\right)
        \left(\frac1{\sqrt N}\mathbf 1_N\right)
    \end{pmatrix}.
\]
The top coordinate is
\[
    \frac1N\mathbf 1_N^{\mathsf T}\mathbf 1_N
    =
    \frac1N\cdot N
    =1.
\]
For the lower block,
we have,
\[
    \left(\Id_N-\frac1N\mathbf 1_N\mathbf 1_N^{\mathsf T}\right)
    \left(\frac1{\sqrt N}\mathbf 1_N\right)
    =
    \frac1{\sqrt N}
    \left(\mathbf 1_N-\frac1N N\mathbf 1_N\right)
    =0.
\]
Consequently
\[
    M\ket{\tilde u}=\begin{pmatrix}1\\0\end{pmatrix}=\ket\perp.
\]

Finally, let \(\ket\phi\perp\spanop\{\ket\perp,\ket{\tilde u}\}\).  Since \(\ket\phi\perp\ket\perp\), its coordinate vector has the form
\[
    \begin{pmatrix}0\\v\end{pmatrix}
    \qquad\text{for some }v\in\mathbb C^N.
\]
The additional condition \(\ket\phi\perp\ket{\tilde u}\) is
\[
    0=\braket{u|\phi}
    =
    \frac1{\sqrt N}\mathbf 1_N^{\mathsf T}v \implies \mathbf 1_N^{\mathsf T}v=0.
\]
Then
\[
    M\ket\phi
    =
    \begin{pmatrix}
        \frac1{\sqrt N}\mathbf 1_N^{\mathsf T}v\\
        \left(\Id_N-\frac1N\mathbf 1_N\mathbf 1_N^{\mathsf T}\right)v
    \end{pmatrix}
    =
    \begin{pmatrix}
        0\\
        v-\frac1N\mathbf 1_N(\mathbf 1_N^{\mathsf T}v)
    \end{pmatrix}
    =
    \begin{pmatrix}0\\v\end{pmatrix}
    =
    \ket\phi.
\]
Thus \(M\) swaps \(\ket\perp\) and \(\ket{\tilde u}\), and fixes \(\spanop\{\ket\perp,\ket{\tilde u}\}^{\perp}\). 
\end{proof}
\subsection{Proof of \Cref{thm:relation-phi-psi}}\label{append:Proof-of-psi-phi}
\begin{proof}
Recall that
\[
\mathcal{T}
=
\mathbb{I}_{QPW}
\otimes
\left((S\otimes S)^{\otimes B}\right),
\qquad
\mathcal{R}
=
\mathcal{T}^{\dagger}\mathcal{O}\mathcal{T}.
\]
Since \(\mathcal{T}\) is unitary, we have

\[
\mathcal{T}\mathcal{R}
=
\mathcal{T}\mathcal{T}^{\dagger}
\mathcal{O}\mathcal{T}
=
\mathcal{O}\mathcal{T}.
\]

Moreover, for every \(i\in\{0,1,\ldots,T\}\), the unitary \(U_i\) acts only on the
algorithm space \(H_{\mathrm{Alg}}\), whereas \(\mathcal{T}\) acts trivially on the
algorithm registers and only on the input registers. Therefore,
\[
\mathcal{T}
\left(U_i\otimes\mathbb{I}_{\mathrm{inp}}\right)
=
\left(U_i\otimes\mathbb{I}_{\mathrm{inp}}\right)
\mathcal{T}.
\]

Recall also that

\[
(S\otimes S)|\perp\perp\rangle
=
|\widetilde{u}\otimes\widetilde{u}\rangle,
\]
and consequently
\[
\mathcal{T}
\left(
|0\rangle_{\mathrm{Alg}}
\left(|\perp\perp\rangle^{\otimes B}\right)_{\mathrm{inp}}
\right)
=
|0\rangle_{\mathrm{Alg}}
\left(
|\widetilde{u}\otimes\widetilde{u}\rangle^{\otimes B}
\right)_{\mathrm{inp}}
=
|0\rangle_{\mathrm{Alg}}
|D\rangle_{\mathrm{inp}}.
\]

Now, from the definition of \(|\phi_T\rangle\), we have
\begin{align*}
\mathcal{T}|\phi_T\rangle
&=
\mathcal{T}
\left(U_T\otimes\mathbb{I}_{\mathrm{inp}}\right)
\mathcal{R}
\left(U_{T-1}\otimes\mathbb{I}_{\mathrm{inp}}\right)
\cdots
\mathcal{R}
\left(U_0\otimes\mathbb{I}_{\mathrm{inp}}\right)
|0\rangle_{\mathrm{Alg}}
\left(|\perp\perp\rangle^{\otimes B}\right)_{\mathrm{inp}}
\\
&=
\left(U_T\otimes\mathbb{I}_{\mathrm{inp}}\right)
\mathcal{T}\mathcal{R}
\left(U_{T-1}\otimes\mathbb{I}_{\mathrm{inp}}\right)
\cdots
\mathcal{R}
\left(U_0\otimes\mathbb{I}_{\mathrm{inp}}\right)
|0\rangle_{\mathrm{Alg}}
\left(|\perp\perp\rangle^{\otimes B}\right)_{\mathrm{inp}}
\\
&=
\left(U_T\otimes\mathbb{I}_{\mathrm{inp}}\right)
\mathcal{O}\mathcal{T}
\left(U_{T-1}\otimes\mathbb{I}_{\mathrm{inp}}\right)
\cdots
\mathcal{R}
\left(U_0\otimes\mathbb{I}_{\mathrm{inp}}\right)
|0\rangle_{\mathrm{Alg}}
\left(|\perp\perp\rangle^{\otimes B}\right)_{\mathrm{inp}}
\\
&=
\left(U_T\otimes\mathbb{I}_{\mathrm{inp}}\right)
\mathcal{O}
\left(U_{T-1}\otimes\mathbb{I}_{\mathrm{inp}}\right)
\mathcal{T}\mathcal{R}
\cdots
\mathcal{R}
\left(U_0\otimes\mathbb{I}_{\mathrm{inp}}\right)
|0\rangle_{\mathrm{Alg}}
\left(|\perp\perp\rangle^{\otimes B}\right)_{\mathrm{inp}}
\\
&=
\cdots
\\
&=
\left(U_T\otimes\mathbb{I}_{\mathrm{inp}}\right)
\mathcal{O}
\left(U_{T-1}\otimes\mathbb{I}_{\mathrm{inp}}\right)
\cdots
\mathcal{O}
\left(U_0\otimes\mathbb{I}_{\mathrm{inp}}\right)
\mathcal{T}
|0\rangle_{\mathrm{Alg}}
\left(|\perp\perp\rangle^{\otimes B}\right)_{\mathrm{inp}}
\\
&=
\left(U_T\otimes\mathbb{I}_{\mathrm{inp}}\right)
\mathcal{O}
\left(U_{T-1}\otimes\mathbb{I}_{\mathrm{inp}}\right)
\cdots
\mathcal{O}
\left(U_0\otimes\mathbb{I}_{\mathrm{inp}}\right)
|0\rangle_{\mathrm{Alg}}
|D\rangle_{\mathrm{inp}}
\\
&=
|\psi_T\rangle.
\end{align*}

\end{proof}

\end{document}